\def\draft{0}
\def\llncs{0}

\ifnum\llncs=1

\else

\fi

\documentclass[11pt]{article}
\usepackage[margin=0.7in]{geometry}
\usepackage{amssymb,amsmath,cite,url,enumerate,algorithm,algorithmic,latexsym,amsfonts,amsthm}
\usepackage[table]{xcolor}

\newtheorem{theorem}{Theorem}[section]

\usepackage[normalem]{ulem}

\ifnum\draft=1
\def\ShowAuthNotes{1}
\else
\def\ShowAuthNotes{0}
\fi

\ifnum\ShowAuthNotes=1
\newcommand{\authnote}[3]{\textcolor{#3}{[{\footnotesize {\bf #1:} { {#2}}}]}}
\else
\newcommand{\authnote}[3]{}
\fi

\definecolor{DarkBlue}{RGB}{0,0,150}
\definecolor{DarkRed}{RGB}{150,0,0}
\definecolor{DarkGreen}{RGB}{0,150,0}

\usepackage{amsthm,amssymb} %
\usepackage{fullpage}
\usepackage{booktabs} %
\usepackage{dsfont} %
\usepackage{float} %
\usepackage{subcaption}

\usepackage{xcolor}
\definecolor{darkgray}{rgb}{0.25, 0.25, 0.25}

\usepackage{microtype} %
\usepackage{charter} %
\usepackage{soul} %
\usepackage{suffix} %

\usepackage{xcolor} %

\usepackage{amsmath}

\usepackage{multicol}
\usepackage{stackengine}
\usepackage{scalerel}

\newtheorem{lemma}[theorem]{Lemma} %
\newtheorem{proposition}[theorem]{Proposition} %
\newtheorem{corollary}[theorem]{Corollary} %
\newtheorem{definition}[theorem]{Definition} %
\newtheorem{claim}[theorem]{Claim} %

\usepackage[utf8]{inputenc} %
\usepackage{mdframed} %
\usepackage{graphicx} %
\usepackage{hyperref} %
\usepackage{cleveref}
\usepackage[inline]{enumitem}%

\usepackage{fancybox}

\renewcommand{\hat}[1]{\widehat{#1}} %
\renewcommand{\tilde}[1]{\widetilde{#1}} %

\newcommand{\calS}{\mathcal{S}}

\newcommand{\complex}{\mathbb{C}} %
\let\epsilon=\varepsilon %
\newcommand{\eps}{\epsilon} %

\newcommand{\microspace}{\mspace{.5mu}} %
\newcommand{\ket}[1]{\ensuremath{\lvert\microspace #1
    \microspace\rangle}} %
\newcommand{\bra}[1]{\ensuremath{\langle\microspace #1
    \microspace\rvert}} %
\newcommand{\ketbra}[2]{\lvert #1 \rangle \! \langle #2 \rvert} %
\newcommand{\braket}[2]{\langle #1 |#2 \rangle} %
\newcommand{\kb}[1]{\ketbra{#1}{#1}} %
\newcommand{\set}[1]{\{#1\}}

\newcommand{\comment}[1]{}

\usepackage{xspace} %
\newcommand{\class}[1]{\textup{#1}\xspace} %

\newcommand{\NP}{\class{NP}} %
\newtheorem{fact}[theorem]{Fact} %

\definecolor{protocol-bg}{RGB}{240,240,240}

\newcommand{\setft}[1]{\mathrm{#1}} %

\DeclareMathOperator{\poly}{poly}
\DeclareMathOperator{\negl}{negl}

\hypersetup{pagebackref, colorlinks=true, urlcolor=blue,
  linkcolor=blue, citecolor=DarkRed}
\usepackage{tikz} %
\usetikzlibrary{calc,positioning}
\usetikzlibrary{decorations.pathreplacing}

\mdfdefinestyle{figstyle}{ %
  linecolor=black!7, %
  backgroundcolor=black!7, %
  innertopmargin=10pt, %
  innerleftmargin=25pt, %
  innerrightmargin=25pt, %
  innerbottommargin=10pt %
}

\definecolor{White}{rgb}{1,1,1} %
\definecolor{Black}{rgb}{0,0,0} %
\definecolor{LightGray}{rgb}{.8,.8,.8} %
\colorlet{ChannelColor}{LightGray} %
\colorlet{ChannelTextColor}{Black} %
\colorlet{ReadoutColor}{White} %

\newcommand{\cS}{\mathcal S}

\def\01{\{0,1\}}

\usepackage{soul}

\newcommand{\func}{\mathcal{F}}
\newcommand{\smallset}[1]{\{#1\}}
\renewcommand{\set}[1]{\left\{#1\right\}}

\newcommand{\ecom}{\mathsf{ECom}}
\newcommand{\icom}{\langle \comc, \comr \rangle}
\newcommand{\comc}{\mathcal{C}}
\newcommand{\comr}{\mathcal{R}}

\newcommand{\bits}{\{0,1\}}
\newcommand{\zo}{\{0,1\}}

\newcommand{\party}{\mathcal{P}}

\newcommand{\funcG}{\mathcal{G}} %
\newcommand{\adversary}[1]{\mathcal{#1}} %
\newcommand{\adv}{\adversary{A}} %
\newcommand{\simulator}{\adversary{S}} %

\newcommand{\envt}{\adversary{Z}} %
\newcommand{\secpar}{\lambda} %
\newcommand{\mac}{{M}} %
\newcommand{\qreg}[1]{\mathsf{#1}} %
\newcommand{\rspace}[1]{\mathsf{#1}} %
\newcommand{\td}{\text{TD}}
\newcommand{\veps}{\varepsilon}
\newcommand{\identity}{\mathbb{I}}
\newcommand{\cprot}[3]{ {#1}^{{#2}/{#3}}} %
\newcommand{\lin}[1]{\setft{L}\left(#1\right)}
\newcommand{\density}[1]{\setft{D}\left(#1\right)}

\newcommand{\qc}{\approx_{qc}} %
\newcommand{\dmm}{\approx_{\diamond}} %
\newcommand{\cqsa}{\text{\tt C-QSA}}

\newcommand{\sqsa}{\text{\tt S-QSA}}

\newcommand{\HH}{\ensuremath{\mathbf{H}}}
\newcommand{\bd}[1]{{#1}} %
\newcommand{\bk}[1]{|#1\rangle \langle #1|} %

\newcommand{\funct}[1]{\mathcal{F}_{#1}}

\newcommand{\socom}{\text{\tt so-com}}
\newcommand{\fsocom}{\funct{\socom}}

\newcommand{\com}{\text{\tt com}}
\newcommand{\fcom}{\funct{\com}}

\newcommand{\pot}{\text{\tt p-ot}}
\newcommand{\fpot}{\funct{\pot}}

\newcommand{\cds}{\mathsf{cds}}
\newcommand{\fcds}{\funct{\cds}}

\newcommand{\ot}{\text{\tt ot}}
\newcommand{\fot}{\funct{\ot}}

\newcommand{\zk}{\text{\tt zk}}
\newcommand{\fzk}{\funct{\zk}}

\newcommand{\owf}{OWF}
\newcommand{\pqOWF}{pqOWF}

\newcommand{\base}{\{+,\times\}}
\newcommand{\qotepr}{\Pi_{\tt QOT}^{\tt EPR}}
\newcommand{\qotbbcs}{\Pi_{\tt QOT}}
\newcommand{\qothzero}{\Pi^{\tt H_0}_{\tt QOT}}
\newcommand{\qothone}{\Pi^{\tt H_1}_{\tt QOT}}
\newcommand{\qothtwo}{\Pi^{\tt H_2}_{\tt QOT}}

\newcommand{\res}[2]{{#1}|_{#2}} %

\newcommand{\qsocomhzero}{\Pi^{\tt H_0}_{\socom}}
\newcommand{\qsocomhone}{\Pi^{\tt H_1}_{\socom}}

\newcommand{\Lang}{\mathcal{L}}
\newcommand{\Rel}{\mathcal{R}}
\newcommand{\Ver}{\mathsf{Ver}}
\newcommand{\Nat}{\mathbb{N}}
\newcommand{\inter}[1]{\langle #1 \rangle}
\newcommand{\Garb}{\mathsf{Garb}}
\newcommand{\GEval}{\mathsf{GEval}}

\newcommand{\err}{\mathsf{err}}

\newcommand{\cdsS}{\mathsf{cds.S}}
\newcommand{\cdsR}{\mathsf{cds.R}}

\newcommand{\qecomhzero}{\Pi_{\tt H_0}^{\ecom}}
\newcommand{\qecomhone}{\Pi_{\tt H_1}^{\ecom}}
\newcommand{\qecomhtwo}{\Pi_{\tt H_2}^{\ecom}}
\newcommand{\qecomhthree}{\Pi_{\tt H_3}^{\ecom}}

\newcommand{\otS}{\mathsf{ot.S}} \newcommand{\otR}{\mathsf{ot.R}}
\newcommand{\comS}{\mathsf{com.S}} \newcommand{\comR}{\mathsf{com.R}}
\newcommand{\otSimR}{\mathsf{ot.SimR}}
\newcommand{\otSimS}{\mathsf{ot.SimS}}

\newcommand{\comSimS}{\mathsf{com.SimS}}
\renewcommand{\paragraph}[1]{\smallskip\noindent{\bf #1}}
\renewcommand{\subparagraph}[1]{\smallskip\noindent\underline{\sc #1}}
\newcommand{\Note}[1]{{\it \small \textsc{Note:} #1}}
\newcommand{\highlight}[1]{\underline{\em #1}}

\newcommand{\fsnote}[1]{\authnote{Fang}{#1}{cyan}}
\newcommand{\agnote}[1]{\authnote{Alex}{#1}{DarkBlue}}
\newcommand{\vnote}[1]{\authnote{Vinod}{#1}{magenta}}

\newcommand{\rachel}[1]{\authnote{Rachel}{#1}{DarkGreen}}

\def\mute{1}

\ifnum\mute=0
\newcommand{\authnotem}[3]{\textcolor{#3}{[{\footnotesize {\bf #1:} { {#2}}}]}}
\else
\newcommand{\authnotem}[3]{}
\fi

\newcommand{\fsnotem}[1]{\authnotem{Fang}{#1}{cyan}}
\newcommand{\agnotem}[1]{\authnotem{Alex}{#1}{DarkBlue}}
\newcommand{\vnotem}[1]{\authnotem{Vinod}{#1}{magenta}}
\newcommand{\rnotem}[1]{\authnotem{Rachel}{#1}{DarkGreen}}
\newcommand{\rachelm}[1]{\authnotem{Rachel}{#1}{DarkGreen}}

\renewcommand{\subparagraph}[1]{\smallskip\noindent\underline{\textsc{#1}}}

\title{Oblivious Transfer is in MiniQCrypt}
\author{Alex B. Grilo\thanks{Sorbonne Universit\'{e}, CNRS, LIP6
\quad $^\dagger$ University of Washington  \quad $^\ddag$ Portland State University \quad $^\P$ MIT 
} \and Huijia Lin$^\dagger$ \and Fang Song$^\ddag$ \and Vinod Vaikuntanathan$^\P$}

\begin{document}

\ifnum\draft=1

\paragraph{TODO. Update: 10/07}
\newcommand{\leader}[1]{(\textbf{#1})}
\begin{itemize}
\item \leader{?} Intro
\begin{itemize}
    \item technical review: we need to explain the novelty on the ecom (Rachel?) -- try to explain the subtlety why IPS08 does not work (from email)
    \item related works (Alex): maybe discuss IPS here and why it does not solve it; 
    \fsnote{finished the following.} \sout{other quantum protocols that were not discussed; OT in other models; lower-bounds on quantum protocols; other post-quantum secure protocols (and their assumptions)}
\end{itemize}
\item Prelim: move quantum to appendix, finish garbled circuits, polish everything (Alex)

    \item \leader{all} proofread 3.3. 

    \item \leader{Vinod} section 4.
    \item \leader{Fang} clean up bib, some duplicates \agnote{Did it} 
\end{itemize}
 \fi

\maketitle

\vspace{-1.5em}
\begin{abstract}
 MiniQCrypt is a world where quantum-secure one-way functions exist, and quantum communication is possible. We construct an oblivious transfer (OT) protocol in MiniQCrypt that achieves simulation-security in the plain model against malicious quantum polynomial-time adversaries, building on the foundational work of Bennett, Brassard, Cr\'{e}peau and Skubiszewska (CRYPTO 1991). Combining the OT protocol with prior works, we obtain secure two-party and multi-party computation protocols also in MiniQCrypt. This is in contrast to the classical world, where it is widely believed that one-way functions alone do not give us OT. 
 
 In the common random string model, we  achieve a {\em constant-round} universally composable (UC) OT protocol. \vnote{Are the 2PC and MPC protocols also constant-round in CRS?} \rachel{given UC OT, we will have constant-round post-quantum MPC. Does that give constant round quantum MPC?}\vnote{that's a good start! what's the reference btw? we should claim it.}\rachel{We can just cite IPS.}
\end{abstract}
 
\thispagestyle{empty}
\newpage
\tableofcontents
\pagenumbering{roman}
\newpage
\pagenumbering{arabic}

\newpage

\vspace{-2em}
\section{Introduction}
\vspace{-0.5em}

Quantum computing and modern cryptography have enjoyed a highly
productive relationship for many decades ever since the conception of
both fields. On the one hand, (large-scale) quantum computers can be
used to break many widely used cryptosystems based on the hardness of
factoring and discrete logarithms, thanks to Shor's
algorithm~\cite{Shor94}. On the other hand, quantum information and
computation have helped us realize cryptographic tasks that are
otherwise impossible, for example quantum money~\cite{Wiesner} and generating certifiable randomness~\cite{Colbeck09,VV12,BCMVV18}.

Yet another crown jewel in quantum cryptography is the discovery, by Bennett and Brassard~\cite{BB84}, of a key exchange protocol whose security is unconditional. That is, they achieve information-theoretic security for a cryptographic task that classically necessarily has to rely on unproven computational assumptions. In a nutshell, they accomplish this using the uncloneability of quantum states, a bedrock principle of quantum mechanics. What's even more remarkable is the fact that their protocol makes minimalistic use of quantum resources, and consequently, has been implemented in practice over very large distances~\cite{Dixon_2008,Liao_2018}. This should be seen in contrast to large scale quantum {\em computation} whose possibility is still being actively debated.

Bennett and Brassard's groundbreaking work raised a {\em tantalizing} possibility for the field of cryptography:
\begin{quote}
\centering
  {\em Could {\em every} cryptographic primitive\\ be realized unconditionally using quantum information?}
\end{quote}

A natural next target is oblivious transfer (OT), a versatile cryptographic primitive which, curiously, had its origins in Wiesner's work in the 1970s on quantum information~\cite{Wiesner} before being rediscovered in cryptography by Rabin~\cite{Rabin81} in the 1980s. Oblivious transfer (more specifically, $1$-out-of-$2$ OT) is a two-party functionality where a receiver Bob wishes to obtain one out of two bits that the sender Alice owns. The OT protocol must ensure that Alice does not learn which of the two bits Bob received, and that Bob learns only one of Alice's bits and no information about the other.
Oblivious transfer lies at the foundation of secure computation, allowing us to construct protocols for the secure multiparty computation (MPC) of any polynomial-time computable function~\cite{STOC:GolMicWig87,STOC:Kilian88,C:IshPraSah08}.

Bennett, Brassard, Cr\'{e}peau and Skubiszewska\cite{C:BBCS91} constructed an OT protocol given an {\em ideal} bit commitment protocol and quantum communication. In fact, the only quantum communication in their protocol consisted of Alice sending several so-called ``BB84 states'' to Bob. Unfortunately, {\em unconditionally secure} commitment~\cite{Mayers97,LoChau97} and {\em unconditionally secure} OT~\cite{Lo97,CGS16} were soon shown to be impossible even with quantum resources.

However, given that bit commitment can be constructed from one-way functions (\owf{})~\cite{C:Naor89,HILL99}, the hope remains that OT, and therefore a large swathe of cryptography, can be based on only {\em \owf{}} together with (practically feasible) quantum communication. Drawing our inspiration from Impagliazzo's five worlds in cryptography~\cite{Imp95}, we call such a world, where post-quantum secure one-way functions (\pqOWF{}) exist and quantum computation and communication are possible,  {Mini\textbf{\textcolor{blue}{Q}}Crypt}. The  question that motivates this paper is:
\begin{quote}
  \begin{center}
  {\em Do OT and MPC exist in MiniQCrypt?}
  \end{center}
\end{quote}

Without the quantum power, this is widely believed to be
impossible. That is, given only \owf{}s, there are no {\em
  black-box} constructions of OT or even key exchange
protocols~\cite{IR89,Rud91}. The fact that \cite{BB84} overcome this
barrier and construct a key exchange protocol with quantum
communication (even without the help of \owf{}s)
reinvigorates our hope to do the same for OT.

\paragraph{Aren't We Done Already?}
At this point, the reader may wonder why we do not have an affirmative
answer to this question already, by combining the OT protocol of \cite{C:BBCS91}
based on bit commitments, with a construction of bit commitments from
\pqOWF{}~\cite{C:Naor89,HILL99}. Although this possibility was
mentioned already in \cite{C:BBCS91}, where they note that
``\ldots computational complexity based quantum cryptography is interesting
since it allows to build oblivious transfer around one-way
functions.'', attaining this goal remains elusive as we explain below.

First, proving the security of the \cite{C:BBCS91} OT protocol (regardless of the
assumptions) turns out to be a marathon.
After early proofs against {limited}
adversaries~\cite{MaySal94,STOC:Yao95}, it is relatively recently that we have a
clear picture with formal
proofs against arbitrary quantum polynomial-time adversaries~\cite{DFR+07,DFL+09,C:BouFeh10,EC:Unruh10}.
Based on these results, we
can summarize the state of the art as follows.

\begin{itemize}
\item \highlight{Using Ideal Commitments:}
  If we assume an \emph{ideal} commitment protocol, formalized as
  universally composable (UC) commitment, then the quantum OT
  protocol can be proven secure in strong
  {simulation}-based models, in particular the quantum UC model
  that admits sequential
  composition or even concurrent composition in a network setting~\cite{DFL+09,FS09,C:BouFeh10,EC:Unruh10}.
  However, UC commitments, in contrast to
  vanilla computationally-hiding and statistically-binding commitments,
  are powerful objects that do not live in Minicrypt. In particular,
  UC commitments give us key exchange protocols and are therefore
  black-box separated from Minicrypt.\footnote{The
  key exchange protocol between Alice and Bob works as follows. Bob, playing
  the simulator for a malicious
  sender in the UC commitment protocol, chooses a common reference string (CRS)
  with a trapdoor $TD$ and sends
  the CRS to Alice. Alice, playing the sender in the commitment scheme,
  chooses a random $K$ and runs the
  committer algorithm. Bob runs the straight-line
  simulator-extractor (guaranteed by UC simulation)
  using the $TD$ to get $K$, thus ensuring that Alice and Bob have a
  common key. An eavesdropper Eve should not
  learn $K$ since the above simulated execution is indistinguishable from
  an honest execution, where $K$ is hidden.}

\item \highlight{Using Vanilla Commitments:}
    If in the \cite{C:BBCS91} quantum OT protocol we use a {\em vanilla}
    statistically-binding and computationally hiding commitment scheme, which
    exists assuming a \pqOWF{}, the
    existing proofs, for example \cite{C:BouFeh10}, fall short in two respects.

    First, for a malicious receiver, the proof of \cite{C:BouFeh10}
    constructs only an {\em in}efficient simulator. Roughly speaking,
    this is because the OT receiver in \cite{C:BBCS91} acts as a
    committer, and vanilla commitments are not extractable. Hence, we
    need an inefficient simulator to extract the committed value by
    brute force. Inefficient simulation makes it hard, if not
    impossible, to use the OT protocol to build other protocols (even
    if we are willing to let the resulting protocol have inefficient
    simulation). Our work will focus on achieving the standard
    ideal/real notion of security~\cite{Golbook} with efficient
    simulators.

    Secondly, it is unclear how to construct a simulator (even
    ignoring efficiency) for a malicious sender. Roughly speaking, the
    issue is that simulation seems to require that the commitment
    scheme used in \cite{C:BBCS91} be secure against selective opening
    attacks, which vanilla commitments do not
    guarantee~\cite{EC:BelHofYil09}.

  \item \highlight{Using Extractable Commitments:} It turns out that the
    first difficulty above can be addressed if we assume a commitment
    protocol that allows \emph{efficient extraction} of the committed
    value -- called extractable commitments.
    Constructing extractable commitments is surprisingly challenging
    in the quantum world because of the hardness of
    rewinding. Moreover, to plug into the quantum OT protocol, we need
    a strong version of extractable commitments from which the
    committed values can be extracted efficiently {\em without
      destroying or even disturbing the quantum states of the
      malicious committer,\footnote{This is because when using
        extractable commitment in a bigger protocol, the proof needs
        to extract the committed value and continue the execution with
        the adversary.}} a property that is at odds with quantum
    unclonability and rules out several extraction techniques used for
    achieving arguments of knowledge such as
    in~\cite{EC:Unruh12}.\rachelm{add citation to quantum AOK.} In
    particular, we are not aware of a construction of such extractable
    commitments without resorting to strong assumptions such as
    LWE~\cite{STOC:BitShm20,AnaPla19}, which takes us out of
    minicrypt.  Another standard way to construct extractable
    commitments is using public-key encryption in the CRS model, which
    unfortunately again takes us out of minicrypt.
\end{itemize}

To summarize, we would like to stress that before our work, the claims that quantum OT protocols can be constructed from \pqOWF{}s~\cite{C:BBCS91,FangUWYZ20} were rooted in misconceptions. \agnotem{Gave another shot}\vnotem{Controversial sentence?}

\paragraph{Why MiniQCrypt.}
Minicrypt is one of five Impagliazzo's worlds~\cite{Imp95} where \owf{}s exist,
but public-key encryption schemes do not. In Cryptomania, on the other hand, public-key
encryption schemes do exist.

Minicrypt is robust {\em and} efficient. It is robust because there is
an abundance of candidates for \owf{}s that draw from a
variety of sources of hardness, and most do not fall to quantum
attacks. Two examples are (\owf{}s that can be constructed
from) the advanced encryption standard (AES) and the secure hash
standard (SHA). They are ``structureless'' and hence typically do not
have any subexponential attacks either. In contrast, cryptomania seems
fragile and, to some skeptics, even endangered due to the abundance of
subexponential and quantum attacks, except for a handful of
candidates. It is efficient because the operations are combinatorial
in nature and amenable to very fast implementations; and the key
lengths are relatively small owing to \owf{}s against which
the best known attacks are essentially brute-force key search.  We
refer the reader to a survey by Barak~\cite{baraksurvey} for a deeper
perspective.

Consequently, much research in (applied) cryptography has been devoted
to minimizing the use of public-key primitives in advanced
cryptographic protocols~\cite{Beaver96,IKNP03}.  However, complete
elimination seems hard.  In the classical world, in the absence of
quantum communication, we can construct pseudorandom generators and
digital signatures in Minicrypt, but not key exchange, public-key
encryption, oblivious transfer or secure computation protocols. With
quantum {\em communication} becoming a reality not just
academically~\cite{Dixon_2008,Hiskett_2006,Pugh_2017} but also
commercially~\cite{Liao_2018}, we have the ability to reap the
benefits of robustness and efficiency that Minicrypt affords us, {\em
  and} construct powerful primitives such as oblivious transfer and
secure computation that were so far out of reach.

\paragraph{Our results.}
In this paper, we finally show that the longstanding (but previously unproved) claim is true.
\begin{theorem}[Informal]
  Oblivious transfer protocols in the plain model that are simulation-secure
  against malicious quantum polynomial-time adversaries
  exist assuming that post-quantum one-way functions exist and that quantum communication is possible.
\end{theorem}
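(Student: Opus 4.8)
\noindent\emph{Proof idea.} The plan is to revisit the quantum OT protocol of Bennett, Brassard, Cr\'epeau and Skubiszewska and replace its ideal commitment by a commitment scheme that can be built from a \pqOWF{} together with quantum communication and that carries exactly the extra structure needed to push the existing security analyses of that protocol (in the style of Bouman--Fehr and Damg{\aa}rd--Fehr--Lunemann--Salvail) all the way to efficient, simulation-based security against malicious quantum polynomial-time adversaries. The first step is to isolate the properties required. In the BBCS protocol the OT receiver commits to its measurement bases and outcomes, which are later partially revealed in a cut-and-choose check. For a malicious receiver, the simulator must efficiently recover the committed bases and outcomes so as to read off the receiver's effective choice bit and hand it to the ideal OT functionality, and it must do so \emph{without disturbing} the adversary's residual state so that the rest of the execution can be carried through --- i.e.\ we need an \emph{extractable} commitment with an efficient, state-preserving extractor. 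For a malicious sender, the simulator must be able to answer the check consistently in two different ways in order to recover \emph{both} of the sender's strings; equivalently, the commitment must resist a selective-opening-style attack, i.e.\ it must be \emph{equivocal}. So it suffices to construct, from a \pqOWF{} plus quantum communication, a commitment that is simultaneously extractable (efficiently and benignly) and equivocal.

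The bulk of the work is constructing this commitment. Starting from Naor's statistically-binding, quantum-computationally-hiding commitment (which exists from a \pqOWF{}), extractability is added via a challenge--response layer in which the committer's responses together with a suitable challenge pin down the committed value: the honest receiver issues only a ``small'' challenge, so hiding survives, whereas the extractor issues a distinguished challenge and recovers the committed value by a Watrous-style quantum rewinding argument. What keeps the rewinding benign --- returning the adversary to negligibly close to its pre-extraction state --- is precisely the statistical binding of the underlying commitment, which makes it collapsing on the message register; this is the point at which the structure of commitments is exploited rather than the quantum model being fought. Equivocality is layered on using the familiar ``receiver first commits to its challenge, committer sends commitments to secret shares'' template, so that a simulator knowing the challenge can open to an arbitrary value, and one checks that the extractability and equivocality machinery compose. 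All the relevant properties --- hiding, binding, extractability, equivocality against QPT adversaries --- are then proved by reduction to the security of the \pqOWF{} and to a quantum rewinding lemma.

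With this commitment in hand, the final step is to re-run the BBCS security proof with the ideal-commitment hybrid replaced by hybrids that use it. For a malicious receiver, one substitutes the brute-force extractor of Bouman--Fehr by the efficient extractor above to get an \emph{efficient} simulator, and then re-uses their information-theoretic analysis (min-entropy splitting and privacy amplification over the unchecked positions) to argue indistinguishability of the real and simulated executions. For a malicious sender, one builds the simulator around equivocation to recover both sender strings, and argues indistinguishability from the hiding and binding of the commitment together with the statistics of BB84 states. This gives a plain-model OT protocol that is simulation-secure against malicious QPT adversaries, which is the claimed theorem; and because in a common-random-string model the same commitment can be made straight-line extractable and equivocal in a constant number of rounds, one additionally obtains constant-round UC OT there, hence constant-round two-party and multi-party computation via the usual OT-to-MPC compilers.

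The main obstacle is the middle step: getting an extractable commitment from \emph{only} a \pqOWF{} whose extractor is both efficient and non-disturbing, so that it can be used inside a larger protocol. Classical extractable commitments lean on rewinding, which is notoriously delicate quantumly, and the only previously known quantum-friendly (state-preserving) extractable commitments rest on far stronger assumptions such as LWE. Arranging the challenge/response layer so that Watrous-style rewinding applies --- i.e.\ so that the committer's probability of producing a valid response is essentially independent of the value being extracted --- while simultaneously exploiting statistical binding to preserve the adversary's state and while retaining equivocality, is the crux of the argument; everything else is a careful adaptation of known BBCS analyses.
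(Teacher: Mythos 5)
There is a genuine gap at the heart of your construction: the extractable commitment. You propose to extract the committed value via a challenge--response layer analyzed with ``Watrous-style quantum rewinding,'' arguing that statistical binding keeps the rewinding benign. But Watrous's rewinding lemma is tailored to zero-knowledge \emph{simulation}, where the simulator's success probability is (nearly) independent of the verifier's challenge and no information needs to be pulled out of the adversary; extraction is the opposite situation. To recover the committed value from a challenge--response structure you must obtain (and hence measure) the adversary's responses to a distinguished challenge, and it is exactly this measurement that irreversibly disturbs the committer's residual state --- this is why Unruh-style arguments of knowledge extract a witness but destroy the adversary's state, and why, prior to this work, state-preserving extractable commitments were only known from much stronger assumptions (e.g., LWE, via non-black-box simulation). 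Statistical binding does give a collapsing-type guarantee on the \emph{message}, but it does not make the extractor's interaction with the adversary undetectable, and no amount of rearranging the challenge distribution makes Watrous's lemma output the committed value. So the step you yourself identify as ``the crux'' is not actually resolved by the tools you invoke; it is precisely the obstacle the paper is built to circumvent.

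The paper's route is qualitatively different and entirely rewinding-free on the extraction side. It first instantiates the $\fsocom$ commitments in BBCS with Naor commitments plus a ZK proof of consistent opening, which yields equivocability and hence an OT (and parallel OT) protocol with efficient simulation against a malicious sender but only \emph{unbounded} simulation against a malicious receiver. It then converts this unbounded-simulation parallel OT into a \emph{verifiable} conditional-disclosure-of-secrets protocol (garbled circuits, committed labels sent through $\fpot$, a cut-and-choose over $2\secpar$ copies to defeat selective abort, and a ZK consistency proof), and finally builds the extractable commitment from CDS: the receiver commits to $0$ and proves so in ZK, the committer discloses $\vec\mu$ under the condition that the receiver's commitment opens to $1$, and the simulator-extractor commits to $1$ instead, simulates the ZK proof, and uses its decommitment as a CDS witness to receive $\vec\mu$ \emph{in a straight line}, never touching the adversary's state. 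Input-indistinguishability of the CDS (implied by unbounded simulation) plus the verifiability property is what makes hiding, binding, and agreement between the extracted and decommitted values go through. If you want to salvage your write-up, the fix is to replace the rewinding-based extractor with this (or an equivalently straight-line) mechanism; the equivocation layer and the final re-run of the BBCS analysis in your proposal are otherwise consistent with what the paper does.
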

Our main technical contribution consists of showing a construction of
an extractable commitment scheme based solely on \pqOWF{}s and using quantum communication. Our construction involves
three ingredients.  The first is vanilla post-quantum commitment
schemes which exist assuming that \pqOWF{}s
exist~\cite{C:Naor89}.  The second is post-quantum zero-knowledge
protocols which also exist assuming that \pqOWF{}s exist~\cite{Watrous09}.  The third and final ingredient is a
special multiparty computation protocol called conditional disclosure
of secrets (CDS) constructing which in turns requires OT.  This might
seem circular as this whole effort was to construct an OT protocol to
begin with!  Our key observation is that the CDS protocol is only
required to have a mild type of security, namely {\em unbounded
  simulation}, which {\em can} be achieved with a slight variant of
the \cite{C:BBCS91} protocol.  Numerous difficulties arise in our
construction, and in particular proving consistency of a protocol
execution involving quantum communication appears difficult: how do we
even write down an statement (e.g., NP or QMA) that encodes
consistency?  Overcoming these difficulties constitutes the bulk of
our technical work. We provide a more detailed discussion on the
technical contribution of our work in \Cref{sec:technical}.

We remark that understanding our protocol requires only limited knowledge of
quantum computation. Thanks to the composition theorems for (stand-alone) simulation-secure
quantum protocols~\cite{HSS15}, much of our protocol can be viewed as a
{\em classical} protocol in the
(unbounded simulation) OT-hybrid model. The only quantumness resides in the
instantiation of the OT hybrid
with %
\cite{C:BBCS91}.%

We notice that just as in \cite{BB84,C:BBCS91}, the honest execution
of our protocols does not need strong quantum computational power,
since one only needs to create, send and measure ``BB84'' states,
which can be performed with current quantum technology. \footnote{A
  BB84 state is a single-qubit state that is chosen uniformly at
  random from $\{\ket{0},\ket{1},\ket{+},\ket{-}\}$. Alternatively, it
  can be prepared by computing $H^hX^x \ket{0}$ where $X$ is the
  bit-flip gate, $H$ is the Hadamard gate, and $h,x\in\{0,1\}$ are
  random bits.}  Most notably, creating the states does not involve
creating or maintaining long-range correlations between qubits.

In turn, plugging our OT protocol into the protocols
of \cite{C:IshPraSah08,EC:Unruh10,DNS12,EC:DGJMS20} (and using the sequential
composition theorem~\cite{HSS15}) gives us secure two-party computation
and multi-party computation (with a dishonest majority) protocols, even for
quantum channels.

\begin{theorem}[Informal]
  Assuming that post-quantum one-way functions exist and quantum communication
  is possible, for every
  classical two-party and multi-party functionality $\func$, there is
  a quantum protocol in the plain model that is simulation-secure
  against malicious quantum polynomial-time adversaries.
  Under the same assumptions, there is a quantum two-party and multi-party
  protocol for any
  quantum circuit $Q$.
\end{theorem}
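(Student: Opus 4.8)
The plan is to obtain both halves of the statement as consequences of the plain-model OT protocol of the first theorem, by feeding it into known OT-to-MPC compilers and invoking the stand-alone quantum composition theorem of \cite{HSS15}. I would first handle the classical-functionality case. Recall that for every two-party or multi-party classical functionality $\func$ there are classical protocols realizing $\func$ in the $\fot$-hybrid model against malicious adversaries \cite{STOC:GolMicWig87,STOC:Kilian88,C:IshPraSah08}; moreover, via the IPS compiler \cite{C:IshPraSah08} together with the lifting theorem of \cite{EC:Unruh10}, such a protocol can be taken to be \emph{post-quantum} secure with a \emph{straight-line} (UC-style) simulator in the $\fot$-hybrid model, using as its only extra ingredients post-quantum statistically-binding commitments and post-quantum zero-knowledge, both of which exist assuming \pqOWF{}s \cite{C:Naor89,HILL99,Watrous09}. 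Concretely, the information-theoretic ``outer'' MPC among virtual servers and the watchlist mechanism are untouched by a quantum adversary, the ``inner'' semi-honest MPC and the GMW-style compilation are black-box in $\fot$ and avoid rewinding, and the commitment/ZK pieces are post-quantum; hence this yields a protocol $\Pi^{\fot}$ for $\func$ that is simulation-secure against malicious QPT adversaries in the $\fot$-hybrid model.

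Next I would compose. The first theorem supplies a plain-model protocol $\Pi_{\mathsf{OT}}$ that realizes $\fot$ with an \emph{efficient} simulator against malicious QPT adversaries, using only quantum communication and \pqOWF{}s. Replacing each ideal call to $\fot$ in $\Pi^{\fot}$ by an independent execution of $\Pi_{\mathsf{OT}}$ and applying the stand-alone quantum composition theorem \cite{HSS15}---whose hypotheses are precisely that the calling protocol is quantum stand-alone secure in the hybrid model and that the sub-protocol is quantum stand-alone secure---produces a quantum protocol in the plain model that is simulation-secure against malicious QPT adversaries for every classical $\func$. This establishes the first half of the statement.

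For the quantum-circuit case I would appeal to the known reductions of maliciously secure quantum multi-party computation to OT (equivalently, to maliciously secure classical MPC): the two-party quantum protocol of \cite{DNS12} and the multi-party quantum protocols of \cite{EC:Unruh10,EC:DGJMS20}, which evaluate any quantum circuit $Q$ given a maliciously secure classical MPC/OT resource together with quantum authentication codes and magic-state distillation. Instantiating their classical MPC (equivalently $\fot$) sub-routine by the protocol just constructed, and invoking \cite{HSS15} once more, gives a plain-model quantum protocol for any $Q$ secure against malicious QPT adversaries. The theorem then follows by assembling these cited components.

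\paragraph{Main obstacle.}
I expect the bulk of the work to be bookkeeping of security notions across the composition rather than any single reduction: one must verify that the $\fot$ abstraction assumed by the classical compiler \cite{C:IshPraSah08} and by the quantum compilers \cite{DNS12,EC:Unruh10,EC:DGJMS20} is exactly the functionality that $\Pi_{\mathsf{OT}}$ realizes---including the handling of malicious aborts and of a quantum environment---that these compilers invoke $\fot$ only in a manner compatible with \emph{sequential} composition (so that \cite{HSS15} suffices and a full quantum-UC theorem is not needed), and that the non-OT ingredients are genuinely post-quantum with straight-line simulation so that no rewinding is required inside the hybrid. Once these points are pinned down, nothing further is needed beyond quoting the cited results.
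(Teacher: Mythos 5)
Your proposal is correct and follows essentially the same route as the paper: realize every classical $\func$ via the statistically UC-secure $\fot$-hybrid protocol of \cite{C:IshPraSah08}, lift it to the quantum setting via \cite{EC:Unruh10}, compose with the plain-model OT using the stand-alone composition theorem of \cite{HSS15}, and then instantiate the classical MPC resource inside \cite{EC:DGJMS20} (and \cite{DNS12}) for the quantum-circuit case. The only small correction is that the IPS protocol in the $\fot$-hybrid model is \emph{statistically} secure and needs no additional commitment or zero-knowledge ingredients --- this is precisely what makes Unruh's lifting theorem applicable, rather than a separate post-quantum security argument for computational components.
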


Finally, we note that our OT protocol runs in $\mathsf{poly}(\secpar)$ number
of rounds, where $\secpar$ is a security parameter, and that is only because
of the zero-knowledge proof. Watrous' ZK proof system~\cite{Watrous09} involves
repeating a classical ZK proof (such as that graph coloring ZK proof~\cite{C:GolMicWig86}
or the Hamiltonicity proof~\cite{Blum86}) {\em sequentially}. A recent work
of Bitansky and Shmueli~\cite{STOC:BitShm20} for the first time constructs a
{\em constant-round} quantum ZK protocol (using only classical resources) but
they rely on a strong assumption, namely learning with errors, which
 does not
live in minicrypt. Nevertheless, in the common random string (CRS) model,
we can instantiate the zero-knowledge protocol using a WI protocol and a Pseudo-Random Generator (PRG) with additive $\lambda$ bit stretch as follows: To prove a statement $x$, the prover proves using the WI protocol that either $x$ is in the language or the common random string is in the image of the PRG. To simulate a proof, the simulator samples the CRS as a random image of the PRG, and proves using the WI protocol that it belongs to the image in a straight-line. 
Moreover, this modification allows us to achieve {\em straight-line simulators}, leading to {\em universally-composable} (UC) security~\cite{Can01}.
Therefore, this
modification would give us the following statement.  

\begin{theorem}[Informal]
  Constant-round oblivious transfer protocols in the common random string (CRS) model that are  UC-simulation-secure
  against malicious quantum poly-time adversaries
  exist assuming that post-quantum one-way functions exist and that quantum communication is possible.
\end{theorem}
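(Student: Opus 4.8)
The starting point is our plain-model OT protocol (the first informal theorem), together with its analysis: the protocol is obtained by plugging our extractable commitment scheme into the \cite{C:BBCS91} template, and the only source of \emph{both} super-constant round complexity \emph{and} non-straight-line simulation is the invocation of Watrous' post-quantum zero-knowledge proof~\cite{Watrous09} (used to certify consistent behavior inside the extractable commitment). Every other component -- Naor's statistically-binding commitment~\cite{C:Naor89}, the BB84-state exchange, and the unbounded-simulation CDS -- is already constant-round, and its simulator/extractor is already straight-line (this is precisely the ``extract without disturbing the committer's quantum state'' property we engineered). So the plan is: (i) replace the post-quantum ZK proof by a constant-round, straight-line-simulatable argument in the CRS model; (ii) observe that this makes the entire OT simulator straight-line; (iii) invoke the quantum UC framework of~\cite{EC:Unruh10,Can01} to lift stand-alone security to UC security, since straight-line simulation is exactly what UC demands.

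For step (i) I would use the Feige--Lapidot--Shamir ``OR'' trick instantiated post-quantumly. Let $\PRG\colon\zo^{n}\to\zo^{n+\secpar}$ be a post-quantum pseudorandom generator with additive $\secpar$-bit stretch (which exists from a \pqOWF{}), and let the common random string be a uniform $\sigma\in\zo^{n+\secpar}$. To prove an $\NP$ statement $x\in\Lang$ (the statement certifying honest execution that the protocol already uses), the prover runs a constant-round post-quantum witness-indistinguishable argument -- e.g.\ parallel-repeated Blum Hamiltonicity~\cite{Blum86} instantiated with Naor's commitment -- for the compound statement ``$x\in\Lang$ \emph{or} $\exists\,s\colon\sigma=\PRG(s)$''. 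Soundness against malicious \emph{quantum} provers holds because a uniform $\sigma$ lies in the image of $\PRG$ with probability at most $2^{-\secpar}$, so with overwhelming probability the second disjunct is false and a convincing proof requires a genuine witness for $x\in\Lang$. For simulation, the simulator samples $\sigma\defeq\PRG(s)$ for a random seed $s$ -- computationally indistinguishable from a uniform common random string by $\PRG$ security -- and then runs the WI argument using $s$ as witness for the second disjunct; by witness indistinguishability this transcript is indistinguishable from the honest prover's, and crucially it is produced entirely \emph{in the straight-line}, with no rewinding of the (possibly quantum) verifier.

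Plugging this ZK replacement into the extractable commitment and then into the \cite{C:BBCS91} OT, steps (ii)--(iii) are then mostly bookkeeping: the resulting protocol has $O(1)$ rounds (the WI argument, Naor commitments, and the BB84-state subprotocol are all constant-round), and for each corruption pattern the simulator (a) samples $\sigma=\PRG(s)$ with trapdoor $s$, (b) runs the straight-line WI simulator wherever the honest parties would run the ZK prover, and (c) uses the already-straight-line commitment extractor/equivocator to pull out a corrupted receiver's choice bit or a corrupted sender's string pair and to equivocate the honest party's messages -- all without rewinding the adversary and without disturbing its internal quantum state. Since the environment is just part of the adversary in this picture, this yields a $\cquc$-type simulator and UC security follows.

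The main obstacle I anticipate is not a new idea but verifying that the \emph{entire} simulation pipeline is genuinely rewinding-free once the ZK proof is swapped out: one must confirm that the extractor for our extractable commitment and the simulator for the CDS-based sub-protocol never implicitly relied on rewinding (they should not, by construction, but this must be rechecked for the modified protocol), and that the post-quantum WI argument is sound against quantum provers when its commitments are only computationally binding -- here one appeals either to a statistically-binding Naor instantiation or to known post-quantum analyses of parallel-repeated $\Sigma$-protocols. A secondary subtlety is that the CRS must be a common \emph{random} string rather than a structured reference string, which is exactly why the $\PRG$-image trick is used: the simulator never programs $\sigma$ to a prescribed value, only to a pseudorandom one for which it knows a preimage.
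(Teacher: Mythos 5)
Your proposal is correct and takes essentially the same route the paper sketches for this theorem: replace Watrous' sequentially-repeated ZK proof (the unique source of super-constant rounds and rewinding) by an FLS-style CRS-model argument in which the prover runs a constant-round post-quantum WI argument for ``$x\in\Lang$ or the CRS is in the image of a length-extending PRG,'' yielding a straight-line simulator and hence UC security via the quantum UC framework. The paper's own discussion of this theorem is exactly this sketch -- it names the PRG-plus-WI instantiation, observes that the simulator samples the CRS as a PRG image and proves the second disjunct in a straight line, and cites \cite{Can01,C:IshPraSah08} for the UC lift -- so there is nothing to contrast.
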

Plugging the above UC-simulation-secure OT into the protocol of~\cite{C:IshPraSah08} gives constant-round multi-party computation protocols for classical computation in the common random string model that are UC-simulation-secure against malicious quantum poly-time adversaries.
\rachel{Rachel: 11/29 added the above sentence for constant round pq MPC.}

\rnotem{if ZK in the CRS model from OWF works, which is simply PRG in
  CRS + WI, we can write two theorems, one in the plain model and
  anohter in the CRS model}

\agnotem{Maybe we should sell our result a bit more here}

\paragraph{Going Below MiniQCrypt?}
We notice that all of the primitives that we implement in our work {\em cannot} be implemented unconditionally, even in the quantum setting~\cite{Mayers97,LoChau97,Lo97,CGS16}. Basing their construction on \pqOWF{}s seems to be the next best thing, but it does leave with the intriguing question if they could be based on weaker assumptions.
More concretely, assume a world with quantum communication as we do in this paper. Does the existence of quantum OT protocols imply the existence of \pqOWF{}s? Or, does a weaker {\em quantum} notion of one-way functions suffice? We leave the exploration of other possible cryptographic worlds below MiniQCrypt to future work.

\paragraph{Other Related Work.}
Inspired by the quantum OT protocol~\cite{C:BBCS91}, a family of
primitives, named \emph{$k$-bit cut-and-choose}, has been shown to be
sufficient to realize OT statistically by quantum
protocols~\cite{FKSZZ13,DFLS16} which is provably impossible by
classical protocols alone~\cite{MPR10}. These offer further examples
demonstrating the power of quantum cryptographic protocols.

There has also been extensive effort on designing quantum protocols OT
and the closely related primitive of \emph{one-time-memories} under
\emph{physical} rather than \emph{computational} assumptions, such as
the bounded-storage model, noisy-storage model, and isolated-qubit
model, which restrict the quantum memory or admissible operations of
the
adversary~\cite{Salvail98,Liu14_itcs,Liu14_crypto,DFR+07,DFSS08,KWW12}. They
provide important alternatives, but the composability of these
protocols are not well understood. Meanwhile, there is strengthening
on the impossibility for quantum protocols to realize secure
computation statistically from scratch~\cite{BCS12,SSS15}.

Finally, we note that there exist classical protocols for two-party
and multi-party computation that are quantum-secure assuming strong
assumptions such as post-quantum dense encryption and superpolynomial
quantum hardness of the learning-with-errors
problem~\cite{HSS15,LN11,ABGKM20}. And prior to the result
in~\cite{EC:DGJMS20}, there is a long line of work on secure multi-party
\emph{quantum} computation (Cf.~\cite{CGS02,BCGHS06,DNS10,DNS12}).

\subsection{Technical Overview}
\label{sec:technical}

We give an overview of our construction of post-quantum OT protocol in
the plain model from post-quantum one-way functions.
In this overview, we assume some familiarity with post-quantum MPC in the
stand-alone, sequential composition, and UC models, and basic
functionalities such as $\fot$ and $\fcom$.  We will also consider
{\em parallel versions} of them, denoted as $\fpot$ and $\fsocom$. The
parallel OT functionality $\fpot$ enables the sender to send some
polynomial number of pairs of strings $\smallset{s^i_0, s^i_1}_i$ and the receiver to choose one per pair to obtain
$s^i_{c_i}$ in
parallel. The commitment with selective opening functionality
$\fsocom$ enables a sender to commit to a string $m$ while hiding it,
and a receiver to request opening of a subset of bits at locations
$T \subseteq[|m|]$ and obtain $m_T = (m_i)_{i \in T}$. We
refer the reader to Section~\ref{sec:model} for formal definitions of
these functionalities.

\paragraph{BBCS OT in the $\fsocom$-Hybrid Model.} We start by
describing the
quantum OT protocol of \cite{C:BBCS91} in the $\fsocom$ hybrid model.

\begin{mdframed}[style=figstyle,innerleftmargin=10pt,innerrightmargin=10pt]
  \small
{\bf BBCS OT protocol:} The sender $\otS$ has strings $s_0,s_1 \in \zo^\ell$, the receiver
$\otR$ has a choice bit $c \in \zo$.
\begin{enumerate}
\item {\bf Preamble.} $\otS$ sends $n \gg \ell$ BB94 qubits
  $\ket{x^A}_{ \theta^A}$ prepared using random bits
  $ x^A\in_R \bits^n$ and random basis
  $ \theta^A\in_R \base^n$.

  $\otR$ measures these qubits in randomly chosen bases
  $ \theta^B\in_R \base^n$ and commits to the measured bits
  together with the choice of the bases, that is $\smallset{ \theta_i^B,  x_i^B}_i$,
  using
  $\fsocom$.

\item {\bf Cut and Choose.}  $\otS$ requests to open a random
  subset $T$ of locations, of size say $n/2$, and
  gets $\{  \theta_i^B,  x_i^B\}_{i\in T}$ from
  $\fsocom$.

  Importantly, it aborts if for any $i$
  $\theta^B_i = \theta^A_i$ but
  $ x^B_i\neq  x^A_i$. Roughly speaking, this is because it's
  an indication that the receiver has not reported honest
  measurement outcomes.

\item {\bf Partition Index Set.}  $\otS$ reveals
  ${\theta}^A_{\bar T}$ for the unchecked locations $\bar T$. $\otR$ partitions $\bar T$ into a subset of locations where it
  measured in the same bases as the sender
  $I_c := \{i \in \bar T: {\theta}_i^A ={\theta}_i^B\}$ and
  the rest $I_{1-c} := \bar T - I_c$, and sends $(I_0, I_1)$ to the sender.

\item {\bf Secret Transferring}. $\otS$ hides the two strings
  $s_i$ for $i = 0, 1$ using randomness extracted from
  $ x^A_{I_i}$ via a universal hash function $f$ and sends
  $m_i := s_i \oplus f( x^A_{I_i})$, from which $\otR$
  recovers $s := m_c \oplus f( x^B_{I_c})$.

\end{enumerate}
\end{mdframed}

Correctness follows from that for every $i\in I_c$, $\theta_i^A = \theta_i^B$ and
  $x^A_{I_c} = x^B_{I_c}$, hence the receiver decodes
  $s_c$ correctly. 

The security of the BBCS OT protocol relies crucially on two important
properties of the $\fsocom$ commitments, namely extractability and
equivocability, which any protocol implementing the $\fsocom$ functionality must
satisfy.

\medskip
\noindent\highlight{Equivocability:} To show the receiver's privacy, we need
  to efficiently simulate the execution with a malicious sender
  $\otS^*$ without knowing the choice bit $c$ and extract both
  sender's strings $s_0, s_1$. To do so, the simulator $\otSimS$ would
  like to measure at these unchecked locations $\bar T$ using exactly
  the same bases $\theta^A_{\bar T}$ as $\otS^*$ sends in Step 3. In
  an honest execution, this is impossible as the receiver must commit
  to its bases $\theta^B$ and pass the cut-and-choose step. However,
  in simulation, this can be done by invoking the equivocability of
  $\fsocom$. In particular, $\otSimS$ can {\em simulate} the
  receiver's commitments in
  the preamble phase without committing to any value. When it is
  challenged to open locations at $T$, it measures qubits at $T$ in
  random bases, and {\em equivocates} commitments at $T$ to the
  measured outcomes and bases. Only after $\otS^*$ reveals its bases
  $\theta^A_{\bar T}$ for the unchecked locations, does $\otSimS$
  measure qubits at $\bar T$ in exactly these bases. This ensures that
  it learns both $x^A_{I_0}$ and $x^A_{I_1}$ and hence can recover
  both $s_0$ and $s_1$.
  
  \smallskip
\noindent\highlight{Extractability:} To show the sender's privacy, we need to
  efficiently extract the choice bit $c$ from a malicious receiver
  $\otR^*$ and simulate the sender's messages using only
  $s_c$. To do so, the simulator $\otSimR$ needs to extract
  efficiently from the $\fsocom$ commitments all the bases $\theta^B$,
  so that, later given $I_0, I_1$ it can figure out which subset $I_c$
  contains more locations $i$ where the bases match
  $\theta^B_i=\theta^A_i$, and use the index of that set as the
  extracted choice bit. Observe that it is important that extraction
  does not ``disturb'' the quantum state of $\otR^*$ at all, so that
  $\otSimR$ can continue simulation with $\otR^*$. This is easily
  achieved using $\fsocom$ as extraction is done in a straight-line
  fashion, but challenging to achieve in the plain model as rewinding
  a quantum adversary is tricky. Indeed, the argument of knowledge
  protocol of~\cite{EC:Unruh12} can extract a witness but disturbs the
  state of the quantum adversary due to measurement. To the best of
  our knowledge, such strong extractable commitment is only known
  assuming post-quantum FHE in the
  plain model~\cite{STOC:BitShm20,AnaPla19}\agnotem{Is the second citation the one you meant?}
  using non-black-box simulation techniques, or assuming public key
  encryption in the CRS model.
\medskip

It turns out that equivocability {\em can} be achieved
using zero-knowledge protocols, which gives a post-quantum OT protocol
with an inefficient simulator $\otSimR$ against malicious receivers
(and efficient $\otSimS$). Our main technical contribution lies in
achieving efficient extractability while assuming only post-quantum one-way
functions. In particular, we will use the OT with unbounded simulation
as a tool for this. We proceed to describing these steps in more detail.

\paragraph{Achieving Equivocability Using Zero-Knowledge.} The idea is
to let the committer commit $c = \com(\mu; \rho)$ to a string
$\mu \in \zo^n$ using any statistically binding computationally hiding commitment scheme
$\com$ whose decommitment can be verified classically, for instance, Naor's commitment scheme~\cite{C:Naor89} from post-quantum one-way functions. For now in this overview, think of $\com$ as non-interactive. 
(Jumping ahead, later we will also instantiate this commitment with a multi-round extractable commitment scheme that we construct.)

Any computationally hiding commitment can be simulated by simply committing to zero,
$\tilde c = \com(0; \rho)$. The question is how to equivocate
$\tilde c$ to any string $\mu'$ later in the decommitment phase.
With a
post-quantum ZK protocol, instead of asking the committer to reveal
its randomness $\rho$ which would statistically bind $\tilde c$ to the
zero string, we can ask the committer to send $\mu'$ and give a
zero-knowledge proof that $\tilde c$ indeed commits to $\mu'$. As such,
the simulator can cheat and successfully open to any value $\mu'$ by
simulating the zero-knowledge argument to the receiver.

\begin{mdframed}[style=figstyle,innerleftmargin=10pt,innerrightmargin=10pt]
  \small
{\bf Equivocable Commitment:} The sender $\comS$ has a string $\mu \in \zo^n$, the receiver
$\comR$ has a subset $T \subseteq [n]$.
\begin{enumerate}
\item {\bf Commit Phase.} $\comS$ commits to $\mu$ using a statistically
  binding commitment scheme $\com$ using randomness $\rho$. Let $c$ be
  the produced commitment.

  \Note{Simulation against malicious receivers commits to $0^n$. 
    Simulation against malicious senders is inefficient to extract
    $\mu$ by brute force.}

\item {\bf Decommit Phase.}  Upon $\comR$ requesting to open a subset
  $T$ of locations, $\comS$ sends $\mu'$ and gives a single zero
  knowledge argument that $c$ commits to $\mu$ such that $\mu' = \mu_T$.

  \Note{To equivocate to $\mu' \neq \mu_T$, the simulator sends $\mu'$ and
    simulates the zero-knowledge argument (of the false statement).}
\end{enumerate}
\end{mdframed}
The above commitment protocol implements $\fsocom$ with efficient
simulation against malicious receivers, but inefficient simulation
against malicious senders. Plugging it into BBCS OT protocol, we
obtain the following corollary:
\begin{corollary}[Informal]
  Assume post-quantum one-way functions. In the plain model, there is:
  \begin{itemize}
  \item a protocol that securely implements the OT functionality $\fot$, and
  \item a protocol that securely implements the parallel OT
    functionality $\fpot$,
\end{itemize}
in the sequential composition setting, and with efficient
simulation against malicious senders but inefficient simulation
against malicious receivers. \rnotem{CRS model}
\end{corollary}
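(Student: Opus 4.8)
The plan is to instantiate the $\fsocom$ hybrid used in the BBCS OT protocol with the Equivocable Commitment protocol above, and then invoke the sequential composition theorem for stand-alone simulation-secure quantum protocols~\cite{HSS15}. Accordingly the argument splits into three pieces: (i) the Equivocable Commitment protocol securely realizes $\fsocom$ with \emph{efficient} simulation against a malicious receiver $\comR^*$ and \emph{inefficient} simulation against a malicious sender $\comS^*$; (ii) the BBCS OT protocol securely realizes $\fot$ in the $\fsocom$-hybrid model, with efficient simulation against malicious senders and straight-line but \emph{inefficient} simulation against malicious receivers; (iii) both statements go through essentially verbatim when the OT sender holds polynomially many string pairs, giving $\fpot$, because the single zero-knowledge argument of the decommit phase is taken over the entire (long) committed string and therefore does not blow up the round complexity.

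For (i): against $\comR^*$, the simulator $\comSimR$ commits to $0^n$ in the commit phase; in the decommit phase, upon a request $T$ it reads $\mu_T$ from the ideal $\fsocom$, sends $\mu' := \mu_T$, and runs the ZK simulator on the (false) NP statement ``$c$ opens to $\mu'$ on $T$'' (this is an NP statement precisely because $\com$'s decommitment is classically checkable). Indistinguishability from the real execution follows from the computational hiding of $\com$ --- which holds against quantum distinguishers since $\com$ is Naor's scheme from a \pqOWF{}~\cite{C:Naor89} --- together with the zero-knowledge property of Watrous' protocol~\cite{Watrous09}. Against $\comS^*$, the simulator $\comSimS$ runs $\comS^*$, reads the commitment, and \emph{by brute force} extracts the unique value $\mu$ to which it is statistically bound (which exists except with negligible probability); it sends $\mu$ to $\fsocom$, and in the decommit phase soundness of the ZK argument forces the value $\mu'$ later claimed by $\comS^*$ to equal $\mu_T$ except with negligible probability, so the ideal output stays consistent with the transcript. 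The inefficiency here is a purely classical search over the commitment string and never touches any quantum register.

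For (ii): against a malicious sender $\otS^*$, the simulator $\otSimS$ is the equivocation-based one from the overview --- it plays the (emulated) $\fsocom$, so it need not commit to anything in the preamble; it answers the cut-and-choose challenge $T$ by measuring those qubits in fresh uniformly random bases and reporting the bases and outcomes as the ``opening''; and only after $\otS^*$ reveals $\theta^A_{\bar T}$ does it measure the remaining qubits in exactly those bases, thereby learning $x^A_{I_0}$ and $x^A_{I_1}$, recovering $(s_0,s_1)$, and forwarding them to $\fot$. This simulator is efficient, and indistinguishability is the standard BBCS argument~\cite{C:BouFeh10}: since the receiver's bases are uniform, the reported cut-and-choose data has the right distribution, $(I_0,I_1)$ is a uniform balanced partition in both worlds, and the extracted $s_c = m_c \oplus f(x^A_{I_c})$ matches what the honest receiver outputs. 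Against a malicious receiver $\otR^*$, the simulator $\otSimR$ reads the committed $\smallset{\theta^B_i, x^B_i}_i$ from the (emulated) $\fsocom$ --- this is exactly where, once we compose with (i), the inefficiency enters, via $\comSimS$ --- waits for $(I_0,I_1)$, sets $c$ to the index $b$ maximizing $\abs{\set{i\in I_b: \theta^B_i=\theta^A_i}}$, queries $\fot$ to obtain $s_c$, and sends $m_c := s_c\oplus f(x^B_{I_c})$ honestly together with a uniformly random $m_{1-c}$. The only nontrivial point is that $s_{1-c}$ is statistically hidden in the real execution; this is the entropic core of the BBCS analysis: passing cut-and-choose forces, except with negligible probability, the unchecked positions where $\otR^*$'s committed basis disagrees with $\theta^A$ to carry near-full min-entropy from $\otR^*$'s quantum view, so since $I_{1-c}$ consists predominantly of such positions, privacy amplification by the universal hash $f$ makes $f(x^A_{I_{1-c}})$, hence $m_{1-c}$, statistically close to uniform and independent of $\otR^*$'s view.

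Combining (i) and (ii) through~\cite{HSS15} yields the $\fot$ protocol, and (iii) yields $\fpot$. I expect the main obstacle to be the quantum portion of the malicious-receiver analysis in (ii): turning ``$\otR^*$ passed cut-and-choose'' into a quantitative min-entropy lower bound on $x^A_{I_{1-c}}$ conditioned on the adversary's quantum side information requires the quantum sampling framework of~\cite{C:BouFeh10}, and one must be careful that the committed bases $\theta^B$ --- used both to define $c$ and to identify the ``good'' positions --- are the same ones the extractor reads, so that the entropic bound is applied consistently with the rest of the simulation. Everything else reduces to direct invocations of hiding, zero-knowledge, soundness, and routine composition bookkeeping.
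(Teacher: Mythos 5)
Your decomposition matches the paper's exactly: item (i) is \Cref{thm:socom} in \Cref{sec:socom} (Naor's commitment plus Watrous ZK realizing $\fsocom$, with unbounded simulation only against the committer); item (ii) is \Cref{thm:ot} proved in \Cref{ap:qot} (BBCS realizes $\fot$ in the $\fsocom$-hybrid model); and item (iii) is the paper's observation that the parallelized $\fsocom$ is again $\fsocom$, so a single ZK argument in the decommit phase binds all copies together --- the paper formalizes this as a parallel-repetition theorem for straight-line simulators (\Cref{thm:parallel}), which is the same point you gesture at by noting the $\fsocom$-hybrid simulator is straight-line. The pieces are then stitched together via the modular composition theorem (\Cref{thm:sacomposition}).

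One detail in (ii) is subtly but genuinely wrong, and instructively the paper's own informal overview in \Cref{sec:technical} makes the same slip while the formal simulator in \Cref{ap:qot} does not. You extract the choice bit as the index $b$ maximizing $|\{i \in I_b : \theta^A_i = \theta^B_i\}|$, i.e.\ the part with more base \emph{agreements}. The paper's simulator instead picks $c$ to \emph{minimize disagreements} in $I_c$, via the Hamming-weight condition $wt(\theta^A|_{I_c} \oplus \theta^B|_{I_c}) \le wt(\theta^A|_{I_{1-c}} \oplus \theta^B|_{I_{1-c}})$. The two rules coincide when $|I_0| = |I_1|$, but nothing forces a malicious receiver to send a balanced partition: it can place all disagreements together with a majority of the agreements into $I_0$ and leave $I_1$ nearly (or exactly) empty. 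Your rule then selects $c = 0$, so the simulator randomizes $m_1$, whereas the real receiver recovers $s_1$ exactly, since the mask $f$ applied to the zero-padded $x^A|_{I_1}$ is a known fixed value --- the simulation diverges. What the min-disagreement rule buys you, and what yours does not, is the guarantee that $I_{1-c}$ inherits at least half the total base disagreements; that is precisely the quantity fed into the Hoeffding and min-entropy bounds in the appendix. (A second, smaller slip: the simulator's one-time pad should be computed as $f(x^A_{I_c})$, not $f(x^B_{I_c})$, since the simulator plays the sender and masks with the sender's own measurement outcomes, exactly as the honest sender does.)
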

The second bullet requires some additional steps, as parallel
composition does not automatically apply in the stand-alone  (as opposed to UC) setting
(e.g., the ZK protocol of~\cite{Watrous09} is not
simulatable in parallel due to rewinding). Instead, we first observe that the BBCS OT
UC-implements $\fot$ in the $\fsocom$ hybrid model, and hence parallel
invocation of BBCS OT UC-implements $\fpot$ in the $\fsocom$ hybrid
model. Note that parallel invocation of BBCS OT invokes $\fsocom$ in
parallel, which in fact can be merged into a single invocation to
$\fsocom$. 
 Therefore,
plugging in the above commitment protocol gives an OT protocol that
implements $\fpot$. In particular, digging deeper into the protocol, this ensures that
we are invoking a {\em single} ZK protocol for all the parallel copies of the parallel OT, binding
the executions together.

\paragraph{Achieving Extractability Using OT with Unbounded
  Simulation.} Interestingly, we show that OT with (even 2-sided)
unbounded simulation plus zero-knowledge is sufficient for
constructing extractable commitments, which when combined with
zero-knowlege again as above gives an implementation of $\fsocom$ in
the sequential composition setting in the plain model.

The initial idea is to convert the power of simulation into the power
of extraction via two-party computation, and
sketched below. \vnotem{The ref here is probably wrong.}\rnotem{let's just not cite.}
\begin{mdframed}[style=figstyle,innerleftmargin=10pt,innerrightmargin=10pt]
  \small
{\bf Initial Idea for Extractable Commitment:} The sender $\comS$ has $\mu \in \zo^n$.

\begin{enumerate}
\item {\bf Trapdoor setup:}
  The receiver $\comR$ sends a commitment $c$ of a statistically
  binding commitment scheme $\com$, and gives a zero-knowledge proof
  that $c$ commits to 0.

\item {\bf Conditional Disclosure of Secret (CDS):}
  $\comS$ and $\comR$ run a two-party computation protocol implementing
  the CDS functionality $\fcds$ for the language
  $\Lang_\com = \smallset{(c', b') : \exists r' \text{ s.t. } c' =
    \com(b'; r')}$, where the CDS functionality $\fcds$ for $\Lang_\com$ is defined as below:
  \begin{align*}
    \fcds \ :&  \text{ Sender input } (x, \mu), \text{ Receiver
    input } w\\
  & \text{ Sender has no output}, \text{Receiver outputs } x \mbox{ and } \mu'
                                     =\begin{cases}
                                       \mu & \mbox{if } \Rel_{\Lang_\com}(x, w) = 1\\
                                       \perp & \text{otherwise}
                                     \end{cases}
  \end{align*}
  $\comS$ acts as the CDS sender using
  input $(x=(c, 1), \mu)$ while $\comR$ acts as the CDS receiver using
  witness $w = 0$.
\end{enumerate}
\end{mdframed}
It may seem paradoxical that we try to implement commitments using the
much more powerful tool of two-party computation. The {\em key observation} is
that the hiding and extractability of the above commitment protocol
 only relies on the {\em input-indistinguishability property} of the CDS
protocol, which is {\em implied by unbounded simulation}.
\begin{itemize}
\item \highlight{Hiding:} A commitment to $\mu$ can be simulated by
  simply commiting to $0^n$ honestly, that is, using $(x=(c, 1), 0^n)$
  as the input to the CDS. The simulation is
  indistinguishable as the soundness of ZK argument guarantees that $c$
  must be a commitment to $0$ and hence the CDS statement $(c, 1)$ is
  false and should always produce $\mu' = \bot$. Therefore, the
  unbounded-simulation security of the CDS protocol implies that it is
  indistinguishable to switch the sender's input from $\mu$ to $0^n$.

\item \highlight{Extraction:} To efficiently extract from a malicious
  sender $\comS^*$, the idea (which however suffers from a problem
  described below) is to let the simulator-extractor $\comSimS$ set
  up a trapdoor by committing to 1 (instead of 0) and simulate the ZK
  argument; it can then use the decommitment (call it $r$) to 1 as a valid witness
  to obtain the committed value from the output of the CDS
  protocol. Here, the unbounded-simulation security of CDS again
  implies that interaction with an honest receiver who uses $w = 0$ is
  indistinguishable from that with $\comSimS$ who uses $w = r$ as
  $\comS^*$ receives no output via CDS.
\end{itemize}
The advantage of CDS with unbounded simulation is that it can be
implemented using OT with unbounded simulation: Following the work
of~\cite{STOC:Kilian88,C:IshPraSah08,EC:Unruh10}, post-quantum MPC protocols exist in the
$\fot$-hybrid model, and instantiating them with the unbounded-simulation
OT yields unbounded simulation MPC and therefore CDS.

\subparagraph{NP-Verifiability and the Lack of It.} Unfortunately, the above
attempt  has several problems: how do we show that the commitment is binding?
how to decommit? and how to guarantee that the extracted value agrees
with the value that can be decommitted to? %
We can achieve binding by having the sender additionally commit to
$\mu$ using a statistically binding commitment scheme $\com$, and send
the corresponding decommitment in the decommitment phase. However, to
guarantee that the extractor would extract the same string $\mu$ from
CDS, we need a way to verify that the same $\mu$ is indeed used by the CDS
sender.
Towards this, we formalize a verifiability property of a CDS protocol:

\smallskip\noindent{\em A CDS protocol is verifiable if}\vspace{-1mm}
\begin{itemize}
\item The honest CDS sender $\cdsS$ additionally outputs $(x,\mu)$ and a
  ``proof'' $\pi$ (on a special output tape) at the end of the execution.
\item There is an efficient {\em classical} verification algorithm
  $\Ver(\tau, x,\mu, \pi)$ that verifies the proof, w.r.t.\ the
  transcript $\tau$ of the {\em classical} messages exchanged in the CDS protocol.
\item \highlight{Binding:} No malicious sender $\cdsS^*$ after
  interacting with an honest receiver $\cdsR(w)$ can output
  $(x,\mu,\pi)$, such that the following holds simultaneously: (a)
  $\Ver(\tau, x, \mu, \pi) = 1$, (b) $\cdsR$ did not abort, and (c)
  $\cdsR$ outputs $\mu'$ inconsistent with the inputs $(x,\mu)$ and $w$,
  that is,
$      \mu' \ne
      \begin{cases}
        \mu & \text{ if } \Rel_\Lang(x, w) = 1  \\
        \bot & \text{ otherwise }
      \end{cases}
$
\end{itemize}

We observe first that classical protocols with perfect correctness have
verifiability for free: The proof $\pi$ is simply the sender's
random coins $r$, and the verification checks if the honest sender
algorithm with input $(x, \mu)$ and random coins $r$ produces the same
messages as in the transcript $\tau$. If so, perfect correctness
guarantees that the output of the receiver must be consistent with
$x, \mu$. However, verifiability cannot be taken for granted in the
$\fot$ hybrid model or in the quantum setting. In the $\fot$ hybrid
model, it is difficult to write down an NP-statement that captures
consistency as the OT input is {\em not} contained in the protocol
transcript and is unconstrained by it. In the quantum setting,
protocols use quantum communication, and consistency cannot be
expressed as an NP-statement.  %
Take the BBCS protocol as an example, the OT receiver receives from
the sender $\ell$ qubits and measures them locally; there is no way to
"verify" this step in NP.

\paragraph{Implementing Verifiable CDS.} To overcome the above
challenge, we implement a verifiable CDS protocol in the $\fpot$
hybrid model assuming only post-quantum one-way functions. We develop
this protocol in a few steps below.

Let's start by understanding why the standard two-party comptuation
protocol is not verifiable. The protocol proceeds as follows:
First, the sender $\cdsS$ locally garbles a circuit computing the
following function into $\hat G$ with labels
$\smallset{\ell^j_b}_{j \in [m], b\in\zo}$ where $m = |w|$.
\begin{align}
  G_{x,\mu}(w) =  \mu' = \begin{cases}
    \mu & \text{ if } \Rel_\Lang(x, w) = 1  \\
    \bot & \text{ otherwise }
  \end{cases}
  \label{eq:G}
\end{align}
Second, $\cdsS$ sends the pairs of labels $\smallset{\ell^j_0, \ell^j_1}_j$
via $\fpot$. The receiver $\cdsR$ on the other hand chooses $\smallset{w_j}_{j}$ to
obtain $\smallset{\tilde \ell^j_{w_j}}_j$, and  evaluates
$\hat G$ with these labels to obtain $\mu'$.  This protocol is not
NP-verifiable because consistency between the labels of the garbled
circuit and the sender's inputs to $\fpot$ cannot be expressed as a NP
statement.

To fix the problem, we devise a way for the receiver to verify the OT
sender's strings. Let $\cdsS$ additionally commit to all the labels
$\smallset{c_b^j = \com(\ell^j_b ; r^j_b)}_{j,b}$ and the message
$c = \com(\mu; r)$ and prove in ZK that $\hat G$ is consistent with the labels and message committed in 
the commitments, as well as the statement $x$. Moreover, the sender sends both
the labels and decommitments
$\smallset{(\ell^j_0, r^j_0),(\ell^j_1, r^j_1)}_j$ via $\fpot$.
The receiver after receiving
$\smallset{\tilde \ell^j_{w_j}, \tilde r^j_{w_j}}_j$ can now verify 
their correctness by verifying the decommitment w.r.t.\ $c_{w_j}^j$,
and aborts if verification fails. This gives the following new
protocol:

\begin{mdframed}[style=figstyle,innerleftmargin=10pt,innerrightmargin=10pt]
  \small
{\bf A Verifiable but Insecure CDS Protocol:} The sender $\cdsS$ has
$(x, \mu)$ and the receiver $\cdsR$ has $w$.

\begin{enumerate}
\item {\bf Sender's Local Preparation:} $\cdsS$ generate a
  garbled circuits $\hat G$ for the
  circuit computing $G_{x,\mu}$ (Equation~\eqref{eq:G}), with labels
  $\smallset{\ell^{i,j}_b}_{j, b}$. Moreover, it generates commitments $c = \com(\mu, r)$ and
$c^{j}_b = \com(\ell^{j}_b; r^{j}_b)$ for every $j,b$.

\item {\bf OT:} $\cdsS$ and $\cdsR$ invoke $\fpot$. For every $j$, the
  sender sends $(\ell^{j}_0, r^{j}_0), (\ell^{j}_1, r^{j}_1)$, and the
  receiver chooses $w_j$ and obtains
  $(\tilde \ell^{j}_{w_{j}}, \tilde r^{j}_{w_{j}})$.

\item {\bf Send Garbled Circuit and Commitments:} $\cdsS$ sends
  $\hat G$, $c$, and $\smallset{c^j_b}_{j,b}$ and proves via a ZK
  protocol that they are all generated consistently w.r.t.\ each other
  and $x$.

\item {\bf Receiver's Checks:} $\cdsR$ aborts if ZK is not accepting,
  or if for some $j$,
  $c^j_{w_j} \ne \com(\tilde \ell^{j}_{w_{j}}, \tilde
  r^{j}_{w_{j}})$. Otherwise, it evaluates $\hat G$ with the labels
  and obtain $\mu' = G_{x,\mu}(w)$.
\end{enumerate}
\end{mdframed}
We argue that this protocol is NP-verifiable. The sender's proof
is simply the decommitment $r$ of $c$, and
$\Ver(\tau, (x, \mu), r) = 1$ iff $r$ is a valid decommitment to $\mu$ of the commitment $c$ contained
in the transcript $\tau$. To show the binding property, consider an
interaction between a cheating sender $\cdsS^*$ and
$\cdsR(w)$. Suppose $\cdsR$ does not abort, it means that 1) the ZK
argument is accepting and hence $\hat G$ must be consistent with
$x, \smallset{c_b^j}, c$, and 2) the receiver obtains the labels
committed in $c_{w_j}^j$'s. Therefore, evaluating the garbled circuit
with these labels must produce $\mu'= G_{x,\mu}(w)$ for the $\mu$
committed to in $c$.

Unfortunately, the checks that the receiver performs render the protocol
insecure. A malicious sender $\comS^*$ can launch the so-called
selective abort attack to learn information of $w$. For instance, to
test if $w_1 = 0$ or not, it replaces $\ell^1_0$ with
zeros.  If $w_1 = 0$ the honest receiver would abort; otherwise, it
proceeds normally.

\subparagraph{The Final Protocol} To circumvent the selective abort
attack, we need a way to check the validity of sender's strings that
is independent of $w$. Our idea is to use a variant of cut-and-choose. Let $\cdsS$
create $2\lambda$ copies of garbled circuits and commitments to their
labels, $\smallset{\hat G^i}_{i \in [2\lambda]}$ and
$\smallset{c^{i,j}_b = \com(\ell^{i,j}_b; r^{i,j}_b)}_{i,j,b}$ and
prove via a ZK protocol that they are all correctly generated w.r.t.\
the same $c$ and $x$. Again, $\cdsS$ sends the labels and decommitment
via $\fpot$, but $\cdsR$ does not choose $w$ universally in all
copies. Instead, it secretly samples a random subset
$\Lambda \in [2\lambda]$ by including each $i$ with probability 1/2; for
copy $i \in \Lambda$, it chooses random string $s^i \gets \zo^m$ and obtains $\smallset{\tilde \ell^{i,j}_{s^i_j}, \tilde r^{i,j}_{s^i_j}}_j$,
whereas for copy $i \not \in \Lambda$, it choose $w$ and obtains $\smallset{\tilde \ell^{i,j}_{w_j}, \tilde r^{i,j}_{w_j}}_j$.  Now, in the
checking step, $\cdsR$ only verifies the validity of $\smallset{\tilde \ell^{i,j}_{s^i_j}, \tilde r^{i,j}_{s^i_j}}_{i \in \Lambda, j}$  received in copies in $\Lambda$. Since the check is
now completely independent of $w$, it circumvents the selective abort attack.

Furthermore, NP-verifiability still holds. The key point is that if
the decommitments $\cdsR$ receives in copies in $\Lambda$ are all
valid, with overwhelming probability, the number of {\em bad copies}
where the OT sender's strings are not completely valid is bounded by
$\lambda/4$. Hence, there must exist a copy $i \not\in \Lambda$ where
$\cdsR$ receives the right labels $\ell^{i,j}_{w_j}$ committed to in
$c^{i,j}_{w_j}$. $\cdsR$ can then evaluate $\hat G^i$ to obtain
$\mu'$. By the same argument as above, $\mu'$ must be consistent with
the $(x,\mu)$ and $w$, for $\mu$ committed in $c$, and
NP-verifiability follows. The final protocol is described in Figure~\ref{fig:cdsprot}.

\rachelm{Another angle is what is the right primitive for defining minicrypt for qcrypt? Is it classical OWF or something weaker?}
\vnotem{Good question. I think it depends on what we can prove :) We could have miniQcrypt (quantum OWF with quantum input/output -- is this well-defined -- or microQcrypt (quantum OWF with classical input/output) or nanoQcrypt (classical OWF)}\agnotem{I am not sure if we should not invert the definitions of mini/micro/nano... Intuitively, it seems that it is easier to have a OWF with quantum input/outputs... The point is that they might be incomparable}

\subsubsection{Organization of the Paper.}
We review the quantum stand-alone security model introduced by \cite{HSS15} in \Cref{sec:model}. In section~\Cref{sec:parallel-ot}, we construct a quantum parallel-OT protocol with one-sided, unbounded simulation. In more detail, we review in \Cref{sec:qot}  the quantum OT protocol from \cite{C:BBCS91} based on ideal commitments with selective opening security. Then in \Cref{sec:parallel-composition}, we show how to boost it to construct a {\em parallel} OT protocol from the same assumptions. And finally, we provide a classical implementation of the commitment scheme with selective opening security in \Cref{sec:socom} which gives us ideal/real security except with unbounded receiver simulation. This result will be fed into our main technical contribution in \Cref{sec:ecom} where we show how to construct extractable commitments from unbounded-simulation parallel-OT. In \Cref{sec:cdsprot}, we show how to construct (the intermediate primitive of) CDS from parallel-OT and one-way functions, and then in \Cref{sec:cds-to-extcom} we construct extractable commitments from CDS.
Finally, in \Cref{sec:mpc} we lift our results to achieve quantum protocols for multi-party (quantum) computation from one-way functions.

Throughout work, we assume familiarity with  basic notions in quantum computation and cryptography. For completeness,  we provide a brief review on the relevant concepts in~\Cref{sec:prelim}.

\section{Quantum Stand-alone Security Model}
\label{sec:model}

We adopt the quantum stand-alone security model from the work of Hallgren, Smith and Song~\cite{HSS15}, tailored to the two-party setting.

Let $\func$ denote a \emph{functionality}, which is a
classical \vnotem{``or quantum'': do we need it? aren't all our $\func$ classical?} interactive machine specifying the instructions
to realize a cryptographic task. A two-party protocol $\Pi$ consists
of a pair of quantum interactive machines $(A,B)$. We call a protocol
{\em efficient} if $A$ and $B$ are both quantum poly-time machines. If we want to
emphasize that a protocol is classical, i.e., all computation and all
messages exchanged are classical, we then use lower-case letters
(e.g., $\pi$). Finally, an adversary $\adv$ is another quantum interactive
machine that intends to attack a protocol.

When a protocol $\Pi=(A,B)$ is executed under the presence of an
adversary $\adv$, the state registers are initialized by a security
parameter $1^\secpar$ and a joint quantum state $\sigma_\secpar$.
Adversary $\adv$ gets activated first, and may either \textbf{deliver}
a message, i.e., instructing some party to read the proper segment of
the network register, or \textbf{corrupt} a party. We assume all
registers are authenticated so that $\adv$ cannot modify them, but
otherwise $\adv$ can schedule the messages to be delivered in any
arbitrary way. If $\adv$ corrupts a party, the party passes all of its
internal state to $\adv$ and follows the instructions of $\adv$. Any
other party, once receiving a message from $\adv$, gets activated and
runs its machine. At the end of one round, some message is generated
on the network register. Adversary $\adv$ is activated again and
controls message delivery. At some round, the party generates some
output and terminates.

We view $\Pi$ and $\adv$ as a whole and model the composed system as
another QIM, call it $\mac_{\Pi, \adv}$. Then executing $\Pi$ in the
presence of $\adv$ is just running $\mac_{\Pi, \adv}$ on some input
state, which may be entangled with a reference system available to a
distighuisher.

\paragraph{Protocol emulation and secure realization of a
  functionality.} A secure protocol is supposed to ``emulate'' an
idealized protocol. Consider two protocols $\Pi$ and $\Gamma$, and let
$\mac_{\Pi,\adv}$ be the composed machine of $\Pi$ and an adversary
$\adv$, and $\mac_{\Gamma,\simulator}$ be that of $\Gamma$ and another
adversary $\simulator$. Informally, $\Pi$ emulates $\Gamma$ if the two
machines $\mac_{\Pi,\adv}$ and $\mac_{\Gamma,\simulator}$ are
indistinguishable.

Given the general form of protocol emulation, it is of particular
interest to emulate the so-called \emph{ideal-world} protocol
$\tilde \Pi_\func$ for a functionality $\func$ which captures the
security properties we desire. In this protocol, two (dummy) parties
$\tilde A$ and $\tilde B$ have access to an additional ``trusted''
party that implements $\func$. We abuse notation and call the trusted
party $\func$ too. Basically $\tilde A$ and $\tilde B$ invoke $\func$
with their inputs, and then $\func$ runs on the inputs and sends the
respective outputs back to $\tilde A$ and $\tilde B$. An execution of
$\tilde \Pi$ with an adversary $\simulator$ is as before, except that
$\func$ cannot be corrupted. We denote the composed machine of $\func$
and $\tilde \Pi_\func$ as $\mac_{\func, \simulator}$.

\begin{definition}[Computationally Quantum-Stand-Alone Emulation]\label{def:cqsa}
  Let
  $\Pi$ and $\Gamma$ be two poly-time protocols. We say $\Pi$
  \emph{computationally quantum-stand-alone} (\cqsa) emulates
  $\Gamma$, if for any poly-time QIM $\adv$ there exists a poly-time
  QIM $\calS$ such that $\mac_{\Pi,\adv} \qc
    \mac_{\Gamma,\simulator} $.
\label{def:qcsa}
\end{definition}

\begin{definition}[\cqsa~Realization of a Functionality] Let $\func$
  be a poly-time two-party functionality and $\Pi$ be a poly-time
  two-party protocol. We say $\Pi$ {\em computationally
    quantum-stand-alone} realizes $\func$, if $\Pi$ \cqsa~emulates
  $\tilde \Pi_\func$. Namely, for any poly-time $\adv$, there is a
  poly-time $\simulator$ such that $\mac_{\Pi, \adv} \qc \mac_{\func,
    \simulator}$.
\label{def:qsae}
\end{definition}

\vspace{-3ex}

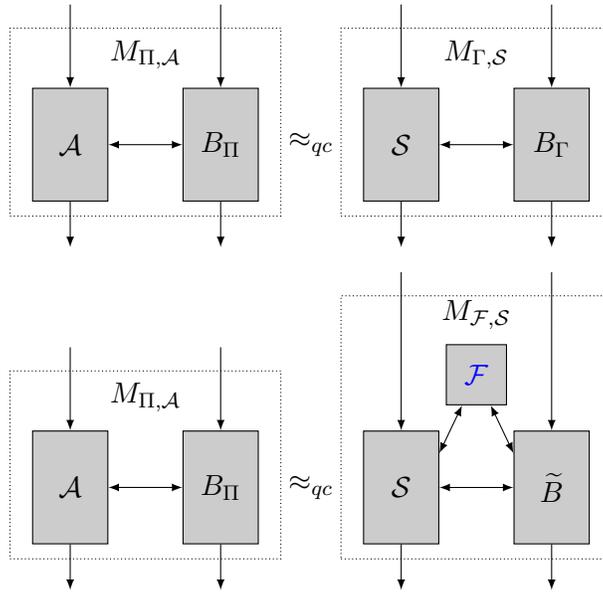
\begin{figure}[ht!]
    \begin{center}
    \begin{tikzpicture}[scale=0.38,
        turn/.style={draw, minimum height=15mm, minimum width=10mm,
          fill = ChannelColor, text=ChannelTextColor},
        smallturn/.style={draw, minimum height=8mm, minimum width=8mm,
          fill = ChannelColor, text=ChannelTextColor},
        >=latex]
      \node (A) at (-8,0) [turn] {$\adv$};
      \node (B) [right of = A, xshift = 1cm, turn] {$B_\Pi$};
      \node (ina) [above of=A, yshift=1cm, minimum width = 0mm] {};
      \node (inb) [above of=B, yshift=1cm, minimum width = 0mm] {};
      \node (outa) [below of=A, yshift=-0.5cm, minimum width = 0mm] {};
      \node (outb) [below of=B, yshift=-0.5cm, minimum width = 0mm]
      {};
      \node (M) [above of = A, xshift = 1cm, yshift=2mm, minimum width = 0mm] {$M_{\Pi, \adv}$};
      \node (ML) [left of = A, xshift = 2cm, yshift = 0.3cm, draw, densely dotted, minimum height= 25 mm,
        minimum width = 3.6cm] {};
      \draw[<->] (A.east) -- (B.west) node {};
      \draw[->] (ina.south) -- (A.north) node {};
      \draw[->] (inb.south) -- (B.north) node {};
      \draw[->] (A.south) -- (outa.north) node {};
      \draw[->] (B.south) -- (outb.north) node {};

      \node (QC) [right of = B, xshift=.2cm] {$\qc$};

      \node (A2) [right of =QC, xshift = .2cm, turn] {$\simulator$};
      \node (B2) [right of = A2, xshift = 1cm, turn] {$B_\Gamma$};
      \node (ina2) [above of=A2, yshift=1cm, minimum width = 0mm] {};
      \node (inb2) [above of=B2, yshift=1cm, minimum width = 0mm] {};
      \node (outa2) [below of=A2, yshift=-0.5cm, minimum width = 0mm] {};
      \node (outb2) [below of=B2, yshift=-0.5cm, minimum width = 0mm]
      {};
      \node (M2) [above of = A2, xshift = 1cm, yshift=2mm, minimum width = 0mm] {$M_{\Gamma, \simulator}$};
      \node (ML2) [left of = A2, xshift = 2cm, yshift = 0.3cm, draw, densely dotted, minimum height= 25 mm,
        minimum width = 3.6cm] {};
      \draw[<->] (A2.east) -- (B2.west) node {};
      \draw[->] (ina2.south) -- (A2.north) node {};
      \draw[->] (inb2.south) -- (B2.north) node {};
      \draw[->] (A2.south) -- (outa2.north) node {};
      \draw[->] (B2.south) -- (outb2.north) node {};

      \node (A3) at (-8,-12) [turn] {$\adv$};
      \node
      (B3) [right of = A3, xshift = 1cm, turn] {$B_\Pi$}; \node (ina3)
      [above of=A3, yshift=1cm, minimum width = 0mm] {}; \node (inb3)
      [above of=B3, yshift=1cm, minimum width = 0mm] {}; \node (outa3)
      [below of=A3, yshift=-0.5cm, minimum width = 0mm] {}; \node
      (outb3) [below of=B3, yshift=-0.5cm, minimum width = 0mm] {};
      \node (M3) [above of = A3, xshift = 1cm, yshift=2mm, minimum
      width = 0mm] {$M_{\Pi, \adv}$}; \node (ML3) [left of = A3,
      xshift = 2cm, yshift = 0.3cm, draw, densely dotted, minimum
      height= 25 mm, minimum width = 3.6cm] {}; \draw[<->] (A3.east)
      -- (B3.west) node {}; \draw[->] (ina3.south) -- (A3.north) node
      {}; \draw[->] (inb3.south) -- (B3.north) node {}; \draw[->]
      (A3.south) -- (outa3.north) node {}; \draw[->] (B3.south) --
      (outb3.north) node {};

      \node (QC2) [right of = B3, xshift=.2cm] {$\qc$};

      \node (A4) [right of =QC2, xshift = .2cm, turn] {$\simulator$};
      \node (B4) [right of = A4, xshift = 1cm, turn] {$\tilde B$};
      \node (ina4) [above of=A4, yshift=2cm, minimum width = 0mm] {};
      \node (inb4) [above of=B4, yshift=2cm, minimum width = 0mm] {};
      \node (outa4) [below of=A4, yshift=-0.5cm, minimum width = 0mm] {};
      \node (outb4) [below of=B4, yshift=-0.5cm, minimum width = 0mm]
      {};
      \node (F2) [above of = A4, xshift = 1cm, yshift=5mm, smallturn] {{\color{blue}$\func$}};
      \node (M4) [above of = F2, yshift=-2mm, minimum width = 0mm] {$M_{\func, \simulator}$};
      \node (ML4) [left of = A4, xshift = 2cm, yshift = 0.8cm, draw, densely dotted, minimum height= 35 mm,
        minimum width = 3.6cm] {};
      \draw[<->] (A4.east) -- (B4.west) node {};
      \draw[->] (ina4.south) -- (A4.north) node {};
      \draw[->] (inb4.south) -- (B4.north) node {};
      \draw[->] (A4.south) -- (outa4.north) node {};
      \draw[->] (B4.south) -- (outb4.north) node {};
      \draw[<->] ([yshift=12mm]A4.east) -- ([xshift=-5mm]F2.south) node
      {};
      \draw[<->] ([yshift=12mm]B4.west) -- ([xshift=5mm]F2.south) node {};

    \end{tikzpicture}
  \end{center}
\vspace{-2ex}
\caption{Quantum stand-alone emulation between protocols (above) and
  realization of a functionality (below).
  } \label{fig:emuprot}
\end{figure}

 \begin{definition}[Statistically Quantum-Stand-Alone Emulation] Let
   $\Pi$ and $\Gamma$ be two poly-time protocols. We say $\Pi$
   \emph{statistically quantum-stand-alone} (\sqsa) emulates $\Gamma$,
   if for any QIM $\adv$ there exists an QIM $\simulator$ that runs in
   poly-time of that of $\adv$, such that
   $\mac_{\Pi, \adv} \dmm \mac_{\Gamma, \simulator}$.
 \label{def:qssa}
 \end{definition}

  We assume \emph{static} corruption only in this work, where the
  identities of corrupted parties are determined before protocol
  starts. The definitions above consider computationally bounded
  (poly-time) adversaries, including simulators. Occasionally, we will
  work with \emph{inefficient} simulators, which we formulate as
  unbounded simulation of corrupted party $P$.

  \begin{definition}[Unbounded Simulation of Corrupted $P$] Let $\Pi$
    and $\Gamma$ be two poly-time protocols. For any poly-time QIM
    $\adv$ corrupting party $P$, we say that $\Pi$ \cqsa-emulates
    $\Gamma$ against corrupted $P$ with unbounded simulation, if there
    exists a QIM $\calS$ possibly unbounded such that
    $ \mac_{\Pi,\adv} \qc \mac_{\Gamma,\simulator}$.
\label{def:uqcsa}
\end{definition}
\subsection{Modular Composition Theorem}
\label{sec:sacomposition}
It's shown that protocols satisfying the definitions of stand-alone
emulation admit a modular composition~\cite{HSS15}. Specifically, let
$\Pi$ be a protocol that uses another protocol $\Gamma$ as a
subroutine, and let $\Gamma'$ be a protocol that QSA emulates
$\Gamma$. We define the \emph{composed} protocol, denoted
$\cprot{\Pi}{\Gamma}{\Gamma'}$, to be the protocol in which each
invocation of $\Gamma$ is replaced by an invocation of $\Gamma'$. We
allow multiple calls to a subroutine and also using multiple
subroutines in a protocol $\Pi$. \textbf{However, quite importantly, we require that at any
point, only one subroutine call be in progress}. This is more
restrictive than the ``network'' setting, where many instances and
subroutines may be executed \emph{concurrently}.

In a \emph{hybrid} model, parties can make calls to an ideal-world
protocol $\tilde\Pi_{\funcG}$ of some functionality
$\funcG$\footnote{In contrast, we call it the \emph{plain model} if no
  such trusted set-ups are available.}. We call such a protocol a
\emph{$\funcG$-hybrid} protocol, and denote it $\Pi^{\funcG}$. The
execution of a hybrid-protocol in the presence of an adversary $\adv$
proceeds in the usual way. Assume that we have a protocol $\Gamma$
that realizes $\funcG$ and we have designed a $\funcG$-hybrid protocol
$\Pi^{\funcG}$ realizing another functionality $\func$. Then the
composition theorem allows us to treat sub-protocols as equivalent to
their ideal versions.

If the secure emulation involves unbounded simulation against a party,
the proof in~\cite{HSS15} can be extended to show that the composed
protocol also emulates  with unbounded simulation against
the corresponding corrupted party.

\begin{theorem}[Modular Composition] All of the following holds.
  \begin{itemize}
  \item Let $\Pi$, $\Gamma$ and $\Gamma'$ be two-party protocols such
    that $\Gamma'$ \cqsa-emulates $\Gamma$, then
    $\cprot{\Pi}{\Gamma}{\Gamma'}$ \cqsa~emulates
    $\Pi$. If $\Gamma'$ \cqsa~emulates $\Gamma$ against corrupted $P$
    with unbounded simulation, then   $\cprot{\Pi}{\Gamma}{\Gamma'}$
    \cqsa~emulates against corrupted $P$
    with unbounded simulation.

  \item Let $\func$ and $\funcG$ be poly-time functionalities. Let
    $\Pi^{\funcG}$ be a $\funcG$-hybrid protocol that \cqsa~realizes
    $\func$, and $\Gamma$ be a protocol that \cqsa~realizes $\funcG$,
    then $\cprot{\Pi}{\funcG}{\Gamma}$ \cqsa~realizes $\func$.  If
    $\Gamma$ \cqsa~realizes $\funcG$ against corrupted $P$ with
    unbounded simulation then $\cprot{\Pi}{\funcG}{\Gamma}$
    \cqsa~realizes $\func$ against corrupted $P$ with unbounded
    simulation.
  \end{itemize}
\label{thm:sacomposition}
\end{theorem}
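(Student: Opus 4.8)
The plan is to prove the first bullet — composition of protocol emulation — by a standard sequence-of-hybrids argument, and then obtain the second bullet as a corollary, using that realizing a functionality $\funcG$ is by definition \cqsa-emulating its ideal-world protocol $\tilde\Pi_\funcG$, together with transitivity of \cqsa-emulation. I will use two structural facts: since $\Pi$ is poly-time it makes at most $m = \poly(\secpar)$ subroutine calls in any execution, and since the composition model keeps only one subroutine instance active at a time, these instances are totally ordered $1,\dots,m$ in the order they start.

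For the first bullet, fix a poly-time $\adv$ attacking $\cprot{\Pi}{\Gamma}{\Gamma'}$. For $0 \le i \le m$ let $H_i$ be the protocol that behaves like $\Pi$ except that instances $i+1,\dots,m$ are run with $\Gamma'$ instead of $\Gamma$, so $H_0 = \cprot{\Pi}{\Gamma}{\Gamma'}$ and $H_m = \Pi$. For each instance $j$, let $\adv^{(j)}$ be the residual adversary that $\adv$ induces while that one instance runs (well-defined since nothing else is active), and let $\simulator_j$ be the simulator obtained by applying the hypothesis ``$\Gamma'$ \cqsa-emulates $\Gamma$'' to $\adv^{(j)}$. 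Let $\adv_i$ run $\adv$ but have instances $1,\dots,i$ handled internally by $\simulator_1,\dots,\simulator_i$; then $\adv_0=\adv$, and we set $\simulator := \adv_m$. It suffices to show $\mac_{H_{i-1},\adv_{i-1}} \qc \mac_{H_i,\adv_i}$ for each $i$, since then the triangle inequality over the $m$ hybrids gives $\mac_{\cprot{\Pi}{\Gamma}{\Gamma'},\adv} \qc \mac_{\Pi,\simulator}$ with total advantage $m\cdot\negl(\secpar) = \negl(\secpar)$.

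Consecutive machines $\mac_{H_{i-1},\adv_{i-1}}$ and $\mac_{H_i,\adv_i}$ are identical except that instance $i$ is, respectively, $\Gamma'$ handled by $\adv$ and $\Gamma$ handled by $\simulator_i$. Given a poly-time $D^*$ distinguishing them, on input state $\sigma$ with reference $R^*$, I reduce to $\mac_{\Gamma',\adv^{(i)}} \qc \mac_{\Gamma,\simulator_i}$: the input state $\rho$ for the single instance is produced by running $\Pi$, $\adv$, and $\simulator_1,\dots,\simulator_{i-1}$ on $(\sigma,R^*)$ up to the point where instance $i$ starts, stashing all remaining global state into a reference register; the distinguisher $D$ then takes instance $i$'s output together with that reference, resumes $\Pi$, $\adv$ and instances $i+1,\dots,m$ (all run as $\Gamma'$), and runs $D^*$ on the final output and $R^*$. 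The only possibly-inefficient objects — the simulators $\simulator_1,\dots,\simulator_{i-1}$ — appear solely in the \emph{preparation of} $\rho$, which the emulation guarantee permits to be an arbitrary input state, whereas $D$ itself runs only the poly-time machines $\Pi$, $\adv$, $\Gamma'$, $D^*$; hence $D$ is poly-time and its advantage equals $D^*$'s, which is therefore negligible. For the unbounded-simulation variant against corrupted $P$, the same reduction works with each $\simulator_j$ possibly unbounded (corruption of $P$ in $\Pi$ induces corruption of the corresponding party in every subroutine instance, so each $\adv^{(j)}$ corrupts $P$), and $\simulator = \adv_m$ is then possibly unbounded and corrupts $P$; this is exactly where it matters that the instances are swapped \emph{in execution order}, so that every already-installed unbounded simulator lands in the input-state preparation and never inside the efficient distinguisher $D$.

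For the second bullet, note that $\cprot{\Pi}{\funcG}{\Gamma}$ is $\Pi^\funcG$ with its subroutine $\tilde\Pi_\funcG$ replaced by $\Gamma$; since ``$\Gamma$ \cqsa-realizes $\funcG$'' means $\Gamma$ \cqsa-emulates $\tilde\Pi_\funcG$, the first bullet gives that $\cprot{\Pi}{\funcG}{\Gamma}$ \cqsa-emulates $\Pi^\funcG$, and since $\Pi^\funcG$ \cqsa-emulates $\tilde\Pi_\func$, transitivity of \cqsa-emulation (the triangle inequality for $\qc$ together with composing the two simulators, as in the previous paragraph) shows $\cprot{\Pi}{\funcG}{\Gamma}$ \cqsa-realizes $\func$; the unbounded-$P$ case is identical, carrying the unbounded simulator through the composition. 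I expect the one genuinely delicate point to be precisely the book-keeping in the third paragraph that keeps the reduction's distinguisher poly-time even though the simulators installed in earlier hybrids need not be — everything else (defining the residual adversaries, passing corruption of $P$ through to each subroutine instance, the polynomial bound on the number of instances) is routine and matches the argument of \cite{HSS15}.
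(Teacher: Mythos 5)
The paper does not itself prove this theorem; it cites [HSS15] for the basic statement and merely asserts that the unbounded-simulation variant follows ``by extension.'' Your argument for the first bullet is correct and fills in exactly what the paper leaves implicit: the hybrid over the sequentially ordered subroutine instances, together with the observation that replacing instances \emph{in execution order} ensures every previously installed (possibly unbounded) simulator lands inside the reduction's input-state preparation and never inside the poly-time distinguisher. That ordering observation is the crux of the unbounded variant, and you were right to single it out.

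Your derivation of the second bullet from the first ``by transitivity,'' however, has a gap in the unbounded case that the sentence ``the unbounded-$P$ case is identical, carrying the unbounded simulator through the composition'' conceals. The chain is $\cprot{\Pi}{\funcG}{\Gamma} \to \Pi^\funcG \to \tilde\Pi_\func$, and your first bullet makes the first arrow hold via an \emph{unbounded} simulator $\simulator_1$ (because the $\Gamma$-side per-instance simulators are unbounded). To traverse the second arrow you must now apply ``$\Pi^\funcG$ \cqsa-realizes $\func$'' with $\simulator_1$ as the adversary --- but the realization definition quantifies only over \emph{poly-time} adversaries, and $\simulator_1$ is not one. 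The device that saved the first bullet does not rescue this step: $\simulator_1$'s unbounded subroutine simulators run interleaved throughout the execution of $\Pi^\funcG$, and the $\Pi^\funcG\to\tilde\Pi_\func$ replacement is a single global step with no ``prefix'' in which to stash them as part of the input state. Notice the asymmetry: chaining ``efficient, then unbounded'' composes cleanly, since the efficient step hands a poly-time adversary to the unbounded step; but the hypotheses of the second bullet force ``unbounded, then efficient,'' which is exactly the order that does not follow from the bare definitions. To close the gap one needs to invoke an additional structural property --- e.g.\ that the simulator witnessing ``$\Pi^\funcG$ realizes $\func$'' is straight-line and black-box in the adversary, and hence applies to unbounded adversaries as well (which is true of the simulators the paper actually builds for BBCS OT), or that $\Pi^\funcG$ realizes $\func$ \emph{statistically}. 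Your proof should state explicitly which such property it is using; as written, the ``identical'' claim is not justified.
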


\begin{figure}[ht!]
    \begin{center}
    \begin{tikzpicture}[scale=0.38,
        turn/.style={draw, minimum height=15mm, minimum width=10mm,
          fill = ChannelColor, text=ChannelTextColor},
        smallturn/.style={draw, minimum height=8mm, minimum width=8mm,
          fill = ChannelColor, text=ChannelTextColor},
        >=latex]
      \node (A) at (-8,0) [smallturn] {$\Gamma$};
      \node (inl) [above of=A, xshift=-.5cm, minimum width = 0mm] {};
      \node (inr) [above of=A, xshift=.5cm,  minimum width = 0mm] {};
      \node (outl) [below of=A, xshift=-.5cm, minimum width = 0mm]
      {};
      \node (outr) [below of=A, xshift=.5cm,  minimum
      width = 0mm] {};
      \node (omitt) [above of = A, minimum width
            = 0mm] {$\vdots$};
      \node (omitb) [below of = A, minimum width
      = 0mm] {$\vdots$};
      \node (P) [above of = A, yshift = 4mm, minimum width
      = 0mm] {$\Pi$};
      \node (ML) [left of = A, xshift=1cm, yshift = 2mm, , draw, densely dotted, minimum height= 32 mm,
        minimum width =1.8 cm] {};
      \draw[->] (inl.south) -- ([xshift=-5mm]A.north) node {};
      \draw[->] (inr.south) -- ([xshift=5mm]A.north) node {};
      \draw[->] ([xshift=-5mm]A.south) -- (outl.north) node {};
      \draw[->] ([xshift=5mm]A.south) -- (outr.north) node {};

      \node (App) [right of = A, xshift = .5cm] {$\approx$};

      \node (A1) [right of = App, xshift = .5cm, smallturn] {$\Gamma'$};
      \node (inl1) [above of=A1, xshift=-.5cm, minimum width = 0mm] {};
      \node (inr1) [above of=A1, xshift=.5cm,  minimum width = 0mm] {};
      \node (outl1) [below of=A1, xshift=-.5cm, minimum width = 0mm]
      {};
      \node (outr1) [below of=A1, xshift=.5cm,  minimum
      width = 0mm] {};
      \node (omitt) [above of = A1, minimum width
            = 0mm] {$\vdots$};
      \node (omitb) [below of = A1, minimum width
      = 0mm] {$\vdots$};
      \node (P1) [above of = A1, yshift = 4mm, minimum width
      = 0mm] {$\Pi^{\Gamma/{\Gamma'}}$};
      \node (ML1) [left of = A1, xshift=1cm, yshift = 2mm, , draw, densely dotted, minimum height= 32 mm,
        minimum width =1.8 cm] {};
      \draw[->] (inl1.south) -- ([xshift=-5mm]A1.north) node {};
      \draw[->] (inr1.south) -- ([xshift=5mm]A1.north) node {};
      \draw[->] ([xshift=-5mm]A1.south) -- (outl1.north) node {};
      \draw[->] ([xshift=5mm]A1.south) -- (outr1.north) node {};

      \node (L) [above of = A, yshift=1.5cm, smallturn] {$\Gamma$};
      \node (E) [right of = L, xshift=.5cm] {$\approx$};
      \node (R) [right of = E, xshift=.5cm, smallturn] {$\Gamma'$};
      \node (I) [below of = E] {$\Downarrow$};

      \node (A2) [right of = A1, xshift = 2.5cm, smallturn] {$\funcG$};
      \node (inl2) [above of=A2, xshift=-.5cm, minimum width = 0mm] {};
      \node (inr2) [above of=A2, xshift=.5cm,  minimum width = 0mm] {};
      \node (outl2) [below of=A2, xshift=-.5cm, minimum width = 0mm]
      {};
      \node (outr2) [below of=A2, xshift=.5cm,  minimum
      width = 0mm] {};
      \node (omitt2) [above of = A2, minimum width
            = 0mm] {$\vdots$};
      \node (omitb2) [below of = A2, minimum width
      = 0mm] {$\vdots$};
      \node (P2) [above of = A2, yshift = 4mm, minimum width
      = 0mm] {$\Pi$};
      \node (ML2) [left of = A2, xshift=1cm, yshift = 2mm, , draw, densely dotted, minimum height= 32 mm,
        minimum width =1.8 cm] {};
      \draw[->] (inl2.south) -- ([xshift=-5mm]A2.north) node {};
      \draw[->] (inr2.south) -- ([xshift=5mm]A2.north) node {};
      \draw[->] ([xshift=-5mm]A2.south) -- (outl2.north) node {};
      \draw[->] ([xshift=5mm]A2.south) -- (outr2.north) node {};

      \node (App2) [right of = A2, xshift = .5cm] {$\approx$};

      \node (A3) [right of = App2, xshift = .5cm, smallturn] {$\Gamma$};
      \node (inl3) [above of=A3, xshift=-.5cm, minimum width = 0mm] {};
      \node (inr3) [above of=A3, xshift=.5cm,  minimum width = 0mm] {};
      \node (outl3) [below of=A3, xshift=-.5cm, minimum width = 0mm]
      {};
      \node (outr3) [below of=A3, xshift=.5cm,  minimum
      width = 0mm] {};
      \node (omitt3) [above of = A3, minimum width
            = 0mm] {$\vdots$};
      \node (omitb3) [below of = A3, minimum width
      = 0mm] {$\vdots$};
      \node (P3) [above of = A3, yshift = 4mm, minimum width
      = 0mm] {$\Pi^{\funcG/{\Gamma}}$};
      \node (ML3) [left of = A3, xshift=1cm, yshift = 2mm, , draw, densely dotted, minimum height= 32 mm,
        minimum width =1.8 cm] {};
      \draw[->] (inl3.south) -- ([xshift=-5mm]A3.north) node {};
      \draw[->] (inr3.south) -- ([xshift=5mm]A3.north) node {};
      \draw[->] ([xshift=-5mm]A3.south) -- (outl3.north) node {};
      \draw[->] ([xshift=5mm]A3.south) -- (outr3.north) node {};

      \node (L2) [above of = A2, yshift=1.5cm, smallturn] {$\funcG$};
      \node (E2) [right of = L2, xshift=.5cm] {$\approx$};
      \node (R2) [right of = E2, xshift=.5cm, smallturn] {$\Gamma$};
      \node (I2) [below of = E2] {$\Downarrow$};

    \end{tikzpicture}
  \end{center}
\label{fig:modcomp}
\caption{Illustration of modular composition theorem: the general case (left) and in hybrid model (right).}
\end{figure}
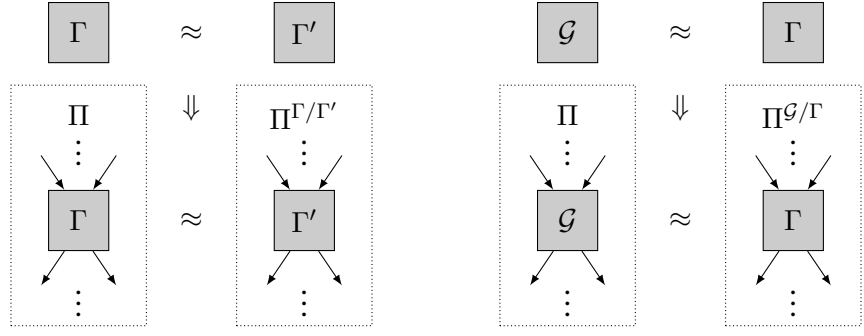

\section{Parallel OT  with Unbounded Simulation from \owf{}}
\label{sec:BBCS}
\label{sec:parallel-ot}

The goal of this section is to prove the following theorem.

\begin{theorem}
  Assuming the existence of \pqOWF{}, there exists a protocol $\Pi_{\pot}$ that $\cqsa$-emulates $\fpot$ with unbounded simulation against a malicious receiver.
\end{theorem}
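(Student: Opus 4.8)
The plan is to prove the theorem in three modular steps, following the roadmap sketched in the technical overview. \emph{Step~1:} recall the BBCS oblivious transfer protocol~\cite{C:BBCS91} and establish that it $\cqsa$-realizes---in fact UC-realizes, i.e.\ with straight-line simulators---the single-shot functionality $\fot$ in the $\fsocom$-hybrid model, both against a malicious sender $\otS^*$ and against a malicious receiver $\otR^*$. \emph{Step~2:} exploit the straight-line property to run polynomially many copies of BBCS in parallel; since straight-line simulators compose in parallel, this UC-realizes $\fpot$ in the $\fsocom$-hybrid model, and moreover the parallel invocations of $\fsocom$ can be folded into a single call to one longer instance of $\fsocom$. \emph{Step~3:} give a classical protocol from \pqOWF{} that $\cqsa$-realizes $\fsocom$ with \emph{efficient} simulation against a malicious receiver of the commitment but only \emph{unbounded} simulation against a malicious committer. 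Composing the parallel-BBCS protocol with this $\fsocom$-protocol via the modular composition theorem (\Cref{thm:sacomposition}, in the form that tracks unbounded simulation against a fixed corrupted party) yields a plain-model protocol $\Pi_{\pot}$ that $\cqsa$-realizes $\fpot$; because the committer role in $\fsocom$ is played by the OT \emph{receiver}, the only place unbounded simulation enters is against the malicious receiver, which is exactly the statement.

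For Step~3, the construction is Naor's statistically-binding commitment $\com$~\cite{C:Naor89} (which exists from \pqOWF{}) wrapped in a post-quantum zero-knowledge argument~\cite{Watrous09}: to commit to $\mu \in \zo^n$, the committer sends $c = \com(\mu;\rho)$, and to open at a requested subset $T$ it sends $\mu_T$ together with a ZK argument that $c$ is a commitment to some string whose restriction to $T$ equals $\mu_T$. Equivocability---hence efficient simulation against a malicious commitment-receiver---follows because the simulator commits to $0^n$, later announces any desired $\mu'$, and simulates the ZK argument of the resulting (false) statement; indistinguishability is immediate from the computational hiding of $\com$ and the zero-knowledge property. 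Statistical binding of $\com$ gives the binding property of $\fsocom$, and the (inefficient) simulator that recovers $\mu$ from the unique decommitment of $c$ by brute force gives unbounded simulation against a malicious committer.

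Plugging this protocol into BBCS, the two $\fsocom$ properties do exactly what the two OT simulators need. Equivocability is used by $\otSimS$ against a malicious OT sender: it simulates the receiver's preamble commitments without committing to anything, answers the cut-and-choose challenge $T$ by measuring those positions in random bases and equivocating the commitments to the outcomes, and---crucially---only \emph{after} $\otS^*$ reveals the bases $\theta^A_{\bar T}$ does it measure the unchecked positions in exactly those bases, so it learns both $x^A_{I_0}$ and $x^A_{I_1}$ and recovers $(s_0,s_1)$ to submit to $\fot$. Straight-line extractability is used by $\otSimR$ against a malicious OT receiver: it reads off all of $\otR^*$'s committed bases $\theta^B$, and upon receiving $(I_0,I_1)$ outputs the choice bit $c$ maximizing $|\{i \in I_c : \theta^A_i = \theta^B_i\}|$, obtains $s_c$ from $\fot$, and sets $m_{1-c}$ to mask a uniformly random string instead of $s_{1-c}$. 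The receiver-privacy direction (against $\otS^*$) is then comparatively routine given post-quantum ZK and the hiding of $\com$.

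The main obstacle is the correctness of the sender-privacy simulation: one must show that, conditioned on $\otR^*$ surviving the cut-and-choose, the residual joint state leaves enough min-entropy on $x^A_{I_{1-c}}$ that $m_{1-c} = s_{1-c} \oplus f(x^A_{I_{1-c}})$ is statistically close to uniform and independent of $s_{1-c}$, so that replacing $s_{1-c}$ by random is undetectable. This is precisely the quantum sampling / entropic-uncertainty analysis of cut-and-choose developed in~\cite{DFR+07,DFL+09,C:BouFeh10}: the test ``$\theta^B_i = \theta^A_i \Rightarrow x^B_i = x^A_i$'' on the random subset $T$ forces, except with negligible probability, that on the complement $\bar T$ the positions where $\otR^*$'s reported basis disagrees with $\theta^A$ carry essentially full entropy, and since the extractor deliberately chose $1-c$ to be the side with \emph{fewer} agreements, $I_{1-c}$ contains enough such positions; privacy amplification by the universal hash $f$ then completes the argument. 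The remaining subtlety is the one flagged for Step~2: stand-alone (non-UC) ZK is not parallel-simulatable, so parallelism cannot be obtained by running plain-model copies side by side---instead one argues parallelism at the level of the $\fsocom$-hybrid protocol, where the simulators are straight-line, and only afterwards instantiates the \emph{single} merged $\fsocom$ with one copy of the Step~3 protocol, so that a single ZK argument binds all parallel executions together. With these pieces in place, \Cref{thm:sacomposition} delivers the claimed statement.
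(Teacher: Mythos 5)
Your proposal follows the paper's proof essentially verbatim: the same three-step decomposition (BBCS realizes $\fot$ in the $\fsocom$-hybrid model with straight-line simulators; parallel repetition of straight-line-simulatable protocols yields $\fpot$ with the parallel $\fsocom$ calls merged into one; Naor's commitment plus post-quantum ZK realizes $\fsocom$ with unbounded simulation only against the committer), composed via the modular composition theorem, and the same simulators and Bouman--Fehr sampling/min-entropy analysis for the two corruption cases. No gaps; this is the paper's argument.
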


We prove this theorem as follows. In \Cref{sec:bbcsot}, we review the protocol of \cite{C:BBCS91} that implies stand-alone-secure OT in $\fsocom$-hybrid model. Then, in \Cref{sec:parallel-composition}, we show how to build $\fpot$ from $\fsocom$. Finally in \Cref{sec:socom}, we construct $\fsocom$ with unbounded simulation against malicious sender. 

\subsection{Stand-Alone-secure OT in $\fsocom$-hybrid model}

\label{sec:bbcsot}\label{sec:qot}
In this section we present the quantum OT protocol assuming a selective opening-secure commitment scheme, that is, in the $\fsocom$~hybrid model. We would like to stress that the results in this section are not novel; they consist of a straightforward adaptation of previous results~\cite{C:BBCS91,DFL+09,EC:Unruh10} to our setting/language, and our goal in this presentation is to to provide a self-contained proof of its security.
We describe the protocol  $\qotbbcs$  in \Cref{sec:technical} and we have the following.

\begin{theorem}
\label{thm:ot}
$\qotbbcs$ $\cqsa$-realizes $\fot$ in the $\fsocom$ hybrid model.
\end{theorem}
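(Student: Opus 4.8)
Proof proposal for Theorem \ref{thm:ot} ($\qotbbcs$ $\cqsa$-realizes $\fot$ in the $\fsocom$ hybrid model).

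The plan is to exhibit, for each of the three corruption patterns (no corruption, corrupted sender $\otS^*$, corrupted receiver $\otR^*$), an efficient simulator $\simulator$ in the $\tilde\Pi_{\fot}$-ideal world such that $\mac_{\qotbbcs,\adv} \qc \mac_{\fot,\simulator}$, following the structure sketched in \Cref{sec:technical}. The honest-honest case is immediate from the correctness argument already given: for every $i \in I_c$ we have $\theta^A_i = \theta^B_i$, hence $x^A_{I_c} = x^B_{I_c}$ and $\otR$ recovers $s_c = m_c \oplus f(x^A_{I_c})$; the simulator just runs both honest parties on dummy inputs routed through $\fot$. I would present the two malicious cases as the substance of the proof, invoking the $\fsocom$ hybrid (straight-line extraction and equivocation) so that no rewinding of the quantum adversary is ever needed.

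For a \textbf{malicious receiver} $\otR^*$: the simulator $\otSimR$ plays the honest sender's role but, acting as $\fsocom$, records all committed pairs $\{\theta^B_i, x^B_i\}_i$ straight-line (extractability of $\fsocom$). It runs the cut-and-choose honestly, sends $\theta^A_{\bar T}$, receives the claimed partition $(I_0,I_1)$, and sets the extracted choice bit $c$ to be the index $b$ such that $I_b$ contains the larger number of positions $i$ with $\theta^B_i = \theta^A_i$. It queries $\fot$ with choice bit $c$, obtains $s_c$, and finishes the protocol by sending $m_c := s_c \oplus f(x^A_{I_c})$ honestly and $m_{1-c}$ uniformly random. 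Indistinguishability rests on the quantum sampling / min-entropy argument of \cite{DFR+07,C:BouFeh10}: conditioned on passing the cut-and-choose check, with overwhelming probability $x^A_{I_{1-c}}$ has high min-entropy from $\otR^*$'s view, so by the Leftover Hash Lemma $f(x^A_{I_{1-c}})$ is statistically close to uniform and $m_{1-c}$ hides $s_{1-c}$ exactly as in the real execution; here I would cite the precise commitment-version statement from \cite{C:BouFeh10} rather than reprove it.

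For a \textbf{malicious sender} $\otS^*$: the simulator $\otSimS$ uses equivocability of $\fsocom$ — it simulates the receiver's commitments in the preamble without committing to any values, answers the cut-and-choose opening request at $T$ by measuring the qubits at $T$ in fresh random bases and equivocating the commitments to those outcomes, and only after $\otS^*$ reveals $\theta^A_{\bar T}$ does it measure the qubits at $\bar T$ in exactly the bases $\theta^A_{\bar T}$. It then picks an arbitrary partition $(I_0, I_1)$ of $\bar T$ (say declaring all of $\bar T$ as $I_0$ and $\emptyset$ as $I_1$, or any fixed valid split) — since the measurement bases now match the sender's on \emph{all} of $\bar T$, it learns both $x^A_{I_0}$ and $x^A_{I_1}$, receives $(f, m_0, m_1)$ from $\otS^*$, computes $s_b := m_b \oplus f(x^A_{I_b})$ for $b = 0,1$, and sends $(s_0, s_1)$ to $\fot$. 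The view of $\otS^*$ is identical to a real execution because in the $\fsocom$ hybrid the commitments carry no information until opened and the equivocated openings are perfectly distributed as honest measurement outcomes (the qubits measured at $T$ in random bases are distributed exactly as an honest receiver's would be), so this is in fact a statistically perfect simulation in the hybrid model.

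The \textbf{main obstacle} is the malicious-receiver case: one must carefully state and apply the min-entropy bound guaranteeing that, after the cut-and-choose test passes, the sender's encoding of $s_{1-c}$ remains hidden even against an adversary that may have measured in a basis-dependent, adaptive way and committed dishonestly — this is exactly the technical heart of \cite{DFL+09,C:BouFeh10}, and I would set up the parameters ($n \gg \ell$, check-set size $|T| = \alpha n$ with $\alpha \in (1/4,1/2)$) so their quantitative bound applies off the shelf. The remaining wrinkle is bookkeeping the $\cqsa$ composition: since $\fsocom$ is an ideal functionality in the hybrid model, all extraction and equivocation are straight-line, so the simulators are manifestly efficient and no quantum rewinding subtlety arises — the composition theorem (\Cref{thm:sacomposition}) then lets this be plugged into later sections.
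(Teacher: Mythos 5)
Your overall decomposition is the same as the paper's: separate straight-line simulators for a corrupted sender and a corrupted receiver in the $\fsocom$ hybrid, with the receiver case resting on the Bouman--Fehr sampling/min-entropy bound plus privacy amplification, and the sender case resting on equivocation of $\fsocom$ followed by measuring $\bar T$ in the bases $\hat\theta^A$ the adversary announces. The receiver-side simulator you describe (extract $\theta^B$ from $\fsocom$, pick $c$ as the index of the set $I_b$ with more matching bases, query $\fot$ on $c$, send $m_{1-c}$ uniform) is exactly the paper's; the paper discharges the min-entropy claim by first passing to an EPR-pair purification $\qotepr$ of the protocol so that the cut-and-choose becomes literally the sampling game of \cite{C:BouFeh10}, which is the step you would need to make explicit when you ``cite the precise statement off the shelf.''

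There is, however, one step in your malicious-sender simulator that fails as written: you have the simulator send an \emph{arbitrary fixed} partition, e.g.\ $(I_0,I_1)=(\bar T,\emptyset)$. The message $(I_0,I_1)$ is part of the corrupted sender's view. In the real execution the honest receiver's bases $\theta^B_{\bar T}$ are uniform and independent of $\theta^A_{\bar T}$, so $I_c=\{i\in\bar T:\theta^A_i=\theta^B_i\}$ is a uniformly random subset of $\bar T$ and hence $(I_0,I_1)$ is a uniformly random partition, regardless of $c$. A fixed split such as $I_1=\emptyset$ occurs in the real world only with probability $2^{-|\bar T|}$, so an environment that simply inspects this message distinguishes your simulation from the real protocol with probability essentially $1$. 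The fix is exactly what the paper's simulator (\Cref{ap:qot}) does: sample $(I_0,I_1)$ as a fresh uniformly random partition of $\bar T$. Your observation that extraction of both $s_0$ and $s_1$ works for \emph{any} partition (since the simulator's measurement bases agree with $\theta^A$ on all of $\bar T$) is correct, but the partition must still be distributed as in the real world because the sender sees it. With that repair, your argument matches the paper's hybrid sequence (delay the receiver's measurements, then switch the unchecked measurement bases to $\theta^A$, which is undetectable since those outcomes are never revealed to the sender).
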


Due to space restrictions and since it closely follows the proof in previous results, we defer the proof of \Cref{thm:ot} to \Cref{ap:qot}.

\subsection{Parallel Repetition for Protocols with Straight-Line Simulation}
\label{sec:parallel-composition}

\def\funcg{\mathcal{G}}

We show now that if $\pi$ implements $\func$ in the $\funcg$-hybrid model with an (efficient/unbounded) {\em straight-line} simulator, then a parallel repetition of $\pi$, denoted $\pi^{||}$ implements $\func^{||}$ in the $\funcg^{||}$-hybrid model with an (efficient/unbounded) simulator. As a corollary, we get that a parallel repetition of the $\func_{ot}$ protocol from the previous section is a secure implementation of parallel OT in the $\func_{\socom}$ hybrid model.

\begin{theorem}[Parallel Repetition]\label{thm:parallel}
  Let $\func$ and $\funcg$ be two-party functionalities and let $\pi$
  be a secure implementation of $\func$ in the $\funcg$-hybrid model
  with a {\em straight-line} simulator.  Then, $\pi^{||}$ is a secure
  implementation of $\func^{||}$ in the $\funcg^{||}$-hybrid model
  with straight-line simulation as well.
\end{theorem}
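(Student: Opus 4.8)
The plan is to leverage the fact that straight-line simulators do not rewind, which is exactly the property that breaks down under parallel composition for rewinding-based simulators. Fix a corrupted party $P \in \{A, B\}$ and an adversary $\adv$ attacking $\pi^{||}$ in the $\funcg^{||}$-hybrid model. Since $\funcg^{||}$ is just $n$ independent copies of $\funcg$ accessed in parallel, and since $\pi^{||}$ runs $n$ copies of $\pi$ where all messages of round $k$ across the copies are bundled into a single network message, I would build the simulator $\calS$ for $\pi^{||}$ by running $n$ internal copies of the given straight-line simulator $\calS_\pi$ for $\pi$, one per coordinate, all sharing the single adversary $\adv$. Concretely, $\calS$ maintains the $n$ simulator instances $\calS_\pi^{(1)}, \dots, \calS_\pi^{(n)}$; when $\adv$ sends its bundled round-$k$ message $(m_k^{(1)}, \dots, m_k^{(n)})$, $\calS$ feeds $m_k^{(i)}$ to $\calS_\pi^{(i)}$, collects the responses, and bundles them back to $\adv$. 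Whenever $\calS_\pi^{(i)}$ would query $\funcg$ (to extract an input or program an output), $\calS$ routes that query to the $i$-th copy of $\funcg^{||}$; whenever $\calS_\pi^{(i)}$ would query the ideal functionality $\func$ in its ideal-world experiment, $\calS$ makes the corresponding query to the $i$-th copy of $\func^{||}$. Because each $\calS_\pi^{(i)}$ is straight-line, it issues at most one forward pass of messages and a bounded number of ideal-functionality queries, so the composite $\calS$ runs in (efficient or unbounded, matching $\calS_\pi$) time and never needs to roll back $\adv$'s state.

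The correctness argument proceeds by a hybrid over the $n$ coordinates. Define hybrid $H_j$ to be the experiment in which coordinates $1, \dots, j$ are simulated (using $\calS_\pi$, talking to the ideal $\func$ on those coordinates) while coordinates $j+1, \dots, n$ are run honestly (using the real protocol $\pi$ and the real $\funcg$-hybrid on those coordinates), with all of them sharing the single adversary $\adv$. Then $H_0 = \mac_{\pi^{||}, \adv}$ and $H_n = \mac_{\func^{||}, \calS}$. To show $H_{j-1} \qc H_j$, I would fix the behaviour on all coordinates $\ne j$ into an augmented adversary $\adv_j$ that attacks a single instance of $\pi$ in the $\funcg$-hybrid model: $\adv_j$ internally simulates $\adv$, the honest parties and $\funcg$ for coordinates $j+1, \dots, n$, and the simulator $\calS_\pi$ together with the ideal $\func$ for coordinates $1, \dots, j-1$, exposing to the outside only the coordinate-$j$ interface. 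Applying the hypothesis that $\pi$ \cqsa-realizes $\func$ (against corrupted $P$, with the stated flavour of simulation) to $\adv_j$ yields indistinguishability of $H_{j-1}$ and $H_j$. Summing over $j = 1, \dots, n$ and using $n = \poly(\secpar)$ gives $\mac_{\pi^{||}, \adv} \qc \mac_{\func^{||}, \calS}$, which is the claim. The unbounded-simulation variant is identical, noting that composing polynomially many unbounded simulators with an efficient adversary still yields an unbounded simulator, and that the hybrid argument only needs closeness of consecutive hybrids, which holds by the unbounded-simulation hypothesis.

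The main obstacle — and the only place where "straight-line" is essential — is that in forming the single-instance adversary $\adv_j$ we must be able to run the coordinate-$j$ interaction \emph{in lockstep} with all the other coordinates without ever pausing or rewinding the shared state of $\adv$. A rewinding simulator $\calS_\pi$ would, when invoked on coordinate $j$, try to rewind $\adv$, but that would also rewind coordinates $\ne j$ whose transcripts must stay fixed — this is precisely why the theorem would fail for, e.g., the ZK-based protocols, and why the statement is phrased for straight-line simulators. With straight-line simulation this difficulty disappears: message scheduling is a single forward pass, so $\adv_j$ is well-defined and efficient (relative to $\calS_\pi$), and the reduction goes through. A secondary, more technical point to be careful about is that $\funcg^{||}$ delivers the $n$ sub-functionality outputs in a bundled fashion; one should check that the straight-line simulator's extract-then-respond pattern for a single $\funcg$ is compatible with the parallel scheduling, i.e., that no $\calS_\pi^{(i)}$ needs the output of $\funcg$ in coordinate $i$ before $\adv$ has supplied the coordinate-$i$ inputs in \emph{all} copies — this holds because straight-line simulators extract from the messages already sent, so the bundling introduces no circular dependency. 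The corollary for parallel OT then follows immediately by instantiating $\func = \fot$, $\funcg = \fsocom$, and $\pi = \qotbbcs$, whose simulator (from \Cref{thm:ot}, in the $\fsocom$-hybrid model) is straight-line, together with the observation that the $n$ parallel invocations of $\fsocom$ merge into one.
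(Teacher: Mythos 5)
Your proposal is correct and matches the paper's proof essentially step for step: the same composite simulator built from per-coordinate copies of the straight-line simulator $\calS_\pi$, the same coordinate-by-coordinate hybrid argument, and the same reduction that absorbs the other coordinates (real or already-simulated) into an augmented single-instance adversary, with straight-line simulation invoked exactly where you invoke it, namely to ensure that augmented adversary is well-defined without rewinding. The only difference is presentational—your discussion of message bundling and scheduling is somewhat more detailed than the paper's—so no changes are needed.
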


The proof of \Cref{thm:parallel} is deferred to \Cref{ap:parallel}.  We
immediately get the following by observing that parallel-$\fsocom$ is
exactly $\fsocom$.

\begin{corollary}
  The parallel repetition of any protocol that  $\cqsa$-realizes $\func_{ot}$ in the $\func_{\socom}$-hybrid model with a straight-line simulator achieves $\fpot$ in the $\func_{\socom}$-hybrid model.
\end{corollary}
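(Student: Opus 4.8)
The plan is to apply the Parallel Repetition theorem (\Cref{thm:parallel}) with $\func = \func_{ot}$ and $\funcg = \func_{\socom}$, and then to observe that the parallel version of $\func_{\socom}$ is the very same functionality as $\func_{\socom}$ itself, so that the two hybrid models coincide.

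First, let $\pi$ be any protocol that $\cqsa$-realizes $\func_{ot}$ in the $\func_{\socom}$-hybrid model with a straight-line simulator; by \Cref{thm:ot} the protocol $\qotbbcs$ is one such $\pi$. Instantiating \Cref{thm:parallel} with $\func = \func_{ot}$ and $\funcg = \func_{\socom}$ immediately yields that $\pi^{||}$ securely implements $\func_{ot}^{||}$ in the $(\func_{\socom})^{||}$-hybrid model, and with a straight-line simulator. By definition of the parallel functionality, $\func_{ot}^{||}$ is exactly the parallel-OT functionality $\fpot$: running $k$ independent copies of $\func_{ot}$, in which copy $i$ takes sender input $(s^i_0, s^i_1)$ and receiver input $c_i$, is precisely the task of delivering $\{ s^i_{c_i} \}_{i}$ computed by $\fpot$.

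Second, I would argue the identification $(\func_{\socom})^{||} \equiv \func_{\socom}$, which makes the $(\func_{\socom})^{||}$-hybrid model coincide with the $\func_{\socom}$-hybrid model. This is the point flagged in the statement and already noted in \Cref{sec:technical}: $\func_{\socom}$ lets the sender commit to a string of arbitrary polynomial length and lets the receiver later request the substring at an arbitrary subset of locations. Hence $k$ parallel, independent copies of $\func_{\socom}$, where copy $i$ commits to $m^{(i)}$ and opens locations $T^{(i)}$, are perfectly simulated in both directions by a single copy of $\func_{\socom}$ that commits to the concatenation $m^{(1)} \Vert \cdots \Vert m^{(k)}$ and, upon a request for the $T^{(i)}$'s, opens the union of the block-shifted subsets. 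This equivalence is purely syntactic -- a relabeling of indices -- and in particular preserves straight-line simulation. Plugging it into the conclusion of the previous paragraph gives that $\pi^{||}$ $\cqsa$-realizes $\fpot$ in the $\func_{\socom}$-hybrid model, which is the claim.

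The only step that is not pure bookkeeping is the appeal to \Cref{thm:parallel}: stand-alone (as opposed to UC) simulators need not compose under parallel repetition because of rewinding, which is exactly why the hypothesis demands a straight-line simulator. Once that is in hand -- and it is, by inspection of the $\qotbbcs$ simulator -- the remainder of the corollary is just the two routine identifications above.
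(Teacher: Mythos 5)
Your proposal is correct and follows exactly the paper's route: instantiate \Cref{thm:parallel} with $\func=\fot$ and $\funcG=\fsocom$, identify $\fot^{||}$ with $\fpot$, and observe that parallel-$\fsocom$ is the same functionality as $\fsocom$ (a single string commitment with selective opening absorbs any polynomial number of parallel copies by concatenation). The paper states this observation in one line; you spell out the bookkeeping, but the argument is the same.
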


\subsection{Implementing $\func_{\socom}$ with unbounded Simulation}
\label{sec:comm-unbounded}
\label{sec:socom}

In this section we provide an implementation of $\fsocom$ from Naor's commitment scheme and ZK protocols. Our protocol $\Pi_{\socom}$ is described in \Cref{fig:prot-ext-com} and we prove the following result.

\begin{theorem}\label{thm:socom}
Assuming the existence of \pqOWF{}, $\Pi_{\socom}$ $\cqsa$-realizes $\fsocom$. with unbounded simulation against malicious committer.
\end{theorem}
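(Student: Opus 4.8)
The plan is to analyze $\Pi_{\socom}$ in the $\fzk$-hybrid model and then instantiate the zero-knowledge sub-protocol via the modular composition theorem (\Cref{thm:sacomposition}). Recall that in $\Pi_{\socom}$ (\Cref{fig:prot-ext-com}) the committer $\comS$ first sends a Naor commitment $c = \com(\mu;\rho)$ to its string $\mu$, and in the decommit phase, upon learning the requested subset $T$, sends $\mu' := \mu_T$ and proves in zero knowledge the statement $(c,T,\mu')$ for the relation $\Rel = \{\,((c,T,v),(\mu,\rho)) : c = \com(\mu;\rho)\ \wedge\ v = \mu_T\,\}$; the receiver outputs $\mu'$ iff the proof is accepted, and aborts otherwise. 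In the honest case this statement is true, so correctness is immediate. Replacing the proof by an ideal call to $\fzk$ (for $\Rel$), I will show that $\Pi_{\socom}^{\fzk}$ $\cqsa$-realizes $\fsocom$ with an efficient simulator against a corrupted receiver and an unbounded simulator against a corrupted committer; the theorem then follows by instantiating $\fzk$ with Watrous' protocol~\cite{Watrous09} (which exists from \pqOWF{}) and applying \Cref{thm:sacomposition}. We use here that Watrous' protocol $\cqsa$-realizes $\fzk$ with an \emph{efficient} simulator against a corrupted verifier --- its zero-knowledge property against quantum verifiers --- but only with an \emph{unbounded} simulator against a corrupted prover, since the GMW proof system instantiated with Naor's statistically binding commitment is statistically sound and the simulator can therefore extract a witness by exhaustive search whenever the honest verifier accepts. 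By \Cref{thm:sacomposition}, the latter propagates only to unbounded simulation against a corrupted committer in $\Pi_{\socom}$, which is exactly what the theorem claims.

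Against a corrupted committer $\comS^*$, the (unbounded) simulator $\comSimS$ runs the honest receiver's code against $\comS^*$ in the commit phase --- in particular it samples Naor's first message honestly --- and records the commitment $c$. By the statistical binding property of $\com$, except with negligible probability there is at most one $\mu$ with $c = \com(\mu;\rho)$ for some $\rho$; $\comSimS$ finds it by brute force and sends it to $\fsocom$ as the committed string (if no such $\mu$ exists it sends $0^n$ and marks $c$ as unopenable). In the decommit phase, $\comSimS$ learns the requested $T$ from $\fsocom$, continues running the honest receiver against $\comS^*$, and inspects the instance--witness pair $\comS^*$ submits to the (simulated) $\fzk$ box. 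If $\fzk$'s relation check passes, the simulated receiver accepts the value $\mu'$ submitted by $\comS^*$; by statistical binding the value committed in $c$ is the extracted $\mu$, hence $\mu' = \mu_T$, which coincides with the output $\fsocom$ delivers to the dummy receiver. If $\fzk$ rejects (or $c$ was unopenable), the simulated receiver aborts and $\comSimS$ instructs $\fsocom$ to abort. The view of $\comS^*$ is reproduced perfectly, and the dummy receiver's output differs from the real receiver's only when $\comS^*$ opens $c$ to two distinct values, a negligible event; hence $\mac_{\Pi_{\socom}^{\fzk},\adv} \dmm \mac_{\fsocom,\comSimS}$, and a fortiori $\mac_{\Pi_{\socom}^{\fzk},\adv} \qc \mac_{\fsocom,\comSimS}$.

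Against a corrupted receiver $\comR^*$, the efficient simulator $\comSimR$ runs the honest committer's code but on input $0^n$, i.e.\ it sends $c = \com(0^n;\rho)$ for fresh $\rho$ and interacts with $\comR^*$ as the honest committer would. When $\comR^*$ requests a subset $T$, $\comSimR$ forwards $T$ to $\fsocom$ on behalf of the corrupted dummy receiver, receives $m_T$, sends $\mu' := m_T$ to $\comR^*$, and has the simulated $\fzk$ box notify $\comR^*$ that the proof of $(c,T,\mu')$ was accepted; it then outputs whatever $\comR^*$ outputs. The only difference between $\mac_{\Pi_{\socom}^{\fzk},\adv}$ and $\mac_{\fsocom,\comSimR}$ is whether $c$ commits to the real input $m$ or to $0^n$ --- in both worlds $\mu' = m_T$ is sent and the $\fzk$ acceptance for $(c,T,m_T)$ is delivered (honestly in the real world, where $(m,\rho)$ is a valid witness, and faked by $\comSimR$ in the ideal world). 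Any distinguisher therefore yields an attack on the computational hiding of Naor's commitment against quantum adversaries, which holds under \pqOWF{}~\cite{C:Naor89}; hence $\mac_{\Pi_{\socom}^{\fzk},\adv} \qc \mac_{\fsocom,\comSimR}$ with $\comSimR$ efficient. Combining both cases with \Cref{thm:sacomposition} proves the theorem.

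The main obstacle is the corrupted-committer case. Naor's commitment is not extractable, and pulling the committed string out of a malicious \emph{quantum} committer without rewinding --- which here would disturb the committer's state --- is precisely the operation one cannot perform efficiently; this is why $\comSimS$ resorts to exhaustive search and why the theorem promises only unbounded simulation in that case. All other ingredients --- equivocation against a corrupted receiver, and agreement of the extracted value with any later opening --- reduce cleanly to the computational hiding and statistical binding of Naor's commitment and to the soundness and zero-knowledge of Watrous' proof system, with no rewinding involved.
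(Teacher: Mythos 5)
Your proof is correct and rests on the same two core ideas as the paper's---brute-force extraction from the Naor commitments by an unbounded committer-side simulator, and committing to zeroes while faking the proof by an efficient receiver-side simulator---but you package them through a different layer of abstraction. The paper (\Cref{lem:socom-sender,lem:socom-receiver}, detailed in \Cref{ap:proof-sender-socom,ap:proof-receiver-socom}) argues directly about the ZK sub-protocol: the committer-side proof case-analyzes on whether the $\hat c_i$ are valid and whether the opened values match the brute-forced ones, using ZK soundness and statistical binding each time, and the receiver-side proof is a short hybrid chain that first swaps the ZK transcript for a simulated one and then swaps the commitments to zero, reducing to computational ZK and hiding. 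You instead lift $\Pi_{\socom}$ to the $\fzk$-hybrid model and invoke the modular composition theorem (\Cref{thm:sacomposition}), explicitly noting that Watrous' protocol realizes $\fzk$ with an efficient simulator only against a corrupted verifier and an unbounded one against a corrupted prover. This is a legitimate and arguably cleaner decomposition---it makes the receiver-side argument a one-step reduction to hiding, since the ideal $\fzk$ absorbs the ZK-simulation step, and the unbounded prover-side simulation propagates through composition exactly to the committer side where unbounded simulation is all that is claimed. The price is that you must observe that (i) Watrous' protocol genuinely gives $\fzk$-realization in the $\cqsa$ sense against a corrupted verifier, and (ii) the composition theorem's unbounded-simulation clause applies even when the outer $\fzk$-hybrid protocol itself has only unbounded simulation against that same party; both are true but require care, which is why the paper argues directly. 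One cosmetic deviation: you present the protocol as committing to a single string $\mu$ via one commitment $c$, whereas the paper commits to $k$ separate strings $m_1,\dots,m_k$ via $c_1,\dots,c_k$ with a relation quantifying over openings of all of them; your argument goes through unchanged in the multi-commitment setting.
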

\vspace{-3ex}
\begin{figure}[!ht]
  \begin{mdframed}[style=figstyle,innerleftmargin=10pt,innerrightmargin=10pt]
{\sf Parties:} The committer $C$ and the receiver $R$.\\
{\sf Inputs:} $C$ gets $k$ $\ell$-bit strings $m_1$,...$m_k$ and $R$ gets a subset $I \subseteq [k]$ of messages to be decommited%

\begin{center}
\underline{\textsf{Commitment Phase}}
\end{center}
\begin{enumerate}
    \item $R$ sends $\rho$ for Naor's commitment scheme
    \item For $i \in [k]$, $C$ generates the commitments $c_i = \com_\rho(m_i,r_i)$, where $r_i$ is some private randomness.
    \item $C$ sends $c_1,...,c_k$ to $R$
\end{enumerate}

\begin{center}
\underline{\textsf{Decommitment Phase}}
\end{center}
\begin{enumerate}
   \item  $R$ sends $I$ to $C$
   \item $C$ sends $(m_i)_{i \in I}$ to $R$ and they run a ZK protocol to prove that there exists $\left((\tilde{m}_i)_{i \not\in I},(r_i)_{i \in [k]})\right)$ such that  $c_i = \com_{\rho}(\tilde{m}_i,r_i)$
\end{enumerate}
\vspace{-2ex}
\caption{Protocol for selective-opening commitment scheme $\Pi_{\socom}$.}\label{fig:prot-ext-com}
\end{mdframed}
\vspace{-3ex}
\end{figure}

We prove \Cref{thm:socom} by showing security against malicious committer with unbounded simulator in \Cref{lem:socom-sender} and security against malicious receiver in \Cref{lem:socom-receiver}.

\begin{lemma}\label{lem:socom-sender}
Assuming the existence of \pqOWF{},
$\Pi_{\socom}$ \cqsa-emulates
    $\fsocom$ against corrupted committer $\adv$ with unbounded simulation.
\end{lemma}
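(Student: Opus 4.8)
The plan is to construct an unbounded simulator $\comSimS$ that interacts with a malicious committer $\adv$ (corrupting $C$), extracts the committed messages $(m_i)_{i\in[k]}$ by brute force, and forwards them to the ideal functionality $\fsocom$. Concretely, $\comSimS$ plays the honest receiver $R$ in $\Pi_{\socom}$: in the commitment phase it samples and sends an honest Naor first message $\rho$ and receives the commitments $c_1,\dots,c_k$ from $\adv$. Since Naor's scheme is statistically binding (with overwhelming probability over $\rho$), each $c_i$ determines a unique value; $\comSimS$, being computationally unbounded, finds for each $i$ the unique $m_i$ (or declares $m_i=\bot$ and causes an abort if no valid opening exists) and sends $\{(\text{Commit},\dots,m_i)\}_i$ to $\fsocom$. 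In the decommitment phase, when $\adv$ (through the environment) requests to open a subset $I$, $\comSimS$ forwards $I$ to $\adv$, receives the claimed values $(\hat m_i)_{i\in I}$, and runs the honest ZK verifier on the statement that $c_i = \com_\rho(\tilde m_i, r_i)$ for all $i\in[k]$ with $\hat m_i = \tilde m_i$ for $i\in I$; if the verifier rejects, $\comSimS$ instructs $\fsocom$ to abort the opening, and otherwise it lets $\fsocom$ deliver $(m_i)_{i\in I}$.

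The key steps, in order, are: (i) argue that conditioned on $\rho$ being statistically binding, the values $m_i$ extracted by $\comSimS$ are exactly the unique values to which $c_i$ can ever be opened; (ii) argue that in the real execution, whenever the honest receiver accepts the ZK proof in the decommitment phase, the revealed values $(\hat m_i)_{i\in I}$ equal $(m_i)_{i\in I}$ except with negligible probability — this follows from soundness of the ZK argument (against $\adv$, who may be quantum poly-time) together with statistical binding, since accepting the proof certifies the existence of openings $(\tilde m_i, r_i)$ consistent with the $c_i$, and binding forces $\tilde m_i = m_i$; (iii) conclude that the joint distribution of $\adv$'s view together with the honest receiver's output in $\mac_{\Pi_{\socom},\adv}$ is statistically indistinguishable from that in $\mac_{\fsocom,\comSimS}$, since in both worlds the receiver's output in the decommitment phase is $(m_i)_{i\in I}$ (or $\bot$) for the same values $m_i$, and $\adv$'s view is generated identically (honest $R$ messages, which $\comSimS$ reproduces exactly). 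Here it matters that soundness need only hold against an efficient $\adv$; the simulator is allowed to be unbounded.

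The main obstacle I anticipate is the interplay between the computational soundness of the ZK argument and the unbounded extraction: one must be careful that the event ``ZK verifier accepts but revealed value $\neq$ extracted value'' is negligible \emph{even though} defining the extracted value requires unbounded computation. The resolution is that this bad event is itself an efficiently-checkable predicate of $\adv$'s transcript once $\rho$ is fixed to be binding (accept/reject is efficient; the mismatch ``$\exists$ opening of $c_i$ to $\hat m_i$'' is an NP statement whose falsity is certified by binding), so a reduction breaking ZK soundness can be built that runs $\adv$ together with the honest $R$ and ZK verifier — all efficient — and this suffices. A secondary, more routine point is handling aborts and partial messages from $\adv$ uniformly across the two experiments and invoking the composition framework of \Cref{sec:model} so that the statement is phrased as $\mac_{\Pi_{\socom},\adv}\qc\mac_{\fsocom,\comSimS}$; I would also remark that since $\comSimS$ never rewinds and the ZK verifier is run honestly, no quantum rewinding issues arise on the simulator side.
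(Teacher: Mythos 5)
Your proposal is correct and follows essentially the same route as the paper's proof: an unbounded simulator that plays the honest receiver, brute-force extracts the committed values from the Naor commitments (conditioning on a statistically binding first message $\rho$), feeds them to $\fsocom$, and then uses soundness of the ZK argument plus statistical binding to argue that the real receiver's accepted openings must coincide with the extracted values except with negligible probability. The paper's simulator additionally performs an explicit consistency check $\hat m_i = \tilde m_i$ at decommitment time, but as you observe the mismatch event is negligible either way, so this is only a cosmetic difference.
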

\begin{proof}
The unbounded simulator $\simulator$ works as follows:
\begin{enumerate}
\item In the commitment phase, $\simulator$ runs the honest protocol with $\adv$ and when receives the commitments $\hat{c}_1,...,\hat{c}_k$ from $\adv$ and $\simulator$ finds the messages $\hat{m}_1,...,\hat{m}_k$ by brute force. If there is a $\hat{c}_i$ that does not decommit to any message or decommits to more than one message $\simulator$ aborts. Finally, $\simulator$ inputs $\hat{m}_1,...,\hat{m}_k$ to $\func_{\socom}$
\item In the Decommitment phase,
   $\simulator$ receives $I$ from  $\func_{\socom}$, forwards it to $\adv$. $\simulator$ receives $(\tilde{m}_i)_{i \in I}$ from $\adv$ runs the honest verifier in the ZK protocol with $\adv$, and rejects iff the ZK rejects or if for any $i \in I$, $\hat m_i \ne \tilde m_i$.
\end{enumerate}

The proof follows the statistically-binding property of Naor's commitment scheme, so we can ignore commitments that open to more than one message, and by the ZK soundness property, which ensures that, up to negligible probability, if the commitments are not well-formed or if the sender tries to open then to a different value, both the simulator and the original receiver abort. 

Due to space restrictions, we leave the details to \Cref{ap:proof-sender-socom}
\end{proof}

We now show security against malicious receiver.

\begin{lemma}\label{lem:socom-receiver}
Assuming the existence of \pqOWF{},
$\Pi_{\socom}$ \cqsa-realizes
    $\fsocom$ against corrupted receiver $\adv$.
\end{lemma}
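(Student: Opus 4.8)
The plan is to construct an efficient simulator $\simulator$ that, interacting with $\fsocom$ on behalf of the corrupted receiver, reproduces the view of $\adv$ when it attacks $\Pi_{\socom}$ against the honest committer. The crucial structural feature of $\Pi_{\socom}$ is that the committer never discloses its commitment randomness $r_i$ in the clear: the opening at the requested locations $I$ is backed up only by a zero-knowledge argument. Consequently $\simulator$ may commit to the all-zero string up front, and only later---after it sees $I$ and pulls $m_I=(m_i)_{i\in I}$ from $\fsocom$---produce a convincing ``opening''. Concretely: $\simulator$ receives $\rho$ from $\adv$, sets $c_i=\com_\rho(0^\ell;r_i)$ with fresh $r_i$ for each $i\in[k]$ and sends $c_1,\dots,c_k$; upon receiving $I$ from $\adv$ it forwards $I$ to $\fsocom$, forwards the reply $m_I$ to $\adv$, and then runs the quantum zero-knowledge simulator $\Sim_{\mathsf{ZK}}$ of~\cite{Watrous09} on the (now false) statement ``the $c_i$ are consistent commitments and $c_i$ opens to $m_i$ for $i\in I$'', playing the honest prover's role against the malicious verifier embedded in $\adv$; finally $\simulator$ outputs $\adv$'s output. (Malformed $\rho$ or $I$ from $\adv$ are handled exactly as the honest committer would handle them.) Since the honest committer produces no output and the ideal committer merely forwards its inputs to $\fsocom$, it suffices to prove that $\adv$'s view in the real and ideal executions are indistinguishable, i.e.\ $\mac_{\Pi_{\socom},\adv}\qc\mac_{\fsocom,\simulator}$.

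I would establish this by a two-step hybrid. Let $H_0$ be the real execution: the committer commits to the true $m_1,\dots,m_k$, reveals $m_I$, and runs the honest ZK prover on the corresponding \emph{true} statement. In $H_1$ I keep the commitments to the true messages but replace the honest ZK prover by $\Sim_{\mathsf{ZK}}$; because the statement is true, $H_0\qc H_1$ is exactly the post-quantum zero-knowledge guarantee of~\cite{Watrous09} against the quantum malicious verifier $\adv$. In $H_2$ I replace each $c_i=\com_\rho(m_i;r_i)$ by $c_i=\com_\rho(0^\ell;r_i)$, leaving the rest of the experiment---including the invocation of $\Sim_{\mathsf{ZK}}$, which never consults a witness---verbatim; then $H_1\qc H_2$ follows from the post-quantum computational hiding of Naor's commitment via a sub-hybrid over the $k$ positions, where the reduction simulates everything after the commitment phase (including $\Sim_{\mathsf{ZK}}$) on its own. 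For this last step it matters that Naor hiding holds for \emph{every} first message $\rho$, even one chosen adversarially by $\adv$, which is the case since both $G(s)$ and $G(s)\oplus\rho$ are pseudorandom; thus the reduction can let $\adv$ pick $\rho$ and still embed its challenge commitments. Finally, $H_2$ is precisely the ideal execution with $\simulator$: the $m_I$ revealed in $H_2$ is the same string that $\fsocom$ hands back to $\simulator$, since $\fsocom$ only forwards the committer's input. Chaining the hybrids gives the claim.

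I do not expect a genuine obstacle: this is the textbook ``commit to zero, equivocate through ZK'' pattern, and its only essentially quantum ingredient---simulating the argument against a quantum verifier---is provided by Watrous' quantum rewinding, which we invoke as a black box since post-quantum ZK from \pqOWF{} is assumed (and the stand-alone composition machinery of \Cref{thm:sacomposition} lets us treat the ZK step modularly). The two points that need a moment of care are (i) introducing $\Sim_{\mathsf{ZK}}$ in the hybrid while the proved statement is still true, and only afterwards switching the commitments to $0^\ell$, so that the second transition rests on hiding rather than on zero-knowledge; and (ii) verifying that Naor hiding survives an adversarially chosen $\rho$, so that the $H_1\to H_2$ reduction is sound even though $\adv$, not the reduction, selects $\rho$. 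Alternatively, one can phrase the whole argument in the $\fzk$-hybrid model---$\simulator$ then simply hands $\adv$ the false statement as though it came from $\fzk$---which collapses the argument to the single hiding step, and recover the plain-model statement by composing with \cite{Watrous09} through \Cref{thm:sacomposition}.
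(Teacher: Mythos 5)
Your proposal is correct and follows essentially the same route as the paper: commit to zeroes, pull $m_I$ from $\fsocom$, and equivocate via the ZK simulator, with the hybrids ordered exactly as in the paper's proof (first swap the honest prover for $\Sim_{\mathsf{ZK}}$ while the statement is still true, then swap the commitments to zero under Naor hiding). Your added observations---that hiding must survive an adversarially chosen first message $\rho$, and that the argument can be collapsed in the $\fzk$-hybrid model---are correct refinements but do not change the structure of the argument.
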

\begin{proof}
The simulator $\simulator$ works as follows:
\begin{enumerate}
\item In the commitment phase,  $\simulator$ sends $c_i = \com_\rho(0,r_i)$ to $\adv$
\item In the decommitment phase, $\simulator$ receives $I$ from  $\adv$, uses it as input of $\func_{\socom}$. $\simulator$ receives back the messages $(m_i)_{i \in I}$, sends them to $\adv$ and runs the ZK simulator of the proof that $(c_i)_{i \in I}$ open to $(m_i)_{i \in I}$ and that $(c_i)_{i \not\in I}$ are valid commitments.
\end{enumerate}

The fact that $ M_{\Pi_{\socom},\adv} \qc M_{\fsocom, \simulator}$
follows from the computational zero-knowledge of the protocol and the
computatinally-hiding property of Naor's commitment scheme. We defer
the details of the proof to \Cref{ap:proof-receiver-socom}

\end{proof}

\fsnotem{I noticed several protocols are described in FZK
  hybrid. Strictly speaking, we cannot apply composition later since
  we don't have fully-simulatable ZK protocols assuming OWFs alone. I
  suppose we can just say "run ZK protocols" instead. Incidentally,
  Watrous proved quantum ZK for NP using a non-interactive commitment,
  and his theorem didn't claim quantum-secure ZK from OWFs. I don't
  see any issue plugging in Naor's commitment though.}

\section{Extractable Commitment from Unbounded Simulation OT}
\label{sec:ecom}

In this section, we construct an extractable commitment scheme using the
unbounded simulation OT from section~\ref{sec:BBCS}.
We do this in two steps. First, we define a new primitive, namely {\em verifiable}
conditional disclosure of secrets (vCDS) in section~\ref{sec:cdsdef}, and we
construct a (unbounded simulation)
vCDS protocol in section~\ref{sec:cdsprot} from the unbounded simulation OT. We then show how to use
vCDS to construct an extractable commitment protocol that implements
$\fsocom$ with efficient simulators in section~\ref{sec:cds-to-extcom}.

\subsection{Verifiable Conditional Disclosure of Secrets (vCDS)}
\label{sec:cdsdef}

\agnotem{Is it necessary to define $\func_{CDS}$? I am not sure that we are using it in the later proof, since we need to use concrete protocol and we don't use the ideal functionality.}

\rachelm{I think we do implement $\func_{CDS}$ with unbounded
  simulation for both malicious sender and receiver. In the case of
  malicious receiver, we can simulate OT to get all the input it is
  using in each copy of OT. The simulator can test if any of them is a
  valid witness of $x$. If so, it sends that witness to the ideal
  functionality to get $m$ and then simulate the garbled circuits
  accordingly. So it seems we are fine, am I missing anything?}

\rachelm{Defined the above only w.r.t.\ 2 party and only has sid.}
\vnotem{need to ref for CDS. Who?}

We define the primitive of (verifiable) conditional disclosure of secrets.
Conditional disclosure of secrets~\cite{GIKM98} (CDS) for an $\NP$-language $\Lang$
is a two-party protocol where a sender (denoted $\cdsS$) and a receiver
(denoted $\cdsR$) have a common input $x$, the sender has a message $\mu$,
and the receiver (purportedly) has a witness $w$ for the $\NP$-relation $R_{\Lang}$.
At the end of the protocol, $\cdsR$ gets $\mu$ if $R_{\Lang}(x,w)=1$
and $\bot$ otherwise, and the sender gets nothing. In a sense, this can be
viewed as a {\em conditional} version of oblivious transfer, or as an interactive
version of witness encryption.

The CDS functionality is defined in Figure~\ref{fig:fcds}. We will
construct a protocol $\Pi = \inter{\cdsS, \cdsR}$ that securely
realizes the CDS functionality in the quantum stand-alone model. We
will consider protocols with either efficient or unbounded simulators.

\begin{figure}[!ht]
  \begin{mdframed}[style=figstyle,innerleftmargin=10pt,innerrightmargin=10pt]
  \begin{center}
    \textbf{The Conditional Disclosure of Secret (CDS) Functionality}
    $\func_{CDS}$ for an NP language $\mathcal{L}$.
  \end{center}
  \
  \\
  \textsf{Security Parameter:} $\lambda$.  \\
    \textsf{Parties:} Sender $S$ and Receiver $R$, adversary $\adv$.

  \begin{description}
  \item[Sender Query:] $\func_{CDS}$ receives
    $(\text{Send},sid,(x,\mu))$ from $S$, where
    $x \in \Lang \cap \bits^{n_1(\secpar)}$ and
    $m \in\bits^{n_2(\secpar)}$ for polynomials $n_1$ and $n_2$,
    records $(sid,(x, \mu))$
    and sends $(\text{Input},sid,x)$ to $R$ and $\mathcal{A}$.

    $\func_{CDS}$ ignores further send messages from $S$ with $sid$.
  \item[Receiver Query:] $\func_{CDS}$ receives
    $(\text{Witness},sid, w)$ from party $R$, where
    $w \in\bits^{m(\secpar)}$ for a polynomial $m$.  $\func_{CDS}$
    ignores the message if no $(sid,\star)$ was recorded. Otherwise
    $\func_{CDS}$ sends $(\text{Open},sid, x, \mu')$ to $R$ where
    $$\mu' = \left\{ \begin{array}{cc}
    \mu & \mbox{if $\Rel_\Lang(x, w) = 1$} \\
    \bot & \mbox{if $\Rel_\Lang(x, w) = 0$} \end{array} \right.$$
    $\func_{CDS}$ sends
    $(\text{Open},sid, x)$ to $\adv$ and ignores further
    messages from $R$ with $sid$.
\end{description}
\end{mdframed}
\caption{The Conditional Disclosure of Secrets (CDS) Functionality}
\label{fig:fcds}
\vspace{-3ex}
\end{figure}

\paragraph{Verifiability.}
We will, in addition, also require the CDS protocol to be {\em verifiable}.
Downstream, when constructing our extractable commitment protocol in Section~\ref{sec:cds-to-extcom},
we want to be able to prove consistency of the transcript of a
CDS sub-protocol. It is not a-priori
clear how to do this since the CDS
protocol we construct will either live in the OT-hybrid model, in which case
the OT input is {\em not} contained in the protocol transcript and is
unconstrained by it; or it uses quantum communication, in which case,
again consistency cannot be expressed as an NP-statement.

\rachelm{it would be more general to define verifiability w.r.t.\ SFE,
  but that would require defining the ideal functionality F
  for computing a function f, which we did not define.}

\begin{definition}[Verifiability]
  \label{def:ver}
  Let $\Lang$ be an $\NP$ language, and $\Pi = \inter{\cdsS, \cdsR}$ be a
  CDS protocol between a sender $\cdsS$ and a receiver $\cdsR$. $\Pi$ is
  {\em verifiable} (w.r.t.\ $\cdsS$) if there is a polynomial time
  classical algorithm $\Ver$, such that, the following properties are
  true:
  \begin{description}
  \item[Correctness:] For every
    $(x,\mu)$ and every $w$, $\cdsS(x,\mu)$ after interacting with $\cdsR(w)$,
    outputs on a special output tape a proof $\pi$, such that,
    $\Ver(\tau,x,\mu,\pi) = 1$ where $\tau$ is the transcript of classical
    messages exchanged in the interaction.

  \item[Binding:] For every $\secpar \in \Nat$, every
    (potentially unbounded) adversary
    $\adv = \set{\adv_\secpar}_{\secpar\in\Nat}$, every sequence of
    witnesses $\set{w_\secpar}_\secpar$,
    the probability that $\adv_\secpar$ wins in the following experiment is
    negligible.

  \begin{itemize}
  \item $\adv_\secpar$ after interacting with $\cdsR(1^\secpar, w)$,
    outputs $(x,\mu,\pi)$. Let $\tau$ be the transcript of classical
    messages exchanged in the interaction.

  \item $\adv_\secpar$ wins if (a) $\Ver(\tau, x, \mu, \pi) = 1$, (b) $\cdsR$ did
    not abort, and (c) $\cdsR$ outputs $\mu'$ inconsistent with inputs $(x,\mu)$ and
    $w$, that is,
    \begin{align*}
      \mu' \ne
      \begin{cases}
        \mu & \text{ if } \Rel_\Lang(x, w) = 1  \\
        \bot & \text{ otherwise }
      \end{cases}
    \end{align*}
  \end{itemize}
  \end{description}
\end{definition}

\begin{definition}[Verifiable CDS]
  Let $\Lang$ be an $\NP$ language, and $\Pi = \inter{\cdsS, \cdsR}$ be a
  protocol between a sender $\cdsS$ and a receiver $\cdsR$. $\Pi$ is a
  {\em verifiable CDS} protocol if (a) it $\cqsa$-emulates $\fcds$ with
  an efficient simulator; and (b) it is verifiable according to Definition~\ref{def:ver}.
\end{definition}

\subsection{CDS Protocol from Unbounded Simulation OT}
\label{sec:cdsprot}

\begin{theorem}\label{thm:vercds}
  Assume the existence of \pqOWF{}.  For
  every $\NP$ language $\Lang$, there is a verifiable CDS protocol
  $\Pi = \inter{\cdsS, \cdsR}$ that \cqsa-emulates $\fcds$ for
  $\Lang$ in the $\fpot$ hybrid model.
\end{theorem}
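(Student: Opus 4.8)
The plan is to instantiate the protocol sketched in \Cref{sec:technical} and depicted in \Cref{fig:cdsprot}, and to verify the two requirements: (a) it $\cqsa$-emulates $\fcds$ (\Cref{fig:fcds}) --- with an efficient simulator against a malicious receiver and an (unbounded) simulator against a malicious sender, which is the ``mild'' notion the extractable-commitment construction of \Cref{sec:cds-to-extcom} asks for --- and (b) it is verifiable in the sense of \Cref{def:ver}. All building blocks come from \pqOWF{}: Naor's statistically-binding, computationally-hiding commitment $\com$~\cite{C:Naor89}; Yao garbled circuits; and the post-quantum zero-knowledge proof system for $\NP$ of~\cite{Watrous09}, which, instantiated with $\com$, is a \emph{statistically sound} proof. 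Since we have no fully-simulatable $\fzk$ functionality from \pqOWF{} alone, we treat the ZK proof as an ordinary sub-protocol and invoke its stand-alone zero-knowledge (against quantum verifiers) and soundness directly inside the hybrid arguments, rather than through the composition theorem; the protocol makes a single $\fpot$ call carrying all $2\lambda|w|$ transfers and proves a single ZK statement over all $2\lambda$ garbled circuits, so no parallel-ZK difficulty arises.

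\emph{Verifiability (the crux).} Define $\Ver(\tau,x,\mu,\pi)$ to parse $\pi$ as a decommitment $r$ and accept iff $r$ opens the commitment $c$ appearing in $\tau$ to $\mu$ (with respect to the Naor first message in $\tau$) and $x$ matches the statement in $\tau$; correctness is immediate since the honest $\cdsS$ outputs the true $r$. For binding, fix an unbounded $\cdsS^*$ interacting with $\cdsR(1^\lambda,w)$, and suppose $\cdsS^*$ outputs $(x,\mu,\pi)$ with $\Ver$ accepting and $\cdsR$ not aborting. Then $c$ is statistically bound to $\mu$; and since $\cdsR$ did not abort, the ZK proof was accepting, so by statistical soundness (except with negligible probability) every $\hat{G}^i$ is a correct garbling of $G_{x,\mu}$ whose input-wire label pairs are exactly those committed in the $c^{i,j}_b$'s, and every cut-and-choose check on the copies in $\cdsR$'s secret set $\Lambda$ passed. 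Call copy $i$ \emph{good} if for every wire $j$ the string $\cdsS^*$ fed to $\fpot$ on branch $w_j$ of that wire is a valid opening of $c^{i,j}_{w_j}$; since $\fpot$ hides $\cdsR$'s selections, goodness is fixed independently of $(\Lambda,\{s^i\})$. A non-good copy has a bad wire $j_0$, and if $i\in\Lambda$ and $s^i_{j_0}=w_{j_0}$ --- an event of probability $1/4$, independent across copies --- the check at $(i,j_0)$ fails and $\cdsR$ aborts; hence $\Pr[\cdsR\text{ does not abort}\wedge\#\{\text{non-good copies}\}\ge\lambda/4]\le(3/4)^{\lambda/4}=\negl(\lambda)$, while a Chernoff bound gives $\#\{i\notin\Lambda\}>\lambda/2$ except negligibly. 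Consequently, whenever $\cdsR$ does not abort there is, except with negligible probability, a good copy $i^*\notin\Lambda$; $\cdsR$ recognizes it by re-checking the openings it received on branch $w$, evaluates $\hat{G}^{i^*}$ on valid input labels, and thus outputs $\mu'=G_{x,\mu}(w)$, which is consistent with $(x,\mu)$ and $w$ --- so $\cdsS^*$ loses. I expect this cut-and-choose analysis, together with the realization that $\Ver$ need only inspect $c$ (the single ZK proof pins every garbled circuit and every label to the value committed in $c$), to be the main obstacle, precisely because neither the $\fpot$ inputs nor, later, the BB84 qubits of the underlying OT appear in the classical transcript.

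\emph{Security against a malicious receiver.} The simulator plays $\fpot$ and so reads every per-copy choice string $t^i$ that $\cdsR^*$ submits; it sets $w:=t^{i^*}$, where $i^*$ indexes a copy whose choice is a witness for $x$ if one exists and $i^*=1$ otherwise, sends $w$ to $\fcds$, and receives $\mu'$. For each $i$ it runs the garbled-circuit simulator with output $\mathsf{out}^i:=\mu'$ if $\Rel_\Lang(x,t^i)=1$ and $\bot$ otherwise (consistent, since $\mu'=\bot$ forces all $\mathsf{out}^i=\bot$), obtaining $\hat{G}^i$ and one label per wire; it delivers these as the $\fpot$ outputs, commits $c$ and all unused-branch label commitments to $0$, commits the delivered labels in their slots, sends the $\hat{G}^i$ and the commitments, and runs the ZK simulator on $\cdsR^*$. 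Indistinguishability from the real execution follows by a three-step hybrid: replace the honest ZK proof by the ZK simulator (post-quantum ZK), then the real garbled circuits and revealed labels by simulated ones and the unused-branch commitments by commitments to $0$ (garbled-circuit security and hiding of $\com$), and finally $c=\com(\mu)$ by $c=\com(0)$ (hiding); the last hybrid is exactly the ideal execution.

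\emph{Security against a malicious sender.} The simulator runs the honest receiver with an arbitrary dummy input $w_0$ while playing $\fpot$; since $\cdsS^*$'s incoming messages ($\rho$, the ZK-verifier messages, and nothing from $\fpot$) are independent of the receiver's input, the dummy is harmless. It performs the ZK check and the $\Lambda$-checks and, if the honest receiver would abort, makes the ideal receiver abort; otherwise it reads $x$ off the transcript, extracts $\mu$ as the value to which $c$ is statistically bound (the only inefficient step --- this is where unbounded simulation is used), and sends $(x,\mu)$ to $\fcds$. By the binding analysis above, whenever the real receiver does not abort its output equals $G_{x,\mu}(w)$, which is exactly what $\fcds$ hands the ideal receiver, so the two experiments agree up to negligible error. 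Combining the three parts gives a verifiable CDS protocol for $\Lang$ in the $\fpot$-hybrid model from \pqOWF{}, which is the theorem.
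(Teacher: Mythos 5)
Your proposal is correct and follows essentially the same route as the paper: the same protocol (Figure~\ref{fig:cdsprot}), the same verifier $\Ver$ checking only the decommitment of $c^*$, the same cut-and-choose counting of bad copies for binding and for the unbounded sender simulator, and the same hybrid sequence (ZK simulation, commitment hiding, garbled-circuit simulation) for the receiver simulator. Your write-up is in places slightly more careful than the paper's — e.g., the explicit $1/4$-catch probability per non-good copy and the observation that the simulated honest receiver's abort behavior with a dummy witness matches the real one up to negligible error — but these are refinements of the same argument, not a different proof.
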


\begin{corollary}
  Assume the existence of \pqOWF{}, and a
  protocol that \cqsa-emulates $\fpot$ with unbounded
  simulation. Then, for every $\NP$ language $\Lang$, there is a
  verifiable CDS protocol $\Pi = \inter{\cdsS, \cdsR}$ that
  \cqsa-emulates $\fcds$ for $\Lang$ with unbounded simulation.
\end{corollary}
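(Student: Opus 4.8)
The plan is to derive the corollary from \Cref{thm:vercds} by instantiating the ideal parallel-OT with the hypothesized protocol and appealing to the Modular Composition Theorem (\Cref{thm:sacomposition}). Let $\Pi_h$ be the verifiable CDS protocol for $\Lang$ in the $\fpot$-hybrid model given by \Cref{thm:vercds}, and let $\Gamma$ be the protocol from the hypothesis that $\cqsa$-realizes $\fpot$ with unbounded simulation. I take the claimed protocol to be $\Pi := \cprot{\Pi_h}{\fpot}{\Gamma}$, i.e.\ $\Pi_h$ with its single parallel-OT call replaced by an execution of $\Gamma$. Since the CDS protocol of \Cref{sec:cdsprot} invokes $\fpot$ exactly once and it is the only subroutine, the ``at most one subroutine in progress'' hypothesis of \Cref{thm:sacomposition} is trivially satisfied.

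First I would establish that $\Pi$ $\cqsa$-emulates $\fcds$ with unbounded simulation. By definition of verifiable CDS, $\Pi_h$ $\cqsa$-realizes $\fcds$ in the $\fpot$-hybrid model with an efficient simulator. Applying the second item of \Cref{thm:sacomposition} with $\func = \fcds$, $\funcG = \fpot$, hybrid protocol $\Pi_h$, and subroutine $\Gamma$ yields that $\Pi$ $\cqsa$-realizes $\fcds$, and the unbounded-simulation clause of the same theorem ensures that for every corrupted party against which $\Gamma$ is only unbounded-simulatable, $\Pi$'s simulator for the corresponding corrupted party is (at worst) unbounded as well. Since the hypothesis only promises unbounded simulation for $\Gamma$, this already gives the claim. (When $\Gamma$ is the protocol $\Pi_{\pot}$ of \Cref{sec:parallel-ot}, which is efficiently simulatable against a malicious OT sender and only unbounded-simulatable against a malicious OT receiver, the composition likewise produces efficient simulation against a malicious CDS sender and unbounded simulation against a malicious CDS receiver --- exactly what the extractable-commitment construction of \Cref{sec:cds-to-extcom} needs.)

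Next I would argue that $\Pi$ is verifiable in the sense of \Cref{def:ver}; this does not follow formally from the composition theorem and must be checked directly. The key observation is that the verification algorithm $\Ver$ and the proof $\pi$ produced by the honest $\cdsS$ refer only to classical messages $\cdsS$ sends \emph{directly} to $\cdsR$ --- concretely $\pi$ is the opening of the statistically binding commitment $c$ to $\mu$, and $\Ver(\tau,x,\mu,\pi)$ merely checks this opening against the $c$ recorded in $\tau$ --- and never to anything exchanged inside the parallel-OT call. Hence $\Ver$, $\pi$, and the relevant part of $\tau$ are unchanged when $\fpot$ is replaced by $\Gamma$, so the correctness clause of verifiability carries over from \Cref{thm:vercds} (also using correctness of $\Gamma$, so that an honest $\cdsR$ does not spuriously abort). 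For binding, suppose some possibly-unbounded $\cdsS^*$ interacting with $\cdsR(w)$ in $\Pi$ wins the binding experiment with non-negligible probability; wrapping $\cdsS^*$ with the (possibly inefficient) simulator for a corrupted OT sender guaranteed by the security of $\Gamma$ turns it into an adversary $\widehat{\cdsS}$ against $\Pi_h$ in the $\fpot$-hybrid model whose joint distribution over the transcript $\tau$, the output tuple $(x,\mu,\pi)$, and $\cdsR$'s abort bit and output is negligibly close to that of the real execution with $\Gamma$; hence $\widehat{\cdsS}$ would win the binding experiment against $\Pi_h$, contradicting the verifiability guaranteed by \Cref{thm:vercds}. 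Together these two parts give a protocol that is verifiable and $\cqsa$-emulates $\fcds$ with unbounded simulation.

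The step I expect to be the main obstacle is the binding half of verifiability: unlike $\cqsa$-emulation it is not a secure-emulation statement, so \Cref{thm:sacomposition} does not hand it to us, and it has to be re-derived after swapping the ideal $\fpot$ for $\Gamma$. The reduction above works precisely because (i) $\Ver$ inspects only the direct classical messages, which the substitution leaves intact, so ``the same transcript $\tau$'' makes sense across the hybrid and real worlds, and (ii) the binding experiment already quantifies over computationally unbounded adversaries, so folding a possibly-inefficient $\fpot$-simulator into $\widehat{\cdsS}$ costs nothing. The one detail to pin down is that the indistinguishability provided by $\Gamma$'s security holds against a distinguisher that also sees $\cdsR$'s internal abort decision and output; this is unproblematic because those quantities are efficiently computable from $\cdsR$'s view, which is exactly the side information the environment controls in the quantum stand-alone model.
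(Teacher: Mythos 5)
Your decomposition is the intended one: the paper states this corollary without proof, as an immediate consequence of \Cref{thm:vercds} plus the Modular Composition Theorem (\Cref{thm:sacomposition}), and your treatment of the emulation half is exactly that argument. You are also right, and more careful than the paper, to flag that verifiability is not a secure-emulation statement and must be re-checked after substituting $\Gamma$ for $\fpot$; the observation that $\Ver$ and $\pi$ depend only on the direct classical messages (so ``the same $\tau$'' is meaningful in both worlds) is the correct key point for the correctness clause.

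The one genuine gap is in the binding half, and it is quantitative rather than structural. \Cref{def:ver} quantifies binding over \emph{potentially unbounded} senders, but your reduction invokes the security of $\Gamma$ against a corrupted OT sender, which (per \Cref{def:cqsa,def:uqcsa}) is only guaranteed for poly-time adversaries and only yields \emph{computational} indistinguishability. For the concrete $\Gamma$ of \Cref{sec:parallel-ot}, receiver privacy ultimately rests on the computational hiding of Naor commitments, so an unbounded $\cdsS^*$ can recover the receiver's OT choice bits, learn $\Lambda$, corrupt precisely the unchecked garbled instances, and defeat the cut-and-choose argument underlying \Cref{lem:verifiability}. Hence the composed protocol cannot satisfy binding against unbounded senders, and no reduction will rescue that. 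Your argument is sound once $\cdsS^*$ is restricted to QPT: the winning event of the binding experiment is efficiently decidable from $\Ver$, the transcript, and $\cdsR$'s output (all available to the stand-alone distinguisher), so computational closeness of the real-$\Gamma$ and ideal-$\fpot$ executions suffices to transfer a non-negligible advantage to the hybrid model, where \Cref{lem:verifiability} applies. This weakening is harmless downstream --- the malicious committer in \Cref{lem:commsim} is QPT --- but you should state explicitly that after instantiation the binding clause holds only against quantum poly-time senders (equivalently, that the corollary should be read with the computational variant of \Cref{def:ver}).
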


 \paragraph{Proof of Theorem~\ref{thm:vercds}.}
 The verifiable CDS protocol is described in Figure~\ref{fig:cdsprot}.
 The protocol uses Naor's classical statistically binding commitment
   protocol, Yao's garbled circuits, and post-quantum zero knowledge proofs, all
   of which can be implemented from \pqOWF{}. For a more
   detailed description of these ingredients, see Section~\ref{sec:cryptoprelims}.

 In lemma~\ref{lem:rcvrsim}, we show that the protocol has an efficient simulator for a corrupted receiver,
 and in lemma~\ref{lem:sendsim}, an efficient simulator for a corrupted sender (both in the OT hybrid model).
 Lemma~\ref{lem:verifiability} shows that the protocol is verifiable.
 \qed

\begin{figure}[htpb]
{\small
 \begin{mdframed}[style=figstyle,innerleftmargin=10pt,innerrightmargin=10pt]
 \textsf{Parties:} The sender $\cdsS$ and the receiver $\cdsR$. \hfill \textsf{Inputs:} $\cdsS$ has input $(x,\mu)$ and $\cdsR$ has input $w$. 
  \begin{enumerate}
  \item \textbf{\textcolor{blue}{Preamble}}:
  $\cdsR$ sends a random string $\rho$ as the first message of
    Naor's commitment scheme to $\cdsS$ and $\cdsS$ sends $x$ to $\cdsR$

   \item \textbf{Compute Garbled Circuits:}
   $\cdsS$ generates $2\secpar$ garbled circuits, for the circuit
     computing
     \begin{align*}
       G_{x,\mu}(w) =  \mu' = \begin{cases}
          \mu & \text{ if } \Rel_\Lang(x, w) = 1  \\
          \bot & \text{ otherwise }
        \end{cases}
     \end{align*}
     That is, for every $i \in [2\secpar]$,
     $$(\hat G^i,
     \smallset{\ell^{i,j}_b}_{j \in [m], b \in \bits}) = \Garb(G_{x,\mu};
     \gamma_i)$$
    where $m$ is the length of the witness, $\hat G^i$ are
     the garbled circuits, and $\ell$'s are its associated labels.

  \item \textbf{Cut-and-Choose:}
     $\cdsR$ samples a random subset $\Lambda \subseteq [2\secpar]$, by
       including each $i \in [2\secpar]$ with probability 1/2. For every
       $i \in [2\secpar]$, set
       \begin{align*}
        \sigma^i =
          \begin{cases}
            s^i \gets \bits^m & i \in \Lambda \\
            w & i \not\in \Lambda
          \end{cases}
       \end{align*}

   \item \textbf{\textcolor{blue}{OT}:}
   For every $i \in [2\secpar], j\in [m], b \in \bits$, $\cdsS$ samples
   $r^{i,j}_b$, the random coins for
   committing to the labels $\ell^{i,j}_b$ via Naor's commitment scheme.

   $\cdsS$ and $\cdsR$ invokes $\fpot$ for $2\secpar \times m$
     parallel OT, where the $(i,j)$'th OT for
     $i \in [2\secpar], j \in [m]$ has sender's input strings
     $(\ell^{i,j}_0, r^{i,j}_0)$ and $(\ell^{i,j}_1, r^{i,j}_1)$, and
     receiver's choice bit $\sigma^{i,j}$ (which is the $j$-th bit of $\sigma^{i}$) and $\cdsR$ receives
     $(\tilde \ell^{i,j}, \tilde r^{i,j})$.

     We refer to the OTs with index $(i, \star)$ as the $i$'th batch.
     as they transfer labels of the $i$'th garbled circuit $\hat G_i$.

  \item \textbf{\textcolor{blue}{Send Garbled Circuits and Commitments to the Labels and $\mu$}:}
         $\cdsS$ samples $r^*$ and computes $c^* = \com_{\rho}(\mu; r^*)$ and
         $c^{i,j}_b = \com_{\rho}(\ell^{i,j}_b; r^{i,j}_b)$.

         Send $\smallset{\hat G^i}_{i \in [2\secpar]}$ and
         $(c^*, \smallset{c^{i,j}_b}_{i\in [2\secpar], j \in [m], b \in \bits})$ to the receiver $\cdsR$.

   \item \textbf{\textcolor{blue}{Proof of Consistency}:}
   $\cdsS$ proves via ZK protocol that (a) $c^*$ is a valid commitment to
     $\mu$, (b) every $\hat G^i$ is a valid garbling of $G_{x,\mu}$ with labels
     $\smallset{\ell^{i,j}_b}_{j \in [m], b \in \bits}$, and (c) $c^{i,j}_b$ is a
     valid commitment to $\ell^{i,j}_b$.

   \item \textbf{Checks:} $\cdsR$ performs the following checks:
     \begin{itemize}
     \item If the ZK proof in the previous step is not accepting, $\cdsR$
       aborts.

     \item If there is $i \in \Lambda$ and $j \in [m]$, such that,
       $c^{i,j}_{\sigma^{i,j}} \ne \com_{\rho}(\tilde \ell^{i,j}, \tilde
       r^{i,j})$, $\cdsR$ aborts and outputs $\err_1$.

     \item If for every $i \not\in \Lambda$, there exists $j \in [m]$, such
       that, $c^{i,j}_{\sigma^{i,j}} \ne \com_{\rho}(\tilde \ell^{i,j},
       \tilde r^{i,j})$, $\cdsR$ aborts and outputs $\err_2$.
     \end{itemize}

   \item \textbf{Output:} If $\cdsR$ does not abort, there must exist $i \not \in \Lambda$
     such that, for all $j \in [m]$,
     $c^{i,j}_{\sigma^{i,j}} = \com_{\rho}(\tilde \ell^{i,j}, \tilde
     r^{i,j})$. Evaluate the $i$'th garbled circuit $\hat G^i$ to get
      $\mu' = \GEval(\hat G^i, \smallset{\tilde \ell^{i,j}}_{j\in[m]})$,
     and output $x',\mu'$. %

 \end{enumerate}

 \caption{\small The verifiable CDS Scheme in $\fpot$-hybrid model. The steps in color involve communication while the others only involve local computation. }\label{fig:cds-scheme}\label{fig:cdsprot}
 \end{mdframed}
 }
 \end{figure}

\begin{lemma}\label{lem:rcvrsim}
  There is an efficient simulator against a malicious receiver.
\end{lemma}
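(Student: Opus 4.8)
The plan is to exhibit a straight-line, polynomial-time simulator $\Sim$ that, playing the roles of $\fpot$ and of the honest sender $\cdsS$ towards a malicious receiver $\adv = \cdsR^*$, interacts with $\fcds$ and reproduces $\adv$'s output. First, $\Sim$ receives $(\text{Input},sid,x)$ from $\fcds$; in the Preamble it receives Naor's first message $\rho$ from $\adv$ and returns $x$. In the OT step $\Sim$ acts as $\fpot$: it reads all of $\adv$'s choice bits $\{\sigma^{i,j}\}_{i\in[2\secpar], j\in[m]}$, samples fresh pairs $(\tilde\ell^{i,j},\tilde r^{i,j})$, and hands them back as the OT outputs -- these are later retrofitted as the label and decommitment that $\adv$ ``chose''. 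Write $\sigma^i=(\sigma^{i,1},\dots,\sigma^{i,m})$.

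$\Sim$ then extracts a witness. It tests whether $\Rel_\Lang(x,\sigma^i)=1$ for some $i$; if so, say for $i^\star$, it queries $\fcds$ with $\sigma^{i^\star}$, else with $\sigma^1$. It receives $\mu'$ and sets $\hat\mu:=\mu'$ if $\mu'\neq\bot$ and $\hat\mu:=0$ otherwise. Now $\Sim$ completes the protocol exactly as an honest sender holding message $\hat\mu$ would: it garbles $\hat G^i=\Garb(G_{x,\hat\mu};\gamma_i)$ with the wire-$j$ label for bit $\sigma^{i,j}$ pinned to $\tilde\ell^{i,j}$; it sets $c^*=\com_\rho(\hat\mu;r^*)$, $c^{i,j}_{\sigma^{i,j}}=\com_\rho(\tilde\ell^{i,j};\tilde r^{i,j})$, and $c^{i,j}_{1-\sigma^{i,j}}=\com_\rho(\ell^{i,j}_{1-\sigma^{i,j}};r^{i,j}_{1-\sigma^{i,j}})$ with fresh coins; it sends the garbled circuits and commitments; and it runs the \emph{honest} ZK prover for the consistency statement, which holds because every object was built honestly w.r.t.\ $\hat\mu$. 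Finally $\Sim$ outputs whatever $\adv$ outputs. Note that $\Sim$ performs only $2\secpar$ membership tests for $\Rel_\Lang$, $2\secpar$ garblings, $O(\secpar m)$ commitments and one honest ZK proof, and never rewinds $\adv$, so the whole argument applies verbatim against a quantum $\cdsR^*$.

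It remains to show $\mac_{\Pi,\adv}\qc\mac_{\fcds,\Sim}$. First, reordering the honest sender's sampling so that it too fixes the active OT inputs only after seeing the choice bits is a perfect rewrite (the garbled circuits are transmitted only after the OT step, and the inactive OT inputs are never delivered to $\adv$), so we may assume the honest sender also pins $\ell^{i,j}_{\sigma^{i,j}}=\tilde\ell^{i,j}$. Condition on the transcript through the OT step, in particular on $\{\sigma^i\}_i$. On the event that some $\sigma^{i^\star}$ is a valid witness, $\Sim$ learns the true $\mu$ from $\fcds$ and its continuation is \emph{identically} distributed to the honest sender's. On the complementary event, $\fcds$ returns $\bot$, $\Sim$ uses $\hat\mu=0$, and one passes from the honest sender (message $\mu$) to $\Sim$ (message $0$) by a standard sequence of hybrids justified by: the post-quantum zero-knowledge property of the consistency proof (to route through a simulated proof while the statement is momentarily false); computational hiding of Naor's commitment (to switch $c^*$, and the commitments to the inactive labels, between the $\mu$- and $0$-settings); and privacy of Yao's garbled circuits (to switch each $\hat G^i=\Garb(G_{x,\mu})$ to $\Garb(G_{x,0})$) -- the last applies because in batch $i$ the receiver only ever evaluates on input $\sigma^i$, and $G_{x,\mu}(\sigma^i)=G_{x,0}(\sigma^i)=\bot$ since $\sigma^i$ is not a witness, so $(\hat G^i,\{\tilde\ell^{i,j}\}_j)$ carries no information about $\mu$. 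Averaging over the two events gives computational indistinguishability. On both sides the receiver's checks never fire, because each delivered $(\tilde\ell^{i,j},\tilde r^{i,j})$ is by construction a valid opening of $c^{i,j}_{\sigma^{i,j}}$; in particular $\Sim$ neither needs nor is able to learn $\adv$'s secret cut-and-choose set $\Lambda$.

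I expect the crux to be the \emph{completeness} of this witness extraction: we must rule out that $\adv$ recovers $\mu$ even though none of the $2\secpar$ strings $\sigma^i$ is a valid witness. This is exactly what garbled-circuit privacy buys us -- the $2\secpar$ batches use independent randomness, so $\adv$'s joint view reveals only $\{G_{x,\mu}(\sigma^i)\}_i$, which equals $\{\bot\}_i$ precisely when no $\sigma^i$ is a witness. Making this precise, in particular threading the hybrid over $2\secpar$ garbled circuits together with the $O(\secpar m)$ commitments to the inactive labels cleanly through the garbled-circuit simulator, is where the care is required.
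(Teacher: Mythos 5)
Your proposal is correct and follows essentially the same route as the paper's proof: intercept the receiver's OT choice bits $\{\sigma^i\}$, query $\fcds$ with any $\sigma^{i}$ that is a valid witness (obtaining $\mu$) and otherwise run the sender honestly on $(x,0)$, with indistinguishability established by the standard hybrid chain over ZK simulation, Naor hiding of $c^*$ and the inactive-label commitments, and garbled-circuit security (which applies exactly because every evaluated input $\sigma^i$ is a non-witness, so all evaluations yield $\bot$). The "reordering" of the honest sender's sampling and the dummy query with $\sigma^1$ in the no-witness case are presentational variations with no substantive difference.
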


\begin{proof}
  The simulator $\simulator$ interacts with $\cdsR^*$, receives a string $\rho$ from
  $\cdsR^*$ in Step $1$, and intercepts
  the OT queries $(\sigma^{1},\ldots,\sigma^{2\secpar})$ in Step $4$.
  \begin{itemize}
    \item \textbf{Case $1$. $R_{\Lang}(x,\sigma^{i})=1$ for some $i$.}
    Send $(\text{Witness},sid, \sigma^{i})$ to the CDS functionality and receive
    $\mu$. Simulate the rest of the protocol honestly using the CDS sender input
    $(x,\mu)$.
    \item  \textbf{Case $2$. $R_{\Lang}(x,\sigma^{i})=0$ for all $i$.}
    Simulate the rest of the protocol honestly using the CDS sender input
    $(x,0)$.
  \end{itemize}

  We now show, through a sequence of hybrids, that this simulator produces a view that is computationally
  indistinguishable from
  that in the real execution of $\cdsS(x,\mu)$ with $\cdsR^*$.

  \medskip\noindent
  \textit{Hybrid $0$.} This corresponds to the real execution of the protocol where the
  sender has input $(x,m)$. The view of $\cdsR^*$ consists of
  $$ \bigg[ \rho,
           \smallset{\hat G^i, \tilde \ell^{i,j}, \tilde r^{i,j}, c^{i,j}_b}_{i\in[2\secpar],j\in[m],b\in\{0,1\}},
           c^*,
           \tau_{\mathsf{ZK}}    \bigg]$$
  where $\rho$ is the message sent by $\cdsR^*$ in Step $1$, the strings $\tilde \ell^{i,j}$ and $\tilde r^{i,j}$
  are received by $\cdsR^*$ from the OT functionality in Step $4$, the garbled circuits $\hat G^i$ and
  the commitments $c^{i,j}_b$ and $c^*$ in Step $5$, and $\tau_{\mathsf{ZK}}$ is the transcript of the ZK
  protocol between $\cdsS$ and $\cdsR^*$ in Step $6$. (See the protocol in Figure~\ref{fig:cdsprot}).

  \medskip \noindent
  \textit{Hybrid $1$.}
  This is identical to hybrid $0$ except that we run the simulator to intercept the OT queries
  $(\sigma^{1},\ldots,\sigma^{2\secpar})$ of $\cdsR^*$.
  The rest of the execution remains the same. Of course, the transcript produced is identical to that in
  hybrid $0$.

  \medskip \noindent
  \textit{Hybrid $2$.}
  In this hybrid, we replace the transcript $\tau_{\mathsf{ZK}}$ of the zero-knowledge
  protocol with a simulated transcript. This is indistinguishable from hybrid $1$ by
  (post-quantum) computational zero-knowledge. Note that generating this hybrid does not require
  us to use the randomness underlying the commitments $c^{i,j}_{1-\sigma^{i,j}}$ and $c^*$.
  (The randomness underlying $c^{i,j}_{\sigma^{i,j}}$ {\em are} revealed as part of the
  OT responses to $\cdsR^*$.)

  \medskip \noindent
  \textit{Hybrid $3$.}
  In this hybrid, we replace half the commitments, namely $c^{i,j}_{1-\sigma^{i,j}}$, as well as
  $c^*$ with commitments of $0$. This is indistinguishable from hybrid $2$ by
  (post-quantum) computational hiding of Naor commitments.

  \medskip \noindent
  \textit{Hybrid $4$.}
  In this hybrid, we proceed as follows.
  If the simulator is in case $1$, that is $R_{\Lang}(x,\sigma^i) = 1$ for some $i$,
  proceed as in hybrid $3$ with no change. On the other hand,
  if the simulator is in case $2$,  that is $R_{\Lang}(x,\sigma^i) = 0$ for all $i$,
  replace the garbled circuits with simulated garbled circuits that always output $\bot$ and
  let the commitments $c^{i,j}_{\sigma^{i,j}}$ be commitments of the simulated labels.
  This is indistinguishable from hybrid $3$ where the garbled circuits are an honest
  garbling of $G_{x,\mu}$ because of the fact that all the garbled evaluations output $\bot$ in hybrid $3$,
  and because of the post-quantum security of the
  garbling scheme.

  Hybrids $5$--$7$ undo the effects of hybrids $2$--$4$ in reverse.

  \medskip \noindent
  \textit{Hybrid $5$.}
  In this hybrid, we replace the simulated garbled circuit with the real garbled circuit for the circuit $G_{x,0}$.
  This is indistinguishable from hybrid $4$ because of the fact that all the garbled evaluations output $\bot$ in this hybrid,
  and because of the post-quantum security of the
  garbling scheme.

  \medskip \noindent
  \textit{Hybrid $6$.}
  In this hybrid, we let all commitments be to the correct labels and messages. This is indistinguishable from hybrid $5$ by
  (post-quantum) computational hiding of Naor commitments.

  \medskip \noindent
  \textit{Hybrid $7$.}
  In this hybrid, we replace the simulated ZK transcript
  with the real ZK protocol transcript. This is indistinguishable from hybrid $7$ by
  (post-quantum) computational zero-knowledge.

  This final hybrid matches exactly the simulator. This finishes the
  proof.
\end{proof}
\begin{lemma}\label{lem:sendsim}
  There is an inefficient statistical simulator against a malicious sender.
\end{lemma}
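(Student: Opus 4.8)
The plan is to have the (inefficient) simulator $\Sim$ run the honest receiver's code against the malicious sender $\cdsS^{*}$ essentially verbatim --- so that $\cdsS^{*}$'s view is reproduced \emph{perfectly} --- and to invoke unbounded power only for two tasks: extracting the committed message $\mu$, and deciding whether the honest receiver would abort. Concretely, $\Sim$ sends a uniform first message $\rho$ for Naor's commitment in the Preamble, plays the ideal $\fpot$ (thereby learning \emph{both} of $\cdsS^{*}$'s input pairs $\{(\hat\ell^{i,j}_{b},\hat r^{i,j}_{b})\}$ to the parallel OT), receives $x$, the garbled circuits $\{\hat G^{i}\}$ and the commitments $c^{*},\{c^{i,j}_{b}\}$, and runs the honest ZK verifier on the consistency proof. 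If the verifier rejects, or $\cdsS^{*}$ misbehaves syntactically, $\Sim$ instructs $\fcds$ to make the receiver abort. Otherwise $\Sim$ brute-force opens $c^{*}$ to its committed value $\mu$, sets $x$ to the Preamble message, sends $(\text{Send},sid,(x,\mu))$ to $\fcds$, and then samples a cut-and-choose set $\Lambda\subseteq[2\secpar]$ and test strings $\{s^{i}\}_{i\in\Lambda}$ from exactly the honest receiver's distribution; it checks whether the $\err_{1}$ condition fires, i.e.\ whether $\com_{\rho}(\hat\ell^{i,j}_{s^{i}_{j}};\hat r^{i,j}_{s^{i}_{j}})\neq c^{i,j}_{s^{i}_{j}}$ for some $i\in\Lambda$, $j\in[m]$, and if so instructs $\fcds$ to make the receiver output $\err_{1}$. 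The brute-force opening of $c^{*}$ is the only inefficient step; identifying ``bad'' batches and sampling $\Lambda,\{s^{i}\}$ are polynomial-time.

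The analysis rests on three points. (i) $\cdsS^{*}$'s view is \emph{identically} distributed in the two worlds: the only receiver-generated objects it sees are $\rho$ and the ZK-verifier messages, which $\Sim$ produces as $\cdsR$ does, while the honest receiver's OT choice bits and its set $\Lambda$ are perfectly hidden inside $\fpot$. (ii) By statistical binding of Naor's commitment (failing with negligible probability over the honest $\rho$) and soundness of the ZK protocol (which applies against the efficient $\cdsS^{*}$), we may condition --- at the cost of negligible statistical distance --- on the event that \emph{whenever the ZK verifier accepts}, the proved statement is true: $c^{*}$ commits to a unique $\mu$, each $\hat G^{i}$ is a correct garbling of $G_{x,\mu}$, and each $c^{i,j}_{b}$ commits to the genuine label $\ell^{i,j}_{b}$. (iii) Crucially, the honest receiver's $\err_{1}$ decision is a function of $\Lambda$, $\{s^{i}\}_{i\in\Lambda}$, and $\cdsS^{*}$'s messages only --- it does \emph{not} depend on the receiver's secret input $w$ --- so $\Sim$, drawing $\Lambda$ and $\{s^{i}\}$ fresh but from the correct distribution (which is hidden from $\cdsS^{*}$), reproduces the abort decision with the right probability and the right joint distribution with $\cdsS^{*}$'s state (including matching the abort symbol, since $\Sim$ knows whether ZK rejected or $\err_{1}$ fired).

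What remains, and what I expect to be the main obstacle, is a cut-and-choose argument showing that whenever the real receiver does not abort it outputs exactly the ideal value $\mu'$ ($=\mu$ if $\Rel_{\Lang}(x,w)=1$ and $\bot$ otherwise). Call a batch $i$ \emph{bad} if the sender's OT input disagrees with a committed label in some coordinate. If there are $t$ bad batches, each one that lands in $\Lambda$ survives all of the receiver's checks with probability at most $1/2$, so the probability that no $\err_{1}$ abort occurs while at least $\secpar/4$ batches are bad is at most $(3/4)^{\secpar/4}$, which is negligible; and $|[2\secpar]\setminus\Lambda|>\secpar/2$ except with negligible probability (Chernoff). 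Off the union of these and the earlier negligible events, if the receiver does not output $\err_{1}$ then there is a non-bad batch $i^{*}\notin\Lambda$, whose OT outputs in all $m$ coordinates equal the genuine labels $\ell^{i^{*},j}_{w_{j}}$ (using binding to tie the OT inputs to the committed, hence genuine, labels), so the receiver does not output $\err_{2}$ either and, by correctness of garbling, evaluates $\hat G^{i^{*}}$ to $G_{x,\mu}(w)$ --- precisely the ideal output. Summing the negligible error terms (binding failure, ZK-soundness failure, a $c^{*}$ with no unique opening, $\geq\secpar/4$ bad batches without an $\err_{1}$ abort, and $|[2\secpar]\setminus\Lambda|\leq\secpar/2$) gives $\mac_{\Pi,\cdsS^{*}}\dmm\mac_{\fcds,\Sim}$, with $\Sim$ efficient apart from opening $c^{*}$, as required.
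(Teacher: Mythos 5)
Your proof plan is correct and takes essentially the same approach as the paper: the unbounded simulator plays the honest receiver together with the ideal $\fpot$, runs the honest ZK verifier, brute-forces the opening of $c^*$ to extract $\mu$, and then argues via statistical binding, ZK soundness, and a cut-and-choose tail bound that the receiver's output matches the ideal one. The cut-and-choose estimate (each bad batch escapes with probability at most $3/4$, so $\ge\secpar/4$ bad batches escape with probability at most $(3/4)^{\secpar/4}$) is sound; the paper uses the same idea with a slightly different threshold ($\secpar$ rather than $\secpar/4$), which is immaterial.

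The one place where you do something genuinely more careful than the paper's write-up is matching the $\err_1$ abort probability. The paper's simulator decides to send $(x,\bot)$ versus $(x,\mu)$ purely from the \emph{count} of bad batches. But with, say, a single bad batch, the honest receiver outputs $\err_1$ with probability at least $1/4$ even though the count is below the paper's threshold, so the paper's simulator sends $(x,\mu)$ in a situation where the real receiver aborts with non-negligible probability. This case is not discussed in the paper's two-case analysis. Your simulator instead draws $\Lambda$ and $\set{s^i}_{i\in\Lambda}$ from the honest distribution and reproduces the $\err_1$ decision directly, which is valid precisely because (as you note) the $\err_1$ test is independent of $w$ and of the sender's view. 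The $\err_2$ event you correctly argue away as negligible (off the negligible failure events, some non-bad batch lands outside $\Lambda$). The only thing to be slightly careful about is the bookkeeping of the $\fcds$ interaction: as stated you send $(\text{Send},sid,(x,\mu))$ before deciding whether $\err_1$ fires, so you should either postpone that message until after the check, or explicitly appeal to the simulator's control over output delivery to the dummy receiver (the paper treats this informally as well; all abort symbols are effectively $\bot$ in this model).
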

\begin{proof}
  The simulator $\simulator$ interacts with $\cdsS^*$ as follows:
  \begin{itemize}
    \item Sending a string $\rho$ to $\cdsS^*$ in Step $1$, as in the protocol;
    \item {\em Intercept} the OT messages $(\ell^{i,j}_0,r^{i,j}_0)$ and $(\ell^{i,j}_1,r^{i,j}_1)$
    from $\cdsS^*$ in Step $4$.
    \item Run the rest of the protocol as an honest receiver $\cdsR$ would. If the
  verifier of the ZK proof rejects, send $(x,\bot)$ to the ideal functionality and halt.
    \item  Label the $i$-th
  garbled instance {\em bad} if for some $j \in [m]$ and $b\in \{0,1\}$,
  the label $\ell^{i,j}_b$  together with the decommitment $r^{i,j}_b$ is not
  consistent with the commitment $c^{i,j}_b$.
  \begin{itemize}
    \item If more than $\secpar$ garbled instances are bad, send $(x,\bot)$ to the ideal functionality and halt.
    \item If not, extract $\mu$ from $c^*$ using unbounded time, and send $(x,\mu)$ to the ideal functionality and halt.
  \end{itemize}
  \end{itemize}
  The transcript generated by $\simulator$ is identical to the one generated in
  the real world where $\cdsR$ on input $w$ interacts with $\cdsS^*$. It remains to
  analyze the output distribution of $\cdsR$ in the simulation vis-a-vis the real world.

  We split the analysis into two cases.
  \begin{enumerate}
    \item If more than $\secpar$ garbled instances are bad, $\simulator$ will send
    $(x,\bot)$ to the ideal functionality; on the other hand, the receiver will also output $\bot$ except
    with probability at most $2^{-\secpar/2}$, since the expected number of bad garbled circuits in $\Lambda$ is
    at least $\secpar/2$ and the probability that all of them check out is $(1/2)^{\secpar/2}$.
    \item If
    fewer than $\secpar$ garbled instances are bad, $\simulator$ will extract $\mu$ and send
    $(x,\mu)$ to the ideal functionality; on the other hand, in the real world, (1) at least
    one garbled instance is not bad; (2) since the ZK proof checked out,
    we know that all garbled circuits contain the same circuit $G_{x,\mu}$ with the correct
    labels committed in $c^{i,j}_b$; and (3) if all the commitment checks pass, the output of the garbled
    evaluation must be $\mu$.
  \end{enumerate}
 Thus, we have that the output distributions of the receiver are negligibly close between the
 simulation and the real world, finishing up the proof.
\end{proof}

 \begin{lemma}\label{lem:verifiability}
    The protocol is verifiable.
 \end{lemma}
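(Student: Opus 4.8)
The plan is to follow the sketch from the technical overview: exhibit the verifier $\Ver$ and then check the two properties of Definition~\ref{def:ver}. The honest sender's proof $\pi$ will be the randomness $r^*$ that it uses in Step~5 to form the commitment $c^* = \com_\rho(\mu; r^*)$. Given the classical transcript $\tau$ — which contains the Naor first message $\rho$, the statement $x$ sent in the preamble, the garbled circuits $\{\hat G^i\}$, the label commitments $\{c^{i,j}_b\}$, and $c^*$ — the algorithm $\Ver(\tau, x, \mu, \pi)$ will accept iff (i) the value $x$ in its input matches the statement recorded in $\tau$, and (ii) $\pi$ is a valid opening of $c^*$ to $\mu$ relative to $\rho$. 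Correctness is then immediate: an honest $\cdsS(x,\mu)$ produces exactly such a $c^*$ and can place $r^*$ on its special output tape, so $\Ver(\tau, x, \mu, r^*)=1$ regardless of $\cdsR$'s messages.

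For binding, I would fix $\secpar$, an unbounded malicious sender $\adv_\secpar$, and a witness $w$, run the experiment of Definition~\ref{def:ver} against the honest $\cdsR(1^\secpar,w)$, and condition on the event that $\cdsR$ does not abort and $\adv_\secpar$ outputs $(x,\mu,\pi)$ with $\Ver(\tau,x,\mu,\pi)=1$. The goal is to show that then, except with negligible probability, $\cdsR$'s output $\mu'$ equals $G_{x,\mu}(w)$, which is exactly consistency since $G_{x,\mu}(w)=\mu$ when $\Rel_\Lang(x,w)=1$ and $\bot$ otherwise. The argument chains three observations. (1) Since $\cdsR$ does not abort, the Step~6 ZK verifier accepted; because the ZK step is instantiated with a \emph{proof} (statistically sound against unbounded provers, e.g.\ Watrous' construction on top of Naor's statistically-binding commitment), except with negligible probability the proved statement is true, so $c^*$ is a valid commitment to some $\hat\mu$, each $\hat G^i$ is a valid garbling of $G_{x,\hat\mu}$ with label set $\{\ell^{i,j}_b\}$, and each $c^{i,j}_b$ is a valid commitment to $\ell^{i,j}_b$. (2) $\Ver(\tau,x,\mu,\pi)=1$ means $\pi$ opens $c^*$ to $\mu$, and since the honest $\cdsR$ sampled $\rho$ at random, Naor's commitment is statistically binding, so $\hat\mu=\mu$ and hence each $\hat G^i$ garbles $G_{x,\mu}$. (3) Since $\cdsR$ does not output $\err_2$, there is an index $i^\star\notin\Lambda$ with $c^{i^\star,j}_{\sigma^{i^\star,j}} = \com_\rho(\tilde\ell^{i^\star,j},\tilde r^{i^\star,j})$ for every $j\in[m]$; statistical binding forces $\tilde\ell^{i^\star,j}=\ell^{i^\star,j}_{\sigma^{i^\star,j}}$, and since $i^\star\notin\Lambda$ we have $\sigma^{i^\star}=w$, so $\tilde\ell^{i^\star,j}=\ell^{i^\star,j}_{w_j}$ for all $j$. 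This $i^\star$ is one of the indices used in the Output step, so $\cdsR$ outputs $\mu' = \GEval(\hat G^{i^\star},\{\tilde\ell^{i^\star,j}\}_{j}) = \GEval(\hat G^{i^\star},\{\ell^{i^\star,j}_{w_j}\}_{j})$, which by perfect correctness of the garbling scheme equals $G_{x,\mu}(w)$. A union bound over the ZK soundness error and the negligible-over-$\rho$ failure of Naor binding then finishes the argument.

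The proof is essentially transcript bookkeeping, so there is no deep obstacle; the point that needs care is that the ZK step must be \emph{statistically} sound, since the adversary in the binding game is unbounded — this is precisely why the construction uses a zero-knowledge proof built from a statistically-binding commitment rather than an argument. I would also emphasize that, unlike in the simulation lemmas, the cut-and-choose is not invoked probabilistically here: the bare fact that $\cdsR$ does not output $\err_2$ already hands us a surviving batch $i^\star\notin\Lambda$ whose label openings all check out, and ZK soundness then guarantees that this batch carries a faithful garbling of $G_{x,\mu}$ tied to the value pinned down by $c^*$. The whole content is that, although the OT inputs never enter $\tau$, the garbled circuits, the label commitments, and $c^*$ that do appear in $\tau$ are — through the Step~6 proof — enough to determine the output of any non-aborting honest receiver.
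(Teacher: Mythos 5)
Your proposal is correct and follows essentially the same route as the paper: the verifier checks that $\pi=r^*$ opens $c^*$ to $\mu$, and binding follows by combining statistical soundness of the Step~6 ZK proof, statistical binding of Naor's commitment under the receiver-chosen $\rho$, and the existence (from the $\err_2$ check) of a surviving batch $i^\star\notin\Lambda$ whose labels correspond to $w$, so that garbled evaluation yields $G_{x,\mu}(w)$. Your write-up is in fact more explicit than the paper's terse argument about exactly which statistical properties are invoked and why unbounded adversaries are handled, but there is no difference in substance.
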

 \begin{proof}
   We first construct a verification algorithm $\Ver$.
   \begin{itemize}
    \item The transcript $\tau$ of classical messages consists of $\rho,
      x, \smallset{\hat G^i}_{i \in [2\secpar]}, c^*,
      \smallset{c^{i,j}_b}_{i \in [2\secpar], j\in[m], b \in \bits}$.
      \fsnotem{what is $r_S$? Sender's randomness? how to define when
      sender is quantum?} \vnotem{we are talking here of the honest sender who is classical}

    \item At the end of the protocol, $\cdsS$ outputs $(x, \mu, r^*)$ on its special output tape.

    \item The verification algorithm $\Ver(\tau, x, \mu', r') = 1$ iff $c^*=\com_{\rho}(\mu'; r')$.
   \end{itemize}

 We first claim that for honest $\cdsS$ and $\cdsR$ with $(x,w) \in \mathcal{R}_{\mathcal{L}}$, we have that $\Ver(\tau,x,\mu,r) = 1$.
 Since all parties in the protocol are honest the input $x$ in $\tau$ is the same as the one output by $\cdsS$ and we have that $c^*$ is the commitment to the honest message using the correct randomness, so $\Ver$ outputs $1$.

 To show binding, assume that the verification passes and the receiver does not abort. Then, we know that there is at least
 one $i \notin \Lambda$ such that the $i$-th garbled circuit+input pair is correct and the circuit is the garbling of $G_{x,\mu}$. The verifier will evaluate the circuit on input $w$ and obtain either $\bot$ when $R_{\Lang}(x,w) = 0$ or $\mu$ when $R_{\Lang}(x,w)=1$, exactly as required.
 \end{proof}

\subsection{Extractable Commitment from CDS}
\label{sec:cds-to-extcom}

\begin{theorem}\label{thm:ecom}
  Assume the existence of \pqOWF{}.
  There is a commitment protocol $\inter{C, R}$ that
  \cqsa-emulates $\fsocom$ with efficient simulators.%
 \end{theorem}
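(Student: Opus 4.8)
The plan is to assemble $\inter{C,R}$ from two pieces already in hand: the verifiable CDS protocol of Theorem~\ref{thm:vercds}, with its $\fpot$ calls instantiated in the plain model by the unbounded-simulation parallel-OT protocol of Section~\ref{sec:parallel-ot} (legitimate by the modular composition theorem, Theorem~\ref{thm:sacomposition}, and yielding a CDS with two-sided unbounded simulation that is in particular input-indistinguishable and still verifiable), together with the zero-knowledge equivocation idea underlying $\Pi_{\socom}$ from Section~\ref{sec:socom}. Concretely: (i) \emph{trapdoor setup} --- $R$ sends a Naor commitment $\hat c$ to $0$ and proves in ZK that $\hat c$ opens to $0$; (ii) \emph{commit phase} --- $C$ and $R$ run the verifiable CDS for the language $\Lang_{\com}=\set{(\hat c',b'):\exists\,r',\ \hat c'=\com_{\rho}(b';r')}$, with the CDS-sender playing input $x=(\hat c,1)$ and message $\mu$ equal to the string to be committed, and the CDS-receiver playing an all-zeros (hence invalid, since $\hat c$ opens to $0$) witness $w$, so that the CDS output is $\bot$; here the transcript $\tau$ of this CDS execution contains a statistically-binding Naor commitment $c^{\ast}=\com_{\rho}(\mu;r^{\ast})$, which is the actual binding commitment, with $\Ver(\tau,(\hat c,1),\mu,r^{\ast})=1$; (iii) \emph{decommit phase} --- on request $I$, $C$ sends $\mu_I$ and gives one ZK argument that there exist $\mu,\pi$ with $\Ver(\tau,(\hat c,1),\mu,\pi)=1$ and $\mu|_I=\mu_I$. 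All ingredients follow from \pqOWF{} (Naor's commitment, Watrous' ZK, verifiable CDS), and all invocations are sequential, so the stand-alone composition theorem applies.

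For efficient extraction against a malicious committer $C^{\ast}$, the simulator runs $R$ with a \emph{trapdoor}: it sends $\hat c=\com_{\rho}(1;r)$ and simulates the (now false) trapdoor ZK via Watrous' simulator, then runs the honest CDS receiver with the \emph{valid} witness $w=r$. By (verifiable) correctness of the CDS its output is the committed string $\mu$, which the simulator forwards to $\fsocom$ --- an efficient extraction, with no brute force. Indistinguishability of this execution from the real one reduces to (a) zero knowledge of the trapdoor argument and (b) computational hiding of Naor's commitment (the trapdoor randomness $r$ is used only as a CDS-receiver choice string, which the OT hides from $C^{\ast}$; here one also uses that the cut-and-choose structure of the CDS makes the receiver's abort behaviour essentially independent of its witness). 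It then remains to check that the extracted $\mu$ equals whatever $C^{\ast}$ can later decommit to: if $R$ accepts in the decommit phase then, by soundness of the decommit argument, there exist $\mu^{\dagger},\pi^{\dagger}$ with $\Ver(\tau,(\hat c,1),\mu^{\dagger},\pi^{\dagger})=1$ and $\mu^{\dagger}|_I=\mu_I$; since $\Ver$ only verifies a decommitment of the statistically-binding $c^{\ast}$, the value $\mu^{\dagger}$ is unique; and by the \emph{binding} clause of verifiability (Definition~\ref{def:ver}, which holds even against unbounded adversaries), applied with the valid witness $w=r$ and a non-aborting $\cdsR$, the honest CDS receiver's output must be exactly this $\mu^{\dagger}$. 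Hence the extracted value agrees with the decommitment.

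For efficient equivocation against a malicious receiver $R^{\ast}$, the simulator plays the honest $C$ but feeds the CDS the sender input $0$ in place of $\mu$; upon receiving the request $I$ it obtains $m_I$ from $\fsocom$, sends $m_I$, and simulates the decommit ZK argument. Indistinguishability goes through hybrids: first replace the decommit argument by its simulation (computational zero knowledge); then switch the CDS-sender input from $\mu$ to $0$. The switch is the crux and is exactly where input-indistinguishability of the CDS is used: by soundness of $R^{\ast}$'s trapdoor argument and statistical binding of $\hat c$, the CDS statement $(\hat c,1)$ is false, so the honest CDS always outputs $\bot$; replacing the CDS execution by its (unbounded) simulator against a malicious CDS receiver therefore yields a view that does not depend on the CDS-sender input at all, after which the input may be flipped from $\mu$ to $0$ and the steps reversed. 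The resulting $\fsocom$ simulator is \emph{efficient}, even though an unbounded CDS simulator appears in an intermediate hybrid --- precisely the point that hiding relies only on input-indistinguishability of the CDS.

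I expect the main obstacle to be the consistency argument in the malicious-committer case: showing that the value recovered through the CDS output coincides with the value the (possibly cheating) committer can open to in the decommit phase. This is exactly what the \emph{verifiability} notion was introduced for, and the proof must weave together verifiability's binding clause, the statistical binding of the inner Naor commitment $c^{\ast}$, and soundness of the decommit-phase ZK argument, while also ensuring --- via the OT's protection of the CDS-receiver's choice string and the stand-alone soundness of Watrous-style quantum rewinding invoked sequentially around the quantum CDS sub-protocol --- that swapping the receiver's trapdoor commitment from $0$ to $1$ remains undetectable to the committer.
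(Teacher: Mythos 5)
Your proposal is correct and follows essentially the same architecture as the paper (trapdoor commitment-to-$0$ with a ZK proof, verifiable CDS over the language $\Lang_{\com}$ with the sender/committer holding statement $(\hat c,1)$, equivocation via ZK simulation, extraction via trapdoor-switching to $\com(1;r)$ and running the CDS receiver with witness $r$), and the soundness argument weaving together the verifiability binding clause, statistical binding of a Naor commitment, and decommit-ZK soundness is exactly the argument in Lemmas~\ref{lem:commsim} and~\ref{lem:commrcvrsim}. There is one genuine difference in the construction, though. The paper's protocol (Figure~\ref{fig:ext-com-scheme}) adds an explicit Step~4 after the CDS: $C$ sends a second, separate Naor commitment $c^{*}=\com_{\rho^{*}}(\vec\mu;r^{*})$ and proves in ZK, \emph{during the commit phase}, that this $c^{*}$ opens to the same $\vec\mu$ that $\Ver(\tau,x,\vec\mu,\pi)=1$ attests to for the CDS transcript. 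The decommit-phase ZK then refers only to this outer $c^{*}$. You instead reuse the Naor commitment $c^{*}=\com_{\rho}(\mu;r^{*})$ that already lives \emph{inside} the CDS transcript (Step~5 of Figure~\ref{fig:cdsprot}), and move the $\Ver$-based consistency statement into the decommit-phase ZK. Both are sound: what pins down $\mu$ is the statistically-binding Naor commitment plus the verifiability binding clause, and you are careful to note (i) the cut-and-choose makes the honest CDS receiver's abort behaviour essentially witness-independent, which is needed when swapping $w=0$ for $w=r$, and (ii) the intermediate hybrid for hiding may use the CDS's unbounded simulator while the final $\fsocom$ simulator remains efficient. Your version is slightly leaner (one fewer commitment, one fewer ZK), while the paper's explicit Step~4 decouples the decommit phase from the CDS internals and establishes the well-definedness of the committed value already at the end of the commit phase rather than at decommit time; the underlying lemmas and hybrids are otherwise parallel.
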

\begin{proof}
The construction of our extractable commitment scheme is given in Figure~\ref{fig:ext-com-scheme}.
The protocol uses Naor's classical statistically binding commitment
  protocol and a verifiable CDS protocol $\Pi = \inter{\cdsS, \cdsR}$ that
    \cqsa-emulates
    $\fcds$ (with unbounded simulation) for $\Lang_\com$, the
    language consisting of all Naor's commtiments $(\rho,c)$ to a bit $b$:
    \begin{align*}
      \Rel_{\Lang_\com} ((\rho,c,b), r) = 1 \text{ iff } c = \com_{\rho}(b;r) ~.
    \end{align*}
  For a more
  detailed description of these ingredients, see Section~\ref{sec:cryptoprelims} and \ref{sec:cdsprot}.

In~\Cref{lem:commsim} (resp.~\Cref{lem:commrcvrsim}, we show that the protocol has an efficient simulator for a corrupted sender (resp. receiver).  %
\end{proof}
\vspace{-3ex}
\begin{figure}[!h]
\begin{mdframed}[style=figstyle,innerleftmargin=10pt,innerrightmargin=10pt]
\textsf{Parties:} The committer $C$ and the receiver $R$.\\
\textsf{Inputs:} $C$ gets a message vector $\vec{\mu}=(\mu_1,\ldots,\mu_{\ell(n)})$ and $R$ gets $1^n$.
\begin{center}
\underline{\textsf{Commitment Phase}}
\end{center}
 \begin{enumerate}
 \item \textbf{Preamble.} $C$ sends a random string $\rho$ to $R$, and $R$ sends a random string $\rho^*$ to $C$, as the first message of the Naor commitment scheme.
 \item \textbf{Set up a Trapdoor Statement.}
 \begin{itemize}
   \item
   $R$ sends a Naor commitment $c = \com_{\rho}(0;r)$.
   \item
   $R$ proves to $C$ using a ZK protocol that $c$ is a commitment to $0$, that is,
   $((c,\rho,0), r) \in \mathcal{R}_{\mathcal{L}_{\mathsf{com}}}$.
   If the ZK verifier rejects, $C$ aborts.
 \end{itemize}

 \item \textbf{CDS.}
 $C$ and $R$ run the CDS protocol $\inter{\cdsS, \cdsR}$ for the language $\mathcal{L}_{\com}$ where $C$ acts as
 $\cdsS$ with input $x = (c,\rho, 1)$ and message $\vec{\mu}$, and $R$ acts as $\cdsR$ with input $0$. %

 $C$ aborts if $\cdsS$ aborts, else $C$ obtains the protocol transcript $\tau$ and $\cdsS$'s proof $\pi$.
 $R$ aborts if $\cdsR$ aborts, or if $\cdsR$
 outputs $(x', \vec{\mu}')$ but $x'\ne (\rho, c, 1)$.

 \item \textbf{Commit and Prove Consistency.}
 \begin{itemize}
   \item $C$ sends a Naor commitment $c^* = \com_{\rho^*}(\vec{\mu};r^*)$.
   \item $C$ proves to $R$ using a ZK protocol there exists a $\vec{\mu}$ such that $(x=(\rho, c, 1), \vec{\mu})$ is the input that $C$ used in the CDS protocol and $\vec{\mu}$ is committed in $c^*$, that is:
   $$\Ver(\tau, x, \vec{\mu}, \pi) = 1 \mbox{ and } c^* = \com_{\rho^*}(\vec{\mu}, r^*)$$
 \end{itemize}

 \item $R$ accepts this commitment if the ZK proof is accepting.
\end{enumerate}
\begin{center}
\underline{\textsf{Decommitment Phase}}
\end{center}

\begin{enumerate}
\item $R$ sends $I \subseteq [\ell]$.
 \item $C$ sends $\vec{\mu}|_I$ and proves via a ZK protocol that
   $c^*|_I$ commits to $\vec{\mu}|_I$.

 \item $R$ accepts this decommitment if the ZK proof is accepting.

 \end{enumerate}
\vspace{-3ex}
\caption{Extractable Selective-Opening-Secure Commitment Scheme}\label{fig:ext-com-scheme}
\end{mdframed}
\end{figure}

\begin{lemma}\label{lem:commsim}
There is an efficient simulator against a malicious sender. %
\end{lemma}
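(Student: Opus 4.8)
The plan is to construct an efficient extractor-simulator $\simulator$ that runs the honest receiver's code against the malicious committer $\adv$ but rigs the trapdoor statement of Step~2 so that it becomes true, turning the CDS sub-protocol into an extraction channel. Concretely: in the Preamble, $\simulator$ forwards $\rho$ and sends a uniform $\rho^*$; in Step~2 it sends $c = \com_{\rho}(1;r)$ (a commitment to $1$, not $0$) and runs the (quantum) ZK simulator to fake the proof that $c$ commits to $0$, playing the prover's role against $\adv$'s verifier; in Step~3 it runs the \emph{honest} $\cdsR$ algorithm but with the genuine witness $w = r$, which is now a valid witness for $x = (\rho, c, 1) \in \Lang_{\com}$. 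If $\cdsR$ does not abort and outputs $(x', \vec{\mu}')$ with $x' = (\rho, c, 1)$, and the Step~4 consistency proof accepts, then $\simulator$ sends $(\mathsf{Commit}, \vec{\mu}')$ to $\fsocom$; otherwise it aborts. In the decommitment phase, $\simulator$ forwards $I$, runs the honest ZK verifier on $\adv$'s opening proof, and $\tilde R$ obtains $\vec{\mu}'|_I$ from $\fsocom$ exactly when that proof accepts (and rejects otherwise). Note $\simulator$ is efficient, since $\cdsR$ (instantiated with the poly-time $\Pi_{\pot}$), $\com$, the ZK simulator, and the ZK verifiers are all poly-time; the only inefficiency in the argument lives inside reductions, where it is permitted.

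The key correctness claim is that, whenever the interaction does not abort, the extracted $\vec{\mu}'$ is the value $\adv$ is ``bound to''. First, soundness of the Step~4 ZK proof together with the statistical binding of $c^*$ (w.r.t.\ $\rho^*$) pin down a unique $\vec{\mu}^*$ committed in $c^*$ for which there exists $\pi^*$ with $\Ver(\tau, (\rho,c,1), \vec{\mu}^*, \pi^*) = 1$. Second, I would apply the Binding clause of \Cref{def:ver} to the (unbounded) adversary that runs $\adv$ as the CDS sender against $\cdsR(r)$, drives $\adv$ through Steps~4--5 while internally simulating $R$'s remaining messages, and — if the Step~4 verifier accepts — brute-force extracts $\vec{\mu}^*$ from $c^*$ and a valid $\pi^*$, outputting $((\rho,c,1), \vec{\mu}^*, \pi^*)$; since $\Rel_{\Lang_{\com}}((\rho,c,1), r) = 1$ and $\cdsR$ did not abort, Binding forces $\cdsR$'s output to satisfy $\vec{\mu}' = \vec{\mu}^*$ (that this adversary is unbounded is harmless, since Binding holds against unbounded adversaries). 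Third, soundness of the decommitment ZK proof plus statistical binding of $c^*$ force the value $\vec{\mu}|_I$ that $\adv$ actually opens to equal $\vec{\mu}^*|_I$. Chaining these, $\vec{\mu}'|_I = \vec{\mu}^*|_I = \vec{\mu}|_I$, so $\tilde R$'s output in the ideal world matches the honest receiver's output in the real world up to negligible error.

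For indistinguishability $M_{\inter{C,R}, \adv} \qc M_{\fsocom, \simulator}$, I would use a short hybrid sequence. $H_0$ is the real execution. $H_1$ replaces the Step~2 ZK proof by its simulation (post-quantum auxiliary-input ZK, with the prover's role played against $\adv$); the decommitment randomness of $c$ is no longer used there. $H_2$ switches $c$ from $\com_{\rho}(0;\cdot)$ to $\com_{\rho}(1;\cdot)$ (computational hiding of Naor's commitment). $H_3$ switches $\cdsR$'s input witness from $0$ to $r$; this is exactly CDS sender-input privacy, which follows from the (unbounded-simulation) $\cqsa$-emulation of $\fcds$ against a malicious sender, combined with the fact that $\fcds$ reveals only $(\mathsf{Open}, sid, x)$ and nothing about the receiver's witness to the sender's side, so $\adv$'s view and all of $R$'s abort decisions are independent of $\cdsR$'s input. $H_4$ changes the receiver's decommitment output to the extracted $\vec{\mu}'|_I$ from the $\cdsR(r)$ output; it is statistically close to $H_3$ by the correctness-of-extraction argument above. $H_5$ is $H_4$ rewritten with the commitment and opening routed through $\fsocom$ and $\simulator$, i.e., the ideal experiment.

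The main obstacle I anticipate is Step~3/$H_3$ and its interplay with extraction: I must switch $\cdsR$'s witness to $r$ relying only on CDS sender-input privacy (which itself comes with an \emph{unbounded} CDS simulator, hence usable only as a computational-indistinguishability statement and never executed by $\simulator$), while ensuring that \emph{after} the switch the CDS Binding property still applies — so that the honest $\cdsR(r)$'s output is provably the value committed in $c^*$ — and that $\simulator$ stays efficient even though establishing correctness of the extraction passes through an unbounded Binding adversary and brute-force extraction from $c^*$. A secondary subtlety is careful bookkeeping of the two ZK proofs (Step~2 simulated via the prover's role; Step~4 and the opening proof used via soundness) and the two Naor commitments ($c$ via hiding, $c^*$ via statistical binding), so that their roles never conflict across hybrids.
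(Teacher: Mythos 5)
Your proposal is correct and matches the paper's own proof essentially step for step: the same simulator (commit to $1$, simulate the Step-2 ZK proof, run the honest $\cdsR$ with witness $r$, extract $\vec{\mu}$ and feed it to $\fsocom$) and the same hybrid sequence (ZK simulation, Naor hiding to switch $c$ from $0$ to $1$, CDS receiver-input privacy to switch the witness, then verifiability/binding of the CDS plus soundness and statistical binding of $c^*$ to argue the extracted value agrees with what $R$ would output). Your treatment of the Binding clause via an explicit unbounded reduction is slightly more detailed than the paper's, but it is the same argument.
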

\begin{proof}
  The simulator $\simulator$ against a malicious committer $C^*$ works as follows.
  \begin{enumerate}
  \item In step $1$, proceed as an honest receiver would.
  \item In step $2$, send a Naor commitment $c = \com_{\rho}(1;r)$ (instead of $0$)
  and simulate the ZK proof.
  \item In step $3$, run the honest CDS protocol with $r$ as witness, gets $\vec{\mu}$ and sends it
  to the ideal functionality $\fsocom$.
  \item Run the rest of the protocol as an honest receiver would.
    \end{enumerate}

We now show, through a sequence of hybrids, that this simulator produces a joint distribution of a view
of $C^*$ together with an output of $R$ that is computationally
        indistinguishable from that in the real execution of $C^*$ with $R$.
In order to show this we consider the following sequence of  hybrids.

\medskip\noindent
\textit{Hybrid 0.} This corresponds to the protocol  $\qecomhzero$, where $\simulator_0$ sits between $C^*$ and the honest receiver in the real protocol and just forwards their messages. It follows trivially that
$M_{\Pi_{\ecom},C^*} \qc
    M_{\qecomhzero, \simulator_0}$.

\medskip\noindent
\textit{Hybrid 1.}
 $\simulator_1$ interacts with $C^*$ following the protocol $\qecomhone$, which is the same as $\qecomhzero$ except that $\simulator_1$ uses the ZK simulator instead of the the proof that
 $((c,\rho,0), r) \in \mathcal{R}_{\mathcal{L}_{\mathsf{com}}}$.
From the computational zero-knowledge property of the protocol, we have that
$    M_{\qecomhzero, \simulator_0} \qc
    M_{\qecomhone, \simulator_1}$.

\medskip\noindent
\textit{Hybrid 2.}
$\simulator_2$ interacts with $C^*$ following the protocol $\qecomhtwo$, which is the same as $\qecomhone$ except that $\simulator_2$ sends $c' = \com_{\rho}(1;r)$ instead of the (honest) commitment of $0$. When $\simulator_2$ simulates $\fzk$, she still sends a message that $c'$ is a valid input. It follows from computationally hiding property of Naor's commitment scheme that $M_{\qecomhone, \simulator_1} \qc
    M_{\qecomhtwo, \simulator_2}$.

\medskip\noindent
\textit{Hybrid 3.}
$\simulator_3$ interacts with $C^*$ following the protocol $\qecomhthree$, which is the same as $\qecomhtwo$ except that $\simulator_3$ now uses the private randomness $r$ as a witness that $c'$ is a commitment of $1$.

Since our protocol realizes $\func_{CDS}$, $\cdsS^*$ (controlled by $C^*$) does not behave differently depending on the input of $\cdsR$, so the probability of abort in step $3$ does not change. Notice also that  $\mathsf{Ver}(\tau,x,\vec{\mu},\pi)$ is independent of $\cdsR$'s message, so the acceptance probability of the ZK proof does not change either.

Then, if the ZK proof leads to acceptance, by the soundness of the protocol, we know that $\mathsf{Ver}(\tau,x,\vec{\mu},\pi) = 1$ and
by the binding of the commitment $c^*$, such a $\vec{\mu}$ is uniquely determined.

Finally, by the verifiability of the CDS protocol, we know that the receiver either aborts or outputs the specified $\vec{\mu}$. Thus, the outputs of the receiver $R$ in the simulated execution and the real execution must be the same in this case.

\end{proof}

\begin{lemma}\label{lem:commrcvrsim}
  There is an efficient simulator against a malicious receiver.
\end{lemma}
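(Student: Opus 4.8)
The plan is to exhibit an efficient simulator $\simulator$ that plays the honest committer $C$ towards the malicious receiver $R^*$ but (i) runs the inner CDS protocol with the dummy sender message $\vec 0$ rather than $\vec\mu$, (ii) sends $c^* = \com_{\rho^*}(\vec 0; r^*)$ instead of a commitment to $\vec\mu$, and (iii) simulates the two zero-knowledge proofs given by $C$ — the consistency proof of Step~4 and the opening proof in the decommitment phase. More precisely, $\simulator$ behaves exactly as honest $C$ in Step~1 and in checking $R^*$'s trapdoor proof in Step~2 (aborting if it is rejected); in Step~3 it runs the honest $\cdsS$ algorithm with input $\big(x=(\rho,c,1),\ \vec 0\big)$ and discards the proof $\pi$ it produces; in Steps~4--5 it sends $c^* = \com_{\rho^*}(\vec 0; r^*)$ and invokes the ZK simulator in place of the consistency proof; and in the decommitment phase, upon receiving $I$ from $R^*$, it forwards $I$ to $\fsocom$, obtains $\vec\mu|_I$, sends $\vec\mu|_I$ to $R^*$, and again invokes the ZK simulator in place of the opening proof. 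Aborts of $R^*$ are mirrored in the ideal execution in the natural way. Since all zero-knowledge proofs (and the CDS sub-protocol) are invoked sequentially, the modular composition theorem (\Cref{thm:sacomposition}) applies, so it is legitimate to reason about these sub-protocols one at a time.

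To establish $M_{\inter{C,R}, R^*} \qc M_{\fsocom, \simulator}$ I would interpolate between the real execution of $C(\vec\mu)$ with $R^*$ and the ideal execution with $\simulator$ through a short sequence of hybrids. Hybrid~$0$ is the real execution. Hybrid~$1$ replaces $C$'s Step~4 consistency proof by a simulated one; indistinguishability follows from post-quantum computational zero knowledge, and after this change the randomness $r^*$ of $c^*$ is no longer used before the decommitment phase. Hybrid~$2$ additionally simulates $C$'s decommitment-phase ZK proof (while still sending the true $\vec\mu|_I$); again indistinguishable by zero knowledge, and now $r^*$ is not used anywhere. Hybrid~$3$ replaces $c^* = \com_{\rho^*}(\vec\mu; r^*)$ by $c^* = \com_{\rho^*}(\vec 0; r^*)$; indistinguishable by the post-quantum computational hiding of Naor's commitment, as $r^*$ occurs nowhere else. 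Hybrid~$4$ changes the message that the CDS sender uses from $\vec\mu$ to $\vec 0$. At this point the committer side no longer needs $\vec\mu$, only the substring $\vec\mu|_I$, which $\simulator$ reads off $\fsocom$ after forwarding $R^*$'s request $I$; thus Hybrid~$4$ coincides with the ideal execution with $\simulator$.

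The crux is the passage from Hybrid~$3$ to Hybrid~$4$. To justify switching the CDS message I would argue as follows. Conditioned on $\simulator$ not aborting in Step~2, the (statistical) soundness of $R^*$'s trapdoor proof guarantees that $c = \com_\rho(0; r)$ for some $r$, except with negligible probability; together with the statistical binding of Naor's commitment — which holds for all but a negligible fraction of first messages $\rho$, and $\rho$ is chosen uniformly by the honest committer — this forces the CDS statement $x = (\rho,c,1)$ to lie outside $\Lang_{\com}$, again up to negligible probability. I would then invoke the security of the verifiable CDS protocol against a malicious receiver: it $\cqsa$-emulates $\fcds$ with an efficient simulator, and whenever the statement is false the receiver's output in the ideal $\fcds$ execution is $\bot$ regardless of the sender's message; hence everything $R^*$ can extract from the CDS sub-protocol (note that $R^*$ never sees the CDS proof $\pi$, which by this point has been used only inside the already-simulated Step~4 ZK proof) is computationally independent of the CDS sender's message, so we may replace $\vec\mu$ by $\vec 0$.

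The main obstacle is precisely this last hybrid. The Step~4 consistency proof binds $\vec\mu$ simultaneously to the CDS sender's input and to the commitment $c^*$, so the CDS message cannot be changed in isolation until that proof (and the decommitment proof) have been simulated and $c^*$ has been de-linked from $\vec\mu$ — which is why Hybrids~$1$--$3$ must precede Hybrid~$4$. Moreover, the argument that the CDS leaks nothing about $\vec\mu$ is not immediate: it requires chaining soundness of $R^*$'s Step~2 proof, statistical binding of Naor's commitment (to conclude the CDS statement is false), and malicious-receiver security of the CDS protocol, while being careful that $R^*$ may behave arbitrarily within the CDS. The remaining steps are routine applications of zero knowledge, computational hiding, and modular composition.
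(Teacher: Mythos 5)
Your proposal is correct and takes essentially the same approach as the paper: simulate the two ZK proofs, then switch $c^*$ to a commitment of $\vec 0$ by computational hiding, and finally swap the CDS sender message to $\vec 0$ by combining soundness of the Step~2 trapdoor proof (so the CDS statement $(\rho,c,1)$ is false) with the malicious-receiver security of the CDS protocol. The only cosmetic difference is that you split the ZK-simulation step into two hybrids and spell out the statistical-binding chain a bit more explicitly than the paper does.
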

\begin{proof}
  The simulator $\simulator$ against a malicious receiver $R^*$ proceeds as follows.
  \begin{itemize}
    \item In steps $1$ and $2$, proceed as an honest sender would.
    \item In step $3$, run the CDS protocol using a message vector $\vec{\mu} = \vec{0}$ of all zeroes.
    \item In step $4$, commit to the all-$0$ vector and produce a simulated ZK proof.
    \item During decommitment, send $I \subseteq [\ell]$ to the ideal functionality and receive $\vec{\mu}|_I$.
    Send $\vec{\mu}|_I$ to $R^*$, and simulate the ZK proof.
    \end{itemize}

  We now show, through a sequence of hybrids, that this simulator produces a view that is computationally
    indistinguishable from that in the real execution of $C(\vec{\mu})$ with $R^*$.

    \medskip\noindent
    \textit{Hybrid $0$.} This corresponds to the protocol  $\qecomhzero$, where $\simulator_0$ sits between the honest commiter $C$ and $R^*$, and it just forwards their messages. It follows trivially that
$M_{\Pi_{\ecom},C^*} \qc
    M_{\qecomhzero, \simulator_0}$.

    \medskip\noindent
    \textit{Hybrid $1$.}
     $\simulator_1$ interacts with $R^*$ following the protocol $\qecomhone$, which is the same as $\qecomhzero$ except that $\simulator_1$ uses the ZK simulator in Step $4$ and the decommitment phase.
From the computational zero-knowledge property, we have that
$    M_{\qecomhzero, \simulator_0} \qc
    M_{\qecomhone, \simulator_1}$.

    \medskip\noindent
    \textit{Hybrid $2$.}
     $\simulator_2$ interacts with $R^*$ following the protocol $\qecomhtwo$, which is the same as $\qecomhone$ except that $\simulator_2$  sets $c^*$ to be a commitment to $0$.
It follows from the computationally-hiding property of the commitment scheme that
$    M_{\qecomhone, \simulator_1} \qc
    M_{\qecomhtwo, \simulator_2}$.

    \medskip\noindent
    \textit{Hybrid $3$.}  $\simulator_3$ interacts with $R^*$ following the protocol $\qecomhthree$, which is the same as $\qecomhtwo$ except that $\simulator_3$ uses $\vec{\mu} = 0^{\ell}$ as the $\cdsS$ message.

    From the soundness of the ZK proof in Step $2$, we have that $c$ is not a commitment of $1$. In this case, by the security of CDS, $R^*$ does not receive $\vec{\mu}$, so the change of the message cannot be distinguished.

    Notice that Hybrid $3$ matches the description of the  simulator $\simulator$, and therefore
    $    M_{\qecomhtwo, \simulator_2} \qc
    M_{\fsocom, \simulator}$.
    and this
    finishes the proof of the first part of our lemma.

\end{proof}

\newcommand{\miniqcrypt}{\textsf{MiniQCrypt}}

\section{Multiparty (Quantum) Computation in MiniQCrypt} 
\label{sec:mpc}
Our quantum protocol realizing $\fsocom$ from quantum-secure OWF allows us to combine existing results and realize
secure computation of any two-party or multi-party classical
functionality as well as quantum circuit in MiniQCrypt.

\begin{theorem}\label{thm:mpc}
  Assuming that post-quantum secure one-way functions exist, for every
  classical two-party and multi-party functionality $\func$, there is
  a quantum protocol \cqsa-emulates $\func$.
\end{theorem}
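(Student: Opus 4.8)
The plan is to assemble the pieces already built in the previous sections and then compose them with a known classical secure-computation protocol in the OT-hybrid model. First I would recall that \Cref{thm:ecom} yields, from \pqOWF{}, a commitment protocol $\Pi_{\socom}$ that $\cqsa$-realizes $\fsocom$ with \emph{efficient} simulators against both a malicious committer and a malicious receiver. Plugging $\Pi_{\socom}$ into the BBCS OT protocol $\qotbbcs$, which by \Cref{thm:ot} $\cqsa$-realizes $\fot$ in the $\fsocom$-hybrid model with a straight-line simulator, and applying the modular composition theorem (\Cref{thm:sacomposition}), produces a plain-model quantum protocol $\Pi_{\ot} := \cprot{\qotbbcs}{\fsocom}{\Pi_{\socom}}$ that $\cqsa$-realizes $\fot$ under \pqOWF{}. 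Using \Cref{thm:parallel} together with the observation that parallel-$\fsocom$ is $\fsocom$, the same substitution gives a plain-model quantum protocol $\Pi_{\pot} := \cprot{\qotbbcs^{||}}{\fsocom}{\Pi_{\socom}}$ that $\cqsa$-realizes $\fpot$ with efficient (straight-line) simulators, again assuming only \pqOWF{}.

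Next I would invoke the line of work on classical secure computation in the OT-hybrid model~\cite{STOC:Kilian88,C:IshPraSah08,EC:Unruh10}: for every classical two-party or multi-party functionality $\func$ there is a classical protocol $\pi^{\fpot}$ in the $\fpot$-hybrid model that securely realizes $\func$ against malicious adversaries, and whose analysis carries over to quantum adversaries via straight-line black-box simulation (as established by Unruh~\cite{EC:Unruh10} for statistically secure OT-hybrid protocols); in particular $\pi^{\fpot}$ $\cqsa$-realizes $\func$ in the $\fpot$-hybrid model. One must be slightly careful here because our stand-alone composition theorem only allows a single subroutine call to be in progress at a time; this is precisely why we phrased the OT layer as the \emph{parallel} functionality $\fpot$ (so that a round of many simultaneous OT calls is viewed as one $\fpot$ call) and why we insisted on straight-line simulators throughout (so that parallel composition is actually available).

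Finally, applying the modular composition theorem (\Cref{thm:sacomposition}) once more to replace each invocation of $\fpot$ in $\pi$ by an invocation of $\Pi_{\pot}$ produces a quantum protocol $\cprot{\pi}{\fpot}{\Pi_{\pot}}$ in the plain model that $\cqsa$-emulates $\func$, which is the claimed statement. The only step requiring genuine care, rather than bookkeeping, is verifying that the cited OT-hybrid MPC protocols indeed admit straight-line, quantum-compatible simulators, so that both the parallel-composition step and the final substitution are sound; everything else is a direct chaining of \Cref{thm:ecom}, \Cref{thm:ot}, \Cref{thm:parallel}, and \Cref{thm:sacomposition}. The extension to protocols computing an arbitrary quantum circuit follows the same template, substituting the quantum MPC-in-the-OT-hybrid constructions of~\cite{DNS12,EC:DGJMS20} for the classical one.
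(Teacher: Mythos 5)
Your proposal is correct and follows essentially the same route as the paper: realize $\fot$ in the plain model by composing the extractable commitment of \Cref{thm:ecom} with the BBCS protocol of \Cref{thm:ot}, invoke the classical OT-hybrid MPC protocol of \cite{C:IshPraSah08} lifted to the quantum setting by Unruh \cite{EC:Unruh10}, and conclude by modular composition (\Cref{thm:sacomposition}). The only difference is presentational: you route the argument through $\fpot$ and \Cref{thm:parallel} to address the single-subroutine-call restriction explicitly, whereas the paper's proof states the composition directly in the $\fot$-hybrid model.
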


\begin{proof}
  By Theorem~\ref{thm:ot}, we readily realize $\fot$ in
  \miniqcrypt. In the $\fot$-hybrid model, any classical functionality
  $\func$ can be realized statistically by a classical protocol in the
  universal-composable model~\cite{C:IshPraSah08}. The security can be
  lifted to the quantum universal-composable model as shown by
  Unruh~\cite{EC:Unruh10}. As a result, we also get a classical
  protocol in the $\fot$-hybrid model that \sqsa emulates
  $\func$. Plugging in the quantum protocol for $\fot$, we obtain a
  quantum protocol that \cqsa-emulates $\func$ assuming existence of
  quantum-secure one-way functions.
\end{proof}

Now that we have a protocol that realizes any classical functionality
in \miniqcrypt, we can instantiate $\func_{mpc}$ used in the work of
\cite{EC:DGJMS20} to achieve a protocol for secure multi-party quantum
computation where parties can jointly evaluate an arbitrary quantum
circuit on their private quantum input states. Specifically consider a
quantum circuit $Q$ with $k$ input registers. Let $\func_Q$ be the
ideal protocol where a trusted party receives private inputs from $k$
parties, evaluate $Q$, and then send the outputs to respective
parties. We obtain the following.

\begin{theorem}\label{thm:mpqc}
  Assuming that post-quantum secure one-way functions exist, for any
  quantum circuit $Q$, there is a quantum protocol that
  $\cqsa$-emulates the $\func_Q$.
\end{theorem}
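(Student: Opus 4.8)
The plan is to obtain $\func_Q$ by composing our realization of arbitrary classical functionalities in \miniqcrypt{} (\Cref{thm:mpc}) with the secure multi-party quantum computation protocol of~\cite{EC:DGJMS20}, invoking the modular composition theorem (\Cref{thm:sacomposition}).

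First I would recall the structure of the~\cite{EC:DGJMS20} protocol: for any quantum circuit $Q$ with $k$ input registers it gives a quantum protocol $\Pi_Q$ in the $\func_{mpc}$-hybrid model — i.e.\ the parties have access to an ideal (reactive) classical multi-party computation functionality $\func_{mpc}$ — that $\cqsa$-realizes $\func_Q$ against malicious quantum adversaries corrupting any subset of the $k$ parties (dishonest majority). The only genuinely quantum operations performed by the honest parties are the preparation, routing, and Clifford/authentication-code manipulation of quantum registers; all the computational hardness, and all the reasoning about consistency of the classical control, is confined to the $\func_{mpc}$ box, which is information-theoretically sound as an ideal functionality. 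I would check that their security statement is (or can be restated as) a stand-alone simulation guarantee in the sense of \Cref{def:cqsa}, and that $\Pi_Q$ invokes $\func_{mpc}$ so that at most one sub-call is in progress at any time, as required by \Cref{thm:sacomposition}.

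Second, by \Cref{thm:mpc} — obtained by plugging our quantum OT protocol into~\cite{C:IshPraSah08,EC:Unruh10} — assuming post-quantum one-way functions there is a quantum protocol $\Gamma_{mpc}$ that $\cqsa$-realizes $\func_{mpc}$. Note $\func_{mpc}$ is itself a classical (possibly reactive) multi-party functionality, which is exactly the kind of object \Cref{thm:mpc} covers; a reactive $\func_{mpc}$ is still realizable in the $\fot$-hybrid model along the lines of~\cite{C:IshPraSah08}. Then I would form the composed protocol $\cprot{\Pi_Q}{\func_{mpc}}{\Gamma_{mpc}}$, replacing each invocation of the ideal classical MPC by an execution of $\Gamma_{mpc}$. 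By the second bullet of \Cref{thm:sacomposition}, since $\Pi_Q$ $\cqsa$-realizes $\func_Q$ in the $\func_{mpc}$-hybrid model and $\Gamma_{mpc}$ $\cqsa$-realizes $\func_{mpc}$, the composed protocol $\cqsa$-realizes $\func_Q$. It is efficient, since $\Gamma_{mpc}$ and all quantum subroutines of $\Pi_Q$ are poly-time, and uses only quantum communication plus post-quantum one-way functions, which proves the theorem.

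The main obstacle I expect is the modeling bookkeeping needed to fit~\cite{EC:DGJMS20}'s guarantee into the \cite{HSS15} stand-alone composition framework used here: confirming that their simulator has the form demanded by \Cref{def:cqsa} (in particular that it composes sequentially rather than requiring a network/UC treatment), that the protocol can be serialized so that only one $\func_{mpc}$ call is live at a time without loss of security, and that the specific (adaptively-chosen, secret-shared) classical computations they delegate to $\func_{mpc}$ are genuinely captured by the reactive classical functionalities that \Cref{thm:mpc} realizes. Once these points are verified, the rest is a direct application of the composition theorem.
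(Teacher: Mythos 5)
Your proposal matches the paper's argument: the paper likewise obtains Theorem~\ref{thm:mpqc} by instantiating the classical MPC functionality $\func_{mpc}$ used in~\cite{EC:DGJMS20} with the protocol from Theorem~\ref{thm:mpc} and invoking modular composition. The paper's own treatment is in fact terser than yours — it does not spell out the bookkeeping caveats you flag (sequential sub-calls, fitting the~\cite{EC:DGJMS20} simulator into the~\cite{HSS15} stand-alone framework), so your additional care is, if anything, a more complete version of the same route.
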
 

\bigskip

\paragraph{Acknowledgements.}
We thank the Simons Institute for the Theory of Computing for providing a meeting
place where the seeds of this work were planted. VV thanks Ran Canetti for
patiently answering his questions regarding universally composable commitments.

Most of this work was done when AG was affiliated to CWI and QuSoft. 
HL was supported by NSF grants CNS-1528178, CNS-1929901, CNS-1936825 (CAREER), CNS-2026774, a Hellman Fellowship, a JP Morgan AI Research Award, the Defense Advanced Research Projects Agency (DARPA) and Army Research Office (ARO) under Contract No.\ W911NF-15-C-0236, and a subcontract No.\ 2017-002 through Galois. FS was supported by NSF grants CCF-2041841, CCF-2042414, and CCF-2054758 (CAREER). VV was supported by DARPA under Agreement No. HR00112020023, a grant from the MIT-IBM Watson AI, a grant from Analog Devices, a Microsoft Trustworthy AI grant, and a DARPA Young Faculty Award. The views expressed are those of the authors and do not reflect the official policy or position of the Department of Defense, DARPA, the National Science Foundation, or the U.S.\ Government. 

\bibliographystyle{alpha}
\bibliography{QOT,abbrev3,crypto}

\appendix

\ifnum\llncs=1
\newpage

\section*{Supplementary materials}
\fi

\section{Preliminaries}
\label{sec:prelim}
\subsection{Basic ideal functionalities}
\label{sec:func}

In this section, we define the ideal functionalities for commitment with selective opening $\fsocom$ and (batch) oblivious transfer $\fot$ (resp. $\fpot$).

\begin{figure}[H]
  \begin{mdframed}[style=figstyle,innerleftmargin=10pt,innerrightmargin=10pt]
  \begin{center}
    \textbf{Ideal commitment with selective opening} $\func_{\socom}$
  \end{center}

\noindent  \textsf{Security parameter:} $\lambda$. \\ \textsf{Parties:} Committer $C$, Receiver $R$ and Adversary $\adv$.
\begin{description}
\item[Commit phase:] $\func_{\socom}$ receives a query
  $(\text{Commit},sid,C,R, (m_1,...,m_{r(\lambda)})$ from the committer  $C$, for some function $r$. $\func_{\socom}$ records
  $(sid,C,R,(m_1,...,m_{r(\lambda)}))$ and sends $(\text{Receipt},sid,C,R)$
  to $R$ and $\adv$. $\func_{\socom}$ ignore further
  commit messages.
\item[Decommit phase:] $\func_{\socom}$ receives a query
  $(\text{Reveal},sid,C,R,I)$, where $I$ is an index set of
  size $|I|\le r(\lambda)$.  $\func_{\socom}$ either ignores the
  message if no $(sid,C,R,(m_1,...,m_{r(\lambda)}))$ is recorded; otherwise
  $\func_{\socom}$ records $I$ and sends a
  message $(\text{Open},sid,C,R,\hat m = \{m_i: i\in I\})$ to $R$ and $\mathcal{A}$ and a message
  $(\text{Choice},sid,C,R,I)$ to $C$ and $\adv$.
\end{description}
\end{mdframed}
\begin{mdframed}[style=figstyle,innerleftmargin=10pt,innerrightmargin=10pt]
  \begin{center}
\textbf{Ideal oblivious-transfer functionality} $\fot$
  \end{center}

\noindent  \textsf{Security parameter:} $\lambda$. \\  \textsf{Parties:} Sender $S$ and receiver $R$, adversary $\adv$.
\begin{description}
\item[Sender query:] $\func_{ot}$ receives a query
  $(\text{Sender},sid,x_0,x_1)$ from $S$, where
  $x_b \in \bits^{\ell(\lambda)}$ for $b=0,1$ and $\ell(\cdot)$ is
  polynomial-bounded. $\func_{ot}$ records $(sid,x_0,x_1)$.
    \item[Receiver query:] $\func_{ot}$ receives a query
      $(\text{Receiver},sid,b)$ from $R$ and either ignores the message if
      no $(x_0,x_1)$ is recorded; otherwise $\func_{ot}$ sends
      $(\text{Reveal},sid,x_b)$ to $R$ via the control of
      $\adv$.
\end{description}
\end{mdframed}
\begin{mdframed}[style=figstyle,innerleftmargin=10pt,innerrightmargin=10pt]
  \begin{center}
\textbf{Ideal parallel oblivious-transfer functionality} $\fpot$
  \end{center}

\noindent  \textsf{Security parameter:} $\lambda$.  \\ \textsf{Parties:} Sender $S$ and receiver $R$, adversary $\adv$.
\begin{description}
\item[Sender query:] $\func_{\pot}$ receives a query
  $(\text{Sender},sid, \mathbf{x} = (x_0^i,x_1^i)_{i=1}^{\kappa(\secpar)})$ from $S$,
  where $\ell(\cdot), \kappa(\cdot)$ are polynomial-bounded and $x_b^i\in \bits^{\kappa(\secpar)}$ for
  all $i = 1,\ldots,\ell$ and $b\in \bits$. $\func_{\pot} $ records
  $(sid, \mathbf{x}=(x_0^i,x_1^i)_{i=1}^{\ell})$.
\item[Receiver query:] $\func_{\pot}$ receives a query
  $(\text{Receiver},sid, c\in\bits^\ell)$ from $R$ and either ignores
  the message if no $\mathbf{x}$ is recorded; otherwise $\func_{\pot}$
  sends $(\text{Reveal},sid,(x_{c_i}^i)_{i=1}^\ell)$ to $R$ via the control of $\adv$.
\end{description}
\end{mdframed}
\end{figure}

\subsection{Cryptographic constructions}
\label{sec:cryptoprelims}

\subsubsection{Naor's commitment scheme}\label{sec:naor}
We recall Naor's classical statistically binding commitment
   protocol~\cite{C:Naor89} from post-quantum one-way functions.
   The {\em bit commitment}
   protocol between a committer $C$ and receiver $R$, using a post-quantum
   pseudorandom
   generator $G:\{0,1\}^{\secpar} \to \{0,1\}^{3\secpar}$, proceeds as follows.

   \begin{enumerate}
     \item $R$ chooses a uniformly random string $\rho \gets \{0,1\}^{3\secpar}$ and
     sends it to $C$. (Note that the receiver $R$ of the protocol is public coin.)
     \item $C$ chooses a uniformly random string $r \gets \{0,1\}^{\secpar}$
     and sends $G(r) \oplus m\cdot \rho$ to $R$.
   \end{enumerate}

   We will sometimes succinctly describe the protocol by only referring to the committer's
   message and denote it as $\com_{\rho}(m;r)$.
   The opening of the commitment is simply
   the committer's private random coins $r$.

   \begin{theorem}[\cite{C:Naor89}]
   The protocol $(C,R)$ is statistically binding and computationally hiding against quantum
   polynomial-time adversaries assuming that
   $G$ is a post-quantum secure pseudorandom generator.
   \end{theorem}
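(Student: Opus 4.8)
The plan is to establish the two properties by entirely different means: statistical binding is a purely combinatorial statement about the sparsity of $\mathrm{Im}(G)$ inside $\{0,1\}^{3\secpar}$ and requires no computational assumption whatsoever (so it holds even against unbounded, in particular quantum, cheating committers), whereas computational hiding reduces to the post-quantum pseudorandomness of $G$ through a short hybrid argument.

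For statistical binding, I would first observe that a commitment string $c$ can later be opened to the bit $0$ iff $c \in \mathrm{Im}(G)$, and to the bit $1$ iff $c \oplus \rho \in \mathrm{Im}(G)$. Hence $c$ admits valid openings to \emph{both} bits only if $\rho$ lies in the set $\mathrm{Im}(G) \oplus \mathrm{Im}(G) = \{\, y_0 \oplus y_1 : y_0, y_1 \in \mathrm{Im}(G)\,\}$, whose size is at most $|\mathrm{Im}(G)|^2 \le 2^{2\secpar}$. Since an honest receiver draws $\rho$ uniformly from $\{0,1\}^{3\secpar}$, the probability that there exists \emph{any} ambiguous $c$ — and therefore the probability that any cheating committer, however powerful, can equivocate after the fact — is at most $2^{2\secpar}/2^{3\secpar} = 2^{-\secpar}$, which is negligible.

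For computational hiding I would show that, for every (possibly malicious, possibly quantum) receiver $R^*$ producing a string $\rho$ together with some internal quantum state, the committer messages $\com_{\rho}(0;r) = G(r)$ and $\com_{\rho}(1;r) = G(r) \oplus \rho$ are computationally indistinguishable given $R^*$'s state. The chain of hybrids is: $G(r) \approx U_{3\secpar}$ (replacing the PRG output by a uniformly random string), then $U_{3\secpar} \equiv U_{3\secpar} \oplus \rho$ as distributions — an \emph{exact} equality, since the uniform string is independent of both $\rho$ and $R^*$'s state, which were fixed before it was sampled — and finally $U_{3\secpar} \oplus \rho \approx G(r) \oplus \rho$. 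Each $\approx$ is a straight-line reduction to post-quantum PRG security: from a quantum distinguisher $D$ for the two commitments one builds a quantum PRG-distinguisher that runs $R^*$ to obtain $(\rho, \text{state})$, receives its own challenge $y$, hands $y$ (respectively $y \oplus \rho$) to $R^*$ in place of the committer's message, continues the experiment honestly, and outputs $D$'s verdict; a pseudorandom $y$ realizes one hybrid and a uniform $y$ the next, so $D$'s advantage is bounded by twice the PRG advantage, hence negligible.

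The argument presents no genuine obstacle — it is the textbook analysis of Naor's scheme. The only points that need a word of care are that hiding must be argued against a \emph{malicious} receiver who commits to $\rho$ first (handled by quantifying over all $\rho$ and noting that the uniform-string hybrid is $\rho$-oblivious) and that the reductions must survive quantum auxiliary input; both are immediate because the reductions are straight-line, involve no rewinding, and the pseudorandomness of $G$ is assumed to hold against quantum polynomial-time distinguishers.
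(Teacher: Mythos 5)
The paper does not prove this theorem at all---it is stated with a citation to Naor's original work and used as a black box. Your argument is the canonical proof of that cited result and is correct as written: the binding bound $\lvert \mathrm{Im}(G)\oplus\mathrm{Im}(G)\rvert/2^{3\secpar}\le 2^{-\secpar}$ is purely information-theoretic and thus automatically quantum-secure, and the hiding hybrid $G(r)\approx U\equiv U\oplus\rho\approx G(r)\oplus\rho$ carries over to quantum distinguishers with quantum auxiliary input precisely because, as you note, the reduction is straight-line and the PRG is assumed post-quantum secure.
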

   We remark that post-quantum pseudorandom generators
   can be constructed from post-quantum secure one-way functions~\cite{HILL99,Zhandry12}.
   Finally, we remark that the receiver can reuse $\rho$
   across many commitments sent to it.

\subsubsection{Zero knowledge protocols}

\begin{definition}
A post-quantum zero-knowledge protocol for an NP relation $\mathcal{R}$, is an interactive protocol between $P$ and $V$ that are given some input $x$ and $P$ is also given some $w$ such that $(x,w) \in \mathcal{R}$, if such $w$ exists. We require that
  \begin{description}
    \item[] \textbf{Completeness:}
      { If there exists $w$ such that  $(x,w) \in \mathcal{R}$, then
      $\Pr [ V \text{ accepts }] \geq 1 - \negl(|x|)$.}
    \item[] \textbf{Soundness:} {If for all $w$ $(x,w)\not\in \mathcal{R}$, then for all $P^*$ interacting with $V$, we have that
      $\Pr [V \text{ accepts }] \leq \negl(|x|)$.}
    \item[]  \textbf{Computational zero-knowledge:} For any $x$ such that there exists $w$ such that $(x,w) \in \mathcal{R}$ and
      any polynomial-time $V'_x$, there
      exists a polynomial-time quantum channel $\mathcal{S}_{x,V'}$ we have that \[\mathcal{P}_{V'_x} \approx_c \mathcal{S}_{s,V'}(\cdot),\]
      where $\mathcal{P}_{V'_x}$ is the quantum channel corresponding to the interaction of $V'$ with the honest prover, and both $\mathcal{P}_{V'_x}(\cdot)$ and  $\mathcal{S}_{s,V'}(\cdot)$ receive some (polynomially-large) quantum state that represents the side information of $V'_x$.
  \end{description}
\end{definition}

\begin{theorem}[\cite{Watrous09}]\label{lem:zk-watrous}
Assuming the existence of post-quantum secure one-way functions, there is a post-quantum zero-knowledge protocol for all NP relations.
\end{theorem}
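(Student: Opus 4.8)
The plan is to instantiate a classical zero-knowledge proof system for an $\NP$-complete language---say the $3$-coloring protocol of Goldreich, Micali and Wigderson~\cite{C:GolMicWig86}, or Blum's Hamiltonicity protocol~\cite{Blum86}---using Naor's statistically-binding, computationally-hiding commitment (\Cref{sec:naor}), which exists from $\pqOWF{}$. Completeness is immediate, and \emph{statistical} soundness is inherited from the classical analysis: it uses only the \emph{binding} property of the commitment, which holds against unbounded (hence quantum) provers since Naor's scheme is statistically binding. Amplify the soundness error to negligible by repeating the basic three-message protocol $\secpar$ times \emph{sequentially}. What remains---and this is essentially the entire content of the theorem---is to exhibit, for every polynomial-size quantum verifier $V^*$ holding arbitrary quantum auxiliary input, an efficient quantum simulator whose output is computationally indistinguishable from the real interaction.

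For a single execution, recall the classical simulation strategy: guess the verifier's challenge $e$, produce a commitment transcript that can be opened consistently with $e$ (e.g.\ for $3$-coloring, place two random distinct colors on the guessed edge and garbage elsewhere), feed the first message to $V^*$, measure its challenge, and if it equals $e$ output the completed transcript, otherwise \emph{rewind}. Classically, rewinding just means rerunning from scratch; quantumly this fails, because running $V^*$ and measuring its response irreversibly disturbs the verifier's quantum auxiliary register, and that register cannot be cloned, so one cannot simply retry on the same state. The tool that fixes this is Watrous' \emph{quantum rewinding lemma}: if a quantum circuit $Q$ maps an arbitrary input state to an output carrying a flag bit such that (i) the probability $p$ of the ``good'' flag is, up to negligible error, the \emph{same} for every input state, and (ii) $p$ is noticeably bounded away from $0$ and $1$, then in time $\poly(1/p,\log(1/\eps))$ one can prepare a state $\eps$-close to the output of $Q$ conditioned on the good flag.

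I would apply this lemma with $Q$ equal to one attempt of the simulator above acting on $V^*$'s auxiliary state. Condition (ii) holds because the guessed challenge is (nearly) independent of $V^*$'s behavior, so the matching probability is $\approx 1/|E|$ (or $1/2$ for Hamiltonicity), which is noticeably large and bounded below $1$. Condition (i)---the near-\emph{obliviousness} of the success probability to $V^*$'s state---is precisely where the cryptographic assumption enters: if the success probability varied non-negligibly with the auxiliary state, that variation would give a quantum distinguisher against the (computationally hiding) commitment, contradicting post-quantum security of the $\owf$. Running the rewinding procedure then outputs a state negligibly close to the real post-interaction joint state of $V^*$, i.e.\ a correct simulation of one round.

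Finally, I would upgrade this to the $\secpar$-fold sequential repetition by a hybrid argument: define intermediate experiments in which the first $k$ rounds are produced by the simulator and the remaining $\secpar-k$ by the honest prover, and pass from hybrid $k$ to $k+1$ using the single-round argument above. The subtlety here---which I expect to be the main obstacle---is that after $k$ simulated rounds the verifier's internal register is correlated with the simulated transcript, so the near-obliviousness condition (i) for round $k+1$ must be re-established relative to \emph{this} hybrid; this again reduces to computational hiding of the commitments used in round $k+1$, but the reduction must carry along the state of the first $k$ rounds (which is efficiently samplable given the witness). Assembling the $\secpar$ hybrids and applying the triangle inequality yields computational indistinguishability of the full simulation, completing the proof; completeness and statistical soundness, already noted, are unaffected by sequential repetition.
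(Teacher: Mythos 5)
Your proposal is correct and is essentially a faithful reconstruction of the proof in the cited source \cite{Watrous09}, which this paper imports without reproving: a classical $\Sigma$-protocol for an \NP-complete language instantiated with Naor's statistically binding commitment, simulated against quantum verifiers via Watrous's quantum rewinding lemma, whose ``input-oblivious success probability'' hypothesis is discharged by the computational hiding of the commitment, and then extended to sequential repetition by a hybrid argument. Nothing further is needed.
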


\subsubsection{Yao's garbled circuits}
\begin{definition}
A garbling scheme $\mathcal{G}$ for some family of circuits $\mathcal{C}$ consists of a tuple of algorithms $(\mathsf{Garb},\mathsf{Enc},\mathsf{Eval})$ where
\begin{itemize}
    \item $\mathsf{Garb}(1^\lambda,C)$ for some $C \in \mathcal{C}$ with input length $\ell=\ell(\secpar)$
    returns a garbled circuit $\hat{C}$ and
    $2\ell$ labels $e = \big(\ell^{i}_b \in \{0,1\}^{\secpar}\big)_{i\in [\ell],b\in\bits}$. %
    \item $\mathsf{Enc}(e,x)$ outputs $\hat{x} = \big( \ell^{i}_{x_i} \big)_{i\in [\secpar]}$. %
    \item $\mathsf{Eval}(\hat{C},\hat{x})$ takes as input the garbled circuit $\hat{C}$ and the garbled input $\hat{x}$ and outputs $y$.
\end{itemize}
We require the following properties of $\mathcal{G}$:
\begin{description}
\item[Correctness:] $\Pr[\mathsf{Garb}(1^\lambda,C) \rightarrow (\hat{C},e), \mathsf{Eval}(\hat{C},\mathsf{Enc}(e,x)) = f(x)] = 1$
\item[Security:] There is a polynomial-time simulator $\mathsf{GarbSim}$ such that for all circuits $C \in \mathcal{C}$ and all input $x \in \{0,1\}^{\ell}$, the following two distributions are computationally indistinguishable:
\[ \big( (\hat{C},\hat{e}): (\hat{C},e) \gets \mathsf{Garb}(1^\lambda,C); \hat{e} \gets \mathsf{Enc}(e,x) \big) \stackrel{c}{\approx}  \big( (\hat{C},\hat{e}) \gets \mathsf{GarbSim}(1^\secpar, 1^{|C|}, C(x)) \big) \]
\end{description}
\end{definition}

\begin{lemma}[\cite{Yao86}]
  Assuming the existence of post-quantum secure one-way functions, there is a post-quantum secure garbling scheme for the family of all polynomial-size circuits.
\end{lemma}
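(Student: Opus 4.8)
The plan is to instantiate the textbook construction of Yao's garbled circuits, replacing the idealized symmetric encryption in its description with a concrete scheme built from a \pqOWF{}. From a \pqOWF{} I obtain a post-quantum secure pseudorandom generator via~\cite{HILL99,Zhandry12}, from it a post-quantum secure pseudorandom function $F$ by the Goldreich--Goldwasser--Micali tree construction (whose underlying hybrid argument is straight-line and hence lifts verbatim to the post-quantum setting), and from $F$ a post-quantum IND-CPA--secure one-time secret-key encryption scheme with point-and-permute--friendly syntax. The garbling $\mathsf{Garb}(1^\secpar, C)$ then proceeds as usual: topologically order the wires of $C$; to each wire $i$ assign two random labels $\ell^i_0,\ell^i_1 \in \bits^\secpar$ together with a random permutation bit; for each gate $g$ with input wires $a,b$ and output wire $c$, produce the four doubly-encrypted rows that, under input labels $\ell^a_\alpha, \ell^b_\beta$, decrypt to $\ell^c_{g(\alpha,\beta)}$, permuting the rows by the point-and-permute bits. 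The garbled circuit $\hat C$ is the collection of all gate tables, the labels $e=\{\ell^i_b\}$ are output on the side, $\mathsf{Enc}(e,x)$ returns the input-wire labels $\{\ell^i_{x_i}\}_i$, and $\mathsf{Eval}$ evaluates gate by gate, using the permutation bits to locate the single decryptable row of each table.

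Correctness holds \emph{perfectly}: point-and-permute tells the evaluator exactly which row of each gate table to decrypt, so $\mathsf{Eval}(\hat C, \mathsf{Enc}(e,x)) = C(x)$ with probability $1$. For security I would use the standard simulator $\mathsf{GarbSim}(1^\secpar, 1^{|C|}, C(x))$, which picks one ``active'' label per wire, programs each gate table so that all four rows decrypt to the same active output label, and hardwires the active labels of the output wires to $C(x)$. Indistinguishability is the usual gate-by-gate hybrid argument run in reverse topological order: in hybrid $k$ the first $k$ gates (counting backwards) are simulated and the rest are real. Passing from hybrid $k$ to $k+1$ changes only the $(k+1)$-st gate table; since the evaluator never learns the inactive label on that gate's input wires, its three ``inactive'' rows are ciphertexts under otherwise-unused pseudorandom keys, and one-time IND-CPA security makes rewriting them to encryptions of the active output label undetectable.

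The only use of ``post-quantum'' is that every step of this hybrid argument is justified by a \emph{straight-line, non-rewinding} reduction to the post-quantum security of the encryption scheme (equivalently the pseudorandomness of $F$, equivalently the \pqOWF{}): a QPT distinguisher separating hybrids $k$ and $k+1$ with non-negligible advantage --- even one holding arbitrary quantum side information --- is converted into a QPT adversary against the underlying post-quantum primitive with the same advantage, by embedding the challenge into the $(k+1)$-st gate table and simulating the rest of the garbled circuit itself. Since there are only $\poly(\secpar)$ hybrids, a union bound yields computational indistinguishability of the real and simulated distributions against all QPT adversaries. There is no genuine obstacle here; the only thing to be careful about is precisely this last point --- verifying that the classical security proof never rewinds the adversary and never extracts anything from it (it merely hands it a transcript and reads its guess bit), so that it is a legitimate post-quantum reduction through which the quantum advice passes unchanged. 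Granting that, the lemma follows.
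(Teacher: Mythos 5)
The paper does not actually prove this lemma: it is stated in the preliminaries as a known fact and attributed to \cite{Yao86} (with the post-quantum lifting implicitly relying on the observation that all reductions involved are straight-line). So your proposal is not competing with a proof in the paper; it is reconstructing the folklore argument the citation stands for, and in outline it is the right one: PRG from \pqOWF{} via \cite{HILL99,Zhandry12}, PRF via GGM, gate tables via double encryption with point-and-permute, the all-rows-encrypt-the-active-label simulator, and a gate-by-gate hybrid whose reductions never rewind the adversary.

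There is, however, one step as written that would fail: you run the hybrids in \emph{reverse} topological order, simulating gates from the outputs backward. Consider deactivating a gate $g$ with input wire $c$ that is the output of an earlier gate $g'$ which is still real. The reduction for $g$ must treat the inactive label $\ell^c_{1-v_c}$ as the unknown secret key of the encryption challenger, yet it must also produce $g'$'s real table, in which $\ell^c_{1-v_c}$ appears as a \emph{plaintext}. The reduction cannot encrypt a key it does not know, so the embedding is circular. The standard fix is to process gates in \emph{forward} topological order: by the time you deactivate $g$, every predecessor gate has already been simulated, so the inactive labels of $g$'s input wires no longer occur as plaintexts anywhere (and for circuit-input wires they never did), and the reduction goes through. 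Two smaller imprecisions: ``one-time IND-CPA'' is not quite the right notion, since each wire label keys several ciphertexts (two rows per gate, and more under fan-out), so you need multi-message CPA security or the usual ``double-encryption'' security notion, with all gates sharing a given inactive key handled in one reduction; and your simulator uses the topology of $C$, whereas the paper's definition hands $\mathsf{GarbSim}$ only $1^{|C|}$ --- a standard definitional looseness (resolved by declaring the topology public) that you inherit from the paper rather than introduce. With the ordering corrected, the argument is sound and the post-quantum claim follows exactly as you say, because every reduction is straight-line.
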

\subsection{Quantum information basics}
\label{sec:quantum}
We review now the concepts and notation of Quantum Computation that are used in
this work. We refer to Ref. \cite{NC02} for a detailed introduction of
these topics.

A pure quantum state of $n$ qubits is a unit vector in the Hilbert space
$\left\{\complex^{2}\right)^{\otimes n}$, where $\otimes$ is the Kroeneker (or tensor) product.
The basis for such Hilbert space is $\{\ket{i}\}_{i \in \{0,1\}^n}$.
For some quantum state $\ket{\psi}$, we denote $\bra{\psi}$ as its conjugate transpose. The inner product between two vectors $\ket{\psi}$ and $\ket{\phi}$ is denoted by
$\braket{\psi}{\phi}$ and their outer product as
$\ketbra{\psi}{\phi}$. A mixed state is a  (classical) probabilistic distribution of pure quantum states. The mixed state corresponding to having the quantum state $\ket{\psi_i}$ with probability $p_i$ (with $\sum_i p_i  = 1$) is represented by its density matrix $\rho = \sum_i p_i \kb{\psi_i}$.

Specifically for qubits, there are two important basis that we consider in this work. The computational (or $+$) basis consists of $\{\ket{0}, \ket{1}\}$ and the Hadamard (or $\times$) basis consists of
$\{\ket{+} = 1/\sqrt{2}(\ket{0}+\ket{1}), \ket{-} = 1/\sqrt{2}(\ket{0}-\ket{1})\}$. For $b \in \{0,1\}$ and $\theta \in \{+,\times\}$, we define $\ket{b}_{\theta} = \begin{cases} \ket{b}, & \text{if } \theta = + \\
1/\sqrt{2}(\ket{0} + (-1)^b\ket{1}), & \text{if } \theta = \times
\end{cases}.$

We describe now the operations that can be performed on quantum states. If we measure a quantum state $\rho$ with some projective measurement $\Pi_0$ $I - \Pi_1$, we have that the output is $b$ with probability $Tr(\Pi_0\rho)$.

\subsection{Quantum machine model and quantum indistinguishability}

We review the quantum machine model and notions of
indistinguishability as described in~\cite{HSS15}. A {\em
  quantum interactive machine} (QIM) $\mac$ is an ensemble of
interactive circuits $\{\mac_x\}_{x \in I}$. The index set $I$ is
typically the natural numbers $\mathbb{N}$ or a set of strings
$I\subseteq \{0,1\}^*$.  For each value $\secpar$ of the security
parameter, $\mac_\secpar$ consists of a sequence of circuits
$\{\mac_\secpar^{(i)}\}_{i=1,...,\ell(\secpar)}$, where
$\mac_\secpar^{(i)}$ defines the operation of $\mac$ in one round $i$
and $\ell(\secpar)$ is the number of rounds for which $\mac_\secpar$
operates (we assume for simplicity that $\ell(\secpar)$ depends only
on $\secpar$). We omit the scripts when they are clear from the
context or are not essential for the discussion. Machine
$\mac_\secpar$ (i.e., each of the its constituent circuits) operates
on three registers: a state register $\qreg{S}$ used for input and
workspace; an output register $\qreg{O}$; and a network register
$\qreg{N}$ for communicating with other machines. The size (or running
time) $t(\secpar)$ of $\mac_\secpar$ is the sum of the sizes of the
circuits $\mac_\secpar^{(i)}$. We say a machine is polynomial time if
$t(\secpar)=\poly(\secpar)$ and there is a deterministic classical
Turing machine that computes the description of $\mac_\secpar^{(i)}$
in polynomial time on input $(1^\secpar,1^i)$.

When two QIMs $\mac$ and $\mac'$ interact, they share network register
$\qreg{N}$. The circuits $\mac_\secpar^{(i)}$ and ${\mac'}_\secpar^{(i)}$
are executed alternately for $i=1,2,...,\ell(\secpar)$. When three or
more machines interact, the machines may share different parts of
their network registers (for example, a private channel consists of a
register shared between only two machines; a broadcast channel is a
register shared by all machines). The order in which machines are
activated may be either specified in advance (as in a synchronous
network) or adversarially controlled.

A non-interactive quantum machine (referred to as QTM hereafter) is a
QIM $\secpar$ with network register empty and it runs for only one
round (for all $\secpar$). This is equivalent to the {\em quantum
  Turing machine} model (see~\cite{Yao93}). A classical interactive
Turing machine (ITM) is a special case of a QIM, where the registers
only store classical strings and all circuits are classical
(Cf.~\cite{Can00,Can01}).

\paragraph{Quantum indistinguishability.} Let
$\rho = \{\rho_\secpar\}_{\secpar \in \mathbb{N}}$ and
$\eta = \{\eta_\secpar\}_{\secpar \in \mathbb{N}}$ be ensembles of
mixed states indexed by $\secpar \in\mathbb{N}$, where $\rho_\secpar$
and $\eta_\secpar$ are both $r(\secpar)$-qubit states for some
polynomial-bounded function $r$. 

We state the indistinguishability of quantum states proposed by
Watrous~\cite[Definition 2]{Watrous09}.

\begin{definition}[$(t, \veps)$-indistinguishable states]
\label{def:qc}
We say two quantum state ensembles
$\rho = \{\rho_\secpar\}_{\secpar \in \mathbb{N}}$ and
$\eta = \{\eta_\secpar\}_{\secpar \in \mathbb{N}}$ are
$(t,\veps)$-indistinguishable, denoted $\rho \qc^{t, \veps} \eta$, if
for every $t(\secpar)$-time QTM $\envt$ and any mixed state
$\sigma_\secpar$,

\[ \left|\Pr[\envt(\rho_\secpar \otimes \sigma_\secpar) =1] - \Pr[
    \envt(\eta_\secpar \otimes \sigma_\secpar) = 1]\right| \leq
  \veps(\secpar) \, . \]

\end{definition}

The states $\rho$ and $\eta$ are called {\em quantum computationally
  indistinguishable}, denoted $\rho \qc \eta$, if for every polynomial
$t(\secpar)$, there exists a negligible $\veps(\secpar)$ such that
$\rho_\secpar$ and $\eta_\secpar$ are
$(t,\veps)$-indistinguishable. The definition subsumes classical
distributions as a special case, which can be represented by density
matrices that are diagonal in the standard basis.

Then we recall indistinguishability of QTMs~\cite{HSS15}, which is equivalent to quantum computationally indistinguishable
super-operators proposed by Watrous~\cite[Definition 6]{Watrous09}.
\begin{definition}[$(t, \veps)$-indistinguishable QTMs]
  \label{def:qcm}
  
  We say two QTMs $\mac_1$ and $\mac_2$ are
  \emph{$(t,\veps)$-indistinguishable}, denoted
  $\mac_1 \approx_{qc}^{t, \veps} \mac_2$, if for any
  $t(\secpar)$-time QTM $\envt$ and any mixed state
  $\sigma_\secpar \in \density{\rspace{S} \otimes \rspace{R}}$, where
  $\rspace{R}$ is an arbitrary reference system,

\[ \left| \Pr[ \envt((\mac_1\otimes
    \identity_{\lin{\rspace{R}}})\sigma_\secpar) = 1 ] - \Pr [
    \envt((\mac_2\otimes \identity_{\lin{\rspace{R}}})\sigma_\secpar)
    = 1] \right| \leq \veps(\secpar) \, .\] 
Machines $\mac_1$ and $\mac_2$ are called {\em quantum computationally
  indistinguishable}, denoted $\mac_1 \qc \mac_2$, if for every
polynomial $t(\secpar)$, there exists a negligible $\veps(\secpar)$
such that $\mac_1$ and $\mac_2$ are $(t,\veps)$-computationally
indistinguishable.
\end{definition}

\section{Proof of \Cref{thm:ot}}
\label{ap:qot}

We split our proof into two steps. In \Cref{sec:comalice}, we prove security against adversaries corrupting the sender, and in \Cref{sec:combob} we prove security against adversaries corrupting the receiver. As we previously mentioned, our proofs consist of a straightforward adaptation of previous results~\cite{C:BBCS91,DFL+09,EC:Unruh10} to our setting/language.

\subsection{Security against a Malicious Sender}
\label{sec:comalice}

\begin{figure}[!ht]
  \begin{mdframed}[style=figstyle,innerleftmargin=10pt,innerrightmargin=10pt]
{\sf Inputs:} Environment generates inputs: chosen bit $c$ is given to honest (dummy) $R$; and input to $\adv$ is passed through $\simulator$.

\begin{enumerate}
\item (Initialization) $\simulator$ behaves as an honest $R$ does in the real protocol.
\item (Checking) Simulate $\fsocom$ and commitment to a dummy
  message. Once receiving the checking set $T$, measure them and run
  $\adv$ on the outcome. $\simulator$ aborts if $\adv$ aborts.

\item (Partition Index Set) Let $\hat \theta^A$ be the basis received
  from $\adv$. $\simulator$ measures the remaining qubits under
  $\theta^A$, and obtains $\hat x^B$. $\simulator$ then randomly
  partitions the indices into $I_0$ and $I_1$ and sends them to $\adv$.

\item (Secret Transferring) Once receiving $(f, m_0, m_1)$,
  $\simulator$ computes $s_0' := m_0 \oplus f({\hat x^B}|_{I_0})$ and
  $s_1' := m_1 \oplus f({\hat x^B}|_{I_1})$. $\simulator$ gives $\fot$ the pair $(s_0', s_1')$. Outputs whatever $\adv$ outputs in the end.
\end{enumerate}
\end{mdframed}
\caption{Simulator for malicious sender in $\qotbbcs$.}
\label{fig:sim-alice-qot}
\end{figure}

\begin{lemma}
$\qotbbcs$ $\cqsa$-realizes $\fot$ in the $\fsocom$ hybrid model against a malicious sender.
\end{lemma}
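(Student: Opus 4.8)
The plan is to show that the (manifestly efficient) simulator $\simulator$ of \Cref{fig:sim-alice-qot} in fact achieves \emph{perfect} indistinguishability: the joint distribution of the view of $\adv$ and the output of the dummy receiver is identical in $\mac_{\qotbbcs,\adv}$ and in $\mac_{\fot,\simulator}$. No computational assumption is needed, since in the $\fsocom$-hybrid model the commitment functionality is ideal, and the hash function $f$ is sent in the clear and plays no role in the sender-security argument. I would establish this through a short chain of equalities of distributions.

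First, recall that the honest receiver $\otR$ is the \emph{committer} in $\fsocom$, so the only information about its committed pairs $\set{\theta^B_i,x^B_i}_i$ that $\adv$ (playing the $\fsocom$ receiver) ever learns is the opening $\set{\theta^B_i,x^B_i}_{i\in T}$ at the set $T$ that $\adv$ itself chooses. Hence it is equivalent for the receiver not to commit to anything in the preamble, and instead at the checking step to measure only the qubits indexed by $T$ in freshly sampled uniform bases $\theta^B_T$ and report those outcomes --- which is exactly how $\simulator$ simulates $\fsocom$. It is moreover equivalent to \emph{defer} the measurement of the unchecked qubits ($\bar T$) until after $\adv$ reveals $\theta^A_{\bar T}$: measurements on disjoint registers commute, and --- crucially, since a malicious $\adv$ may send an arbitrary, possibly entangled, $n$-qubit state rather than genuine BB84 states --- the joint distribution of the $T$-outcomes (and of $\adv$'s resulting state) is unchanged by whether or not the $\bar T$ registers have already been measured, because for any complete projective measurement $\set{P_k}$ on a register one has $\Tr_{\bar T}\!\bigl[\sum_k P_k \sigma P_k\bigr] = \Tr_{\bar T}[\sigma]$.

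Next I would analyze the partition step. In the real execution the receiver samples $\theta^B_{\bar T}$ uniformly and independently of $\adv$'s actions and sets $I_c=\set{i\in\bar T:\theta^A_i=\theta^B_i}$ and $I_{1-c}=\bar T\setminus I_c$; thus $(I_0,I_1)$ is a uniformly random ordered partition of $\bar T$, exactly as $\simulator$ samples it, and conditioned on $(I_0,I_1)$ the honest bases satisfy $\theta^B_i=\theta^A_i$ on $I_c$. Consequently the honest receiver's decoded value $m_c\oplus f(x^B_{I_c})$ is a function only of the outcomes of measuring the $I_c$-qubits in bases $\theta^A_{I_c}$; the $I_{1-c}$-outcomes (taken in the opposite bases) are never used nor revealed, so by the trace identity above they may be discarded, and one may equivalently measure \emph{all} of $\bar T$ in $\theta^A_{\bar T}$, as $\simulator$ does. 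Therefore the pair (view of $\adv$, the value $m_c\oplus f(x^B_{I_c})$) in the real world equals, as a distribution, the pair (view of $\adv$ with $\simulator$, $s'_c$), where $s'_b:=m_b\oplus f(x^B_{I_b})$ is what $\simulator$ hands to $\fot$.

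Finally, routing through $\fot$ closes the argument: on input $c$ the dummy receiver obtains precisely $s'_c$ (the other value $s'_{1-c}$ is never seen by anyone and is immaterial), and since $\simulator$'s entire interaction with $\adv$ --- the simulated $\fsocom$ receipt and opening, the partition $(I_0,I_1)$, and the pair passed to $\fot$ --- is independent of $c$, the final state $\adv$ outputs, which $\simulator$ forwards, is identically distributed to the real world. Abort cases (e.g.\ $\adv$ aborting during the checking step, or sending a malformed later message) are mirrored verbatim by $\simulator$ and contribute nothing. The step I expect to require the most care is exactly this measurement-interchange argument for the unchecked qubits: because $\adv$ is malicious and the transmitted state is arbitrary, the distributional equalities there must be justified at the level of density operators via the trace identity above, rather than by any appeal to the structure of BB84 states or to honest measurement behaviour.
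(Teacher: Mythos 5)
Your proposal is correct and follows essentially the same route as the paper's proof: it simulates $\fsocom$ internally, defers the measurement of the unchecked qubits until the sender's bases $\hat\theta^A$ are revealed (the paper's Hybrid 1), switches the measurement bases on $\bar T$ to $\hat\theta^A$ on the grounds that the $I_{1-c}$ outcomes are never used or revealed (Hybrid 2), and then extracts both $s_0',s_1'$ for $\fot$ (Hybrid 3). Your write-up is somewhat more explicit than the paper's --- in particular the trace identity justifying the basis change and the observation that $(I_0,I_1)$ is a uniformly random ordered partition in both worlds --- but the decomposition and key steps are the same.
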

\begin{proof}
Given an adversary $\adv$ that corrupts $S$, we construct a simulator $\simulator$ as described in \Cref{fig:sim-alice-qot}. Our goal is to show that
\begin{align}\label{eq:sim-alice-ot}
    \mac_{\qotbbcs,\adv} \qc \mac_{\fot,\simulator}.
\end{align}
In order to prove \Cref{eq:sim-alice-ot}, we provide four hybrids.

  \medskip \noindent
  \textit{Hybrid $0$.} $\simulator_0$  interacts with $\adv$ following the protocol $\qothzero$, where $\simulator_0$ simulates honest $R$ and $\fsocom$ in $\qotbbcs$. It follows trivially that \begin{align}\label{eq:receiver-security-h0}
M_{\qotbbcs,\adv} \approx_{qc} M_{\qothzero,\simulator_0}.
\end{align}

  \medskip \noindent
  \textit{Hybrid $1$.} $\simulator_1$ interacts with $\adv$ following the protocol $\qothone$, which is the same as $\qothzero$ the following differences:
\begin{enumerate}
    \item  $\simulator_1$ waits the subset $T$, measures the corresponding qubits on the basis $\tilde \theta^B$ with outcome $\tilde x^B$ and then simulates the opening of $\fot$ with $\tilde \theta^B$ and $\tilde x^B$.
   \item After receiving $\hat \theta^A$, $\simulator_1$ measures the remaining qubits on basis $\hat \theta^B$ and continue as $\qothzero$.
\end{enumerate}

 Since the only difference here is that $R$ delays the measurements, but since such operations commute, we have that
\begin{align}\label{eq:receiver-security-h1}
M_{\qothzero,\simulator_0} \approx_{qc} M_{\qothone,\simulator_1}.
\end{align}

  \medskip \noindent
  \textit{Hybrid $2$.}
$\simulator_2$ interacts with $\adv$ following the protocol $\qothtwo$, which is the same as $\qothone$
 except that $\simulator_2$ measures the remaining qubits (not in $T$) on basis $\hat \theta^A$ and continue as $\qothtwo$.

The only difference here is that the basis used by $\simulator_2$ to measure the qubits not in $T$ are not the same.  Since these values are never revealed to $\adv$, we have that
\begin{align}\label{eq:receiver-security-h2}
M_{\qothone,\simulator_1} \approx_{qc} M_{\qothtwo,\simulator_2}.
\end{align}

  \medskip \noindent
  \textit{Hybrid $3$.}
  Interaction of $\simulator$ from \Cref{fig:sim-alice-qot} with $\tilde{\Pi}_{\fot}$.

Notice that $\simulator$ performs the exact same operations as $\simulator_2$ in $\qothtwo$, and the only difference is that $\simulator$ uses the information (that the $\simulator_2$ already had) to extract the two bits and inputs them into $\fot$. Therefore
\begin{align}\label{eq:receiver-security-simulation}
  M_{\qothtwo,\simulator_2}.
  \approx_{qc}
  M_{\fot,\simulator}
\end{align}

\Cref{eq:sim-alice-ot} follows from
\Cref{eq:receiver-security-h0,eq:receiver-security-h1,eq:receiver-security-h2,eq:receiver-security-simulation}.
\end{proof}

\subsection{Security against malicious receiver}
\label{sec:combob}

As a standard proof technique, we first describe a variant
$\qotepr$ based on EPR-pairs. Then a sampling
framework by Bouman and Fehr~\cite{C:BouFeh10} ensures a min-entropy
bound that is the core of establishing simulation.

  \begin{mdframed}[style=figstyle,innerleftmargin=10pt,innerrightmargin=10pt]
  \begin{center}
    Equivalent protocol  $\qotepr$
  \end{center}

{\sf Inputs:} $S$ gets input two $\ell$-bit strings $s_0$ and $s_1$, $R$ gets a bit $c$.

\begin{enumerate}
\item (Initialization) $S$ generates $n$ pairs of EPR
  $\ket{\Psi}^{\otimes n} = [\frac{1}{\sqrt{2}}(\ket{00} +
  \ket{11})]^{\otimes n}$, and sends $R$ $n$ halves of these EPR
  pairs. $S$ chooses $\tilde \theta^A \in \{+, \times\}^n$ at
  random, but doesn't measure her shares of the EPR pairs.
\item (Checking) As in $\qotbbcs$, $S$ picks the random subset $T$
  and obtains $\{\tilde \theta_i^B, \tilde x_i^B\}$ from $\fsocom$. Then
  for each $i\in [T]$ where $S$ measures her $i$th share of EPR
  under bases $\tilde \theta_i^A$. If the outcome
  $\tilde x^A_i \neq \tilde x^B_i$ but
  $\tilde \theta_i^A = \tilde \theta_i^B$, abort. Once the checking is
  done, $S$ continues and measures her remaining qubits under
  $\hat \theta^A$ to obtain $\hat x^A$.

\item (Partition Index Set) Same as $\qotbbcs$.
\item (Secret Transferring) Same as $\qotbbcs$.
\end{enumerate}
\end{mdframed}

Notice that this protocol is equivalent to
$\qotbbcs$ from receiver's perspective, and therefore we have the following implication.

\begin{lemma}\label{lem:epr-to-bbcs}
$\qotepr$ $\cqsa$-implements $\qotbbcs$ in the $\fsocom$ hybrid model against corrupted receiver.
\end{lemma}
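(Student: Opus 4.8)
The plan is to use the \emph{identity simulator}. Since the receiver's prescribed role is unchanged between $\qotepr$ and $\qotbbcs$ — in both, the receiver measures the qubits it receives in random bases $\tilde\theta^B$, commits to $(\tilde\theta^B_i,\tilde x^B_i)$ via $\fsocom$, opens the requested subset, reports $(I_0,I_1)$, and decodes — for any adversary $\adv$ corrupting the receiver in $\qotepr$ I would let $\simulator\defeq\adv$ corrupt the receiver in $\qotbbcs$. It then remains to show $\mac_{\qotepr,\adv}\qc\mac_{\qotbbcs,\adv}$, and since the two experiments differ only in the code of the honest sender (the environment, the adversary, the $\fsocom$ hybrid calls, and the scheduling being identical), this reduces to arguing that the two honest-sender strategies induce the same quantum channel from the adversary's and environment's joint point of view.

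The first step is the single-qubit correspondence: for each $\theta\in\base$, preparing the EPR pair $\tfrac1{\sqrt2}(\ket{00}+\ket{11})$ and measuring the first qubit in basis $\theta$ yields a uniformly random outcome $b$ and collapses the second qubit to $\ket{b}_\theta$; this holds for both bases by the identity $\tfrac1{\sqrt2}(\ket{00}+\ket{11})=\tfrac1{\sqrt2}(\ket{++}+\ket{--})$. Hence, for each index $i$ and each fixed basis $\tilde\theta^A_i$, the $\qotbbcs$ sender's action ``sample $\tilde x^A_i\in_R\bits$, send $\ket{\tilde x^A_i}_{\tilde\theta^A_i}$'' produces exactly the same joint state of (the qubit handed to the receiver, the classical register holding $\tilde x^A_i$) as the $\qotepr$ sender's action ``create an EPR pair, send its second half, keep the first half'', followed by a measurement of the first half in $\tilde\theta^A_i$.

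The second step is to pull the sender's deferred measurements back to Step~1. In $\qotepr$ the honest sender applies no operation to her retained EPR halves other than the eventual projective measurements in the bases $\tilde\theta^A$ (first on the $T$-positions at the checking step, then on the $\bar T$-positions after the partition step), and that register is never part of the network register. By the deferred-measurement principle these measurements may be performed already in Step~1 without changing the distribution of any subsequent message or of the sender's output: the $T$-position outcomes are consumed only in the sender's local abort test, and the $\bar T$-position outcomes only in computing $(m_0,m_1)$. Performing all of them at the start turns the $\qotepr$ sender into the $\qotbbcs$ sender by the correspondence above, and so $\mac_{\qotepr,\adv}$ and $\mac_{\qotbbcs,\adv}$ are identical as channels; a fortiori they are $\qc$-indistinguishable, which is the claim (in fact this yields $\sqsa$-, indeed perfect, emulation, of which the stated $\cqsa$ claim is a special case).

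I do not expect a genuine obstacle here; the only care needed is the bookkeeping for the commutation in the stand-alone model. One must verify that the adversary's adaptively scheduled messages never force the honest sender to act on her EPR halves before the nominal measurement time — which holds because the sender's program reads those registers only at the checking and secret-transfer steps — and that moving each measurement earlier does not alter which classical value each later step reads. With those checks in place the equivalence is immediate.
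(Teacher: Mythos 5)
Your proposal is correct and is exactly the standard argument the paper is (implicitly) invoking: the paper offers no explicit proof beyond the remark that the two protocols are "equivalent from the receiver's perspective," and the content of that remark is precisely your identity simulator, the EPR-measurement/prepare-and-send correspondence, and the deferred-measurement commutation of the sender's local measurements (which act on registers disjoint from the network and the receiver). Nothing further is needed.
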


The high-level
approach is:

\begin{itemize}
\item Interpret the checking phase in $\qotepr$ as a sampling game
  over qubits.
    \item Analysis of the sampling game shows that if $R$ passes the
      checking phase, then the {\em real} joint state of $S$ and $R$
      after the checking phase in the protocol will be negligibly
      close to an {\em ideal} state.
    \item Finally we argue that if one measured systems by $S$ in the
      {\em ideal} state and gets a string $x$, then no matter how $R$
      partitions the index sets $(I_0, I_1)$, there exists a $c$ such
      that a large amount of min-entropy is preserved in
      $x|_{I_{1-c}}$. In addition, this bit $c$ can be derived
      efficiently.
\end{itemize}

Thus we see that if $R$ indeed passes the checking phase in the real
protocol, $f(\hat x^A|_{I_{1-c}})$ will be statistically close to
uniform, except with negligible probability. This enables us to
construct a simulator when the receiver is corrupted below, and we can prove
the following.
\begin{proposition}\label{lem:security-bob-epr}
$\qotepr$ $\cqsa$-realizes $\fot$ in the $\fsocom$ hybrid model against corrupted receiver.
\end{proposition}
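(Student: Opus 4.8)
The goal is to exhibit an \emph{efficient} simulator $\simulator$ for any malicious receiver $\adv=R^*$ and show $\mac_{\qotepr,\adv}\qc\mac_{\fot,\simulator}$; in fact we aim for the stronger statistical relation $\mac_{\qotepr,\adv}\dmm\mac_{\fot,\simulator}$. The simulator runs the honest sender's strategy of $\qotepr$ against $R^*$ while also playing the ideal $\fsocom$ towards $R^*$. First it creates the $n$ EPR pairs and sends the halves, and it simply \emph{reads off} the strings $\{\tilde\theta^B_i,\tilde x^B_i\}_{i\in[n]}$ that $R^*$ submits to $\fsocom$ (this is free, since $\simulator$ \emph{is} $\fsocom$). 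Then it samples the test set $T$ of size $\alpha n$, measures its own halves at $T$ in bases $\tilde\theta^A_T$, ``opens'' $\{\tilde\theta^B_i,\tilde x^B_i\}_{i\in T}$ to itself, and aborts under exactly the condition the honest sender uses. Next it sends $\hat\theta^A$ (the bases at the positions $\bar T$) to $R^*$ and receives the partition $(I_0,I_1)$. At this point $\simulator$ computes, for $b\in\bits$, the count $n_b:=|\{i\in I_b:\hat\theta^A_i\ne\tilde\theta^B_i\}|$, sets $c:=\operatorname{argmin}_b n_b$ (ties broken arbitrarily), sends $c$ to $\fot$, and receives $s_c$. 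Finally it measures its remaining halves at $\bar T$ in bases $\hat\theta^A$ to obtain $\hat x^A$, samples $f\in_R\mathbf{F}$, sets $m_c$ exactly as in the protocol (using $\hat x^A|_{I_c}$ suitably padded) and $m_{1-c}\gets\bits^\ell$ uniformly at random, sends $(f,m_0,m_1)$, and outputs whatever $R^*$ outputs. Every step is polynomial time, so $\simulator$ is efficient, as required for $\cqsa$-realization.

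Next I would observe that, because $\simulator$ runs the honest sender's code together with the ideal $\fsocom$, the two executions are \emph{identically} distributed up through the moment $(I_0,I_1)$ is received, and the only discrepancy afterwards is that $m_{1-c}$ is a fresh uniform string in the simulation instead of $s_{1-c}\oplus f(\hat x^A|_{I_{1-c}})$. So the entire claim reduces to showing that, conditioned on $\simulator$ not aborting, the value $f(\hat x^A|_{I_{1-c}})$ is $\negl(\secpar)$-close (in trace distance, jointly with everything else) to uniform given $R^*$'s whole view: its quantum register, the classical transcript, $\hat\theta^A$, $T$, and $f$. By the Leftover Hash Lemma / privacy amplification against quantum side information applied to the universal hash family $\mathbf{F}$ with output length $\ell$, this in turn reduces to the min-entropy bound $H_{\infty}\!\left(\hat x^A|_{I_{1-c}}\ \middle|\ R^*\text{'s view},f\right)\ge \ell+\secpar$.

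The heart of the proof — and the step I expect to be the main obstacle — is establishing this min-entropy bound, which I would do via the quantum sampling framework of Bouman and Fehr~\cite{C:BouFeh10}, essentially adapting the analyses of~\cite{DFL+09,EC:Unruh10}. The plan is: (i) reinterpret the Checking step of $\qotepr$ as a sampling experiment, where $T$ samples positions and the abort condition tests for \emph{zero} error at the positions of $T$ where $\tilde\theta^A_i=\tilde\theta^B_i$; (ii) invoke the Bouman--Fehr bound to conclude that, conditioned on no abort, the joint post-checking state of the sender's unmeasured halves and $R^*$'s register is within negligible trace distance of an \emph{ideal} state in which, after the sender measures its unchecked halves in bases $\hat\theta^A_{\bar T}$, the substring $\hat x^A|_J$ indexed by $J:=\{i\in\bar T:\hat\theta^A_i\ne\tilde\theta^B_i\}$ has min-entropy at least $(1-o(1))\,|J|$ conditioned on $R^*$'s register and on $f$ (intuitively, $R^*$'s commitment pins its halves at those positions to a measurement in the basis conjugate to $\hat\theta^A_i$, so the sender's outcomes there are essentially uniform and uncorrelated with $R^*$ by the EPR correlations); (iii) note that since $\hat\theta^A$ is chosen uniformly and only $\hat\theta^A_T$ is revealed before the partition, $|J|$ is binomially distributed with mean $|\bar T|/2=(1-\alpha)n/2$, so $|J|\ge(1-\alpha)n/2-\secpar$ except with probability $2^{-\Omega(\secpar)}$; and (iv) a counting step: $J=(J\cap I_0)\sqcup(J\cap I_1)$ forces $\max_b|J\cap I_b|\ge|J|/2$, and the simulator's choice $c=\operatorname{argmin}_b n_b$ gives $n_{1-c}=\max_b|J\cap I_b|\ge(1-\alpha)n/4-\secpar/2=\Omega(n)$. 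Since $n\gg\ell$ and $\alpha\in(1/4,1/2)$, and since conditional min-entropy of a string is at least that of any substring, $H_{\infty}(\hat x^A|_{I_{1-c}}\mid R^*\text{'s view},f)\ge H_{\infty}(\hat x^A|_{J\cap I_{1-c}}\mid R^*\text{'s view},f)\ge(1-o(1))(1-\alpha)n/4\ge\ell+\secpar$.

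Putting the pieces together: the negligible trace distance between the real post-checking state and the ideal state, combined with the statistical closeness of $f(\hat x^A|_{I_{1-c}})$ to uniform guaranteed by privacy amplification in the ideal state, shows that replacing $m_{1-c}$ by a uniform string perturbs the joint output only negligibly, so $\mac_{\qotepr,\adv}\dmm\mac_{\fot,\simulator}$ and a fortiori $\mac_{\qotepr,\adv}\qc\mac_{\fot,\simulator}$. The delicate points to get right are the exact form and error term of the Bouman--Fehr ``ideal state'' guarantee for a zero-error test (which relies on the commit-and-open structure supplied here by $\fsocom$), the choice of parameters $n,\ell,\alpha$ so that the min-entropy comfortably exceeds $\ell+\secpar$, and verifying that the simulator's selection of $c$ is efficient and consistent with any partition a malicious $R^*$ could have sent.
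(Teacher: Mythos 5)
Your proposal is correct and follows essentially the same route as the paper: the identical simulator (read the committed bases from $\fsocom$, pick $c$ minimizing the basis-mismatch count on $I_c$, replace $m_{1-c}$ by a uniform string), with the argument reduced to a min-entropy bound on $\hat x^A|_{I_{1-c}}$ obtained from the Bouman--Fehr quantum sampling framework followed by privacy amplification. The only cosmetic difference is your counting step (iv), where the paper invokes Hoeffding on the mismatch count directly rather than the $\max_b|J\cap I_b|\ge|J|/2$ observation.
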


  \begin{mdframed}[style=figstyle,innerleftmargin=10pt,innerrightmargin=10pt]
  \begin{center}
  Simulator   $\simulator$ (when receiver is corrupted)
  \end{center}
{\sf Inputs:} Environment generates inputs: $s_0$ and $s_1$ are given
to honest (dummy) $S$; and input to $\adv$ is passed through $\simulator$.

\begin{enumerate}
\item (Initialization) $\simulator$ initializes an execution with corrupted $R$, just as in $\qotbbcs$.
\item (Checking) $\simulator$ does the checking procedure as in
  $\qotepr$, and the commitment is simulated internally. Therefore
  $\simulator$ sees all $(\tilde x^B_i, \tilde \theta^B_i)$ that
  corrupted $R$ sent to $\fsocom$.

\item (Partition Index Set) $\simulator$ expects to receive $(I_0,I_1)$ from $\adv$.

\item (Secret Transferring) $S$ sends $(s_0,s_1)$ to the ideal
  functionality $\fot$. $\simulator$ sets $c\in \bits$ to be such that
  $wt(\res{\hat \theta^A}{I_c} \oplus \res{\hat \theta^B}{I_c}) \leq
  wt(\res{\hat \theta^A}{I_{1-c}} \oplus \res{\hat
    \theta^B}{I_{1-c}})$. (That is, the Hamming distance between $\hat
  \theta^A$ and $\hat \theta^B$, restricted to $I_{1-c}$, is larger.)
  Send $c$ to the (external) $\fot$ and obtain $s_c$. $\cS$ then sends
  $f\in_R \mathbf{F}$, $m_c := s_c \oplus f(\res{x^A}{I_c})$ and $m_{1-c}
  \in_R \bits \ell$ to $\adv$. Output whatever $\adv$ outputs in the
  end.
\end{enumerate}

\end{mdframed}

\begin{proof}
  Observe that the simulation of $\simulator$ differs from the
  real-world execution only in the last \emph{secret transferring}
  phase: in both cases $m_c = s_c \oplus f(\res{\hat x^A}{I_c})$, but
  $m_{1-c} = s_{1-c} \oplus f(\res{\hat x^A}{I_{1-c}})$ in $\qotepr$,
  while during simulation $\simulator$ sets $m_{1-c}\in_R
  \bits^\ell$. However, as we argue below in Theorem~\ref{thm:qot-bob}, after the checking phase,
   system $A$ of $S$ restricted to $I_{1-c}$ has high min-entropy even
  conditioned on the adversary's view. Hence $f$ will effectively
  extract $\ell$ uniformly random bits.
\end{proof}

\begin{corollary}
$\qotbbcs$ $\cqsa$-realizes $\fot$ in the $\fsocom$ hybrid model against corrupted receiver.
\end{corollary}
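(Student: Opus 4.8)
The plan is to derive this corollary directly from the two facts already established for the EPR-based variant $\qotepr$: Lemma~\ref{lem:epr-to-bbcs}, which says $\qotepr$ $\cqsa$-implements $\qotbbcs$ against a corrupted receiver, and Proposition~\ref{lem:security-bob-epr}, which says $\qotepr$ $\cqsa$-realizes $\fot$ against a corrupted receiver. Since both statements are quantified over the same (malicious) receiver, the corollary follows by transitivity of quantum computational indistinguishability of QTMs, or equivalently by invoking the Modular Composition Theorem~\ref{thm:sacomposition} with the trivial outer protocol that merely calls its subprotocol.

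Concretely, fix any poly-time QIM $\adv$ corrupting the receiver of $\qotbbcs$. First I would observe that $\mac_{\qotbbcs,\adv}$ and $\mac_{\qotepr,\adv}$ are statistically -- in fact perfectly -- indistinguishable: the protocols differ only on the sender's side, where $\qotbbcs$ prepares $\ket{x^A}_{\theta^A}$ for uniform $x^A\in\bits^n$, $\theta^A\in\base^n$, while $\qotepr$ prepares $n$ EPR pairs, sends the receiver one half of each, and only later measures her halves in bases $\theta^A$. By the deferred-measurement principle, the joint distribution of the qubits delivered to $\adv$ together with the subsequent classical messages $\hat\theta^A$ and $(f,m_0,m_1)$ is identical in the two executions; this is exactly the content of Lemma~\ref{lem:epr-to-bbcs}, with the $\qotepr$-adversary taken to be $\adv$ itself. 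Next, Proposition~\ref{lem:security-bob-epr} supplies a poly-time simulator $\simulator$ with $\mac_{\qotepr,\adv}\qc\mac_{\fot,\simulator}$. Chaining the two relations gives $\mac_{\qotbbcs,\adv}\qc\mac_{\qotepr,\adv}\qc\mac_{\fot,\simulator}$, hence $\mac_{\qotbbcs,\adv}\qc\mac_{\fot,\simulator}$, which is precisely the assertion of the corollary.

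I expect no real obstacle in the corollary itself; it is pure bookkeeping. All the difficulty sits upstream, in the proof of Proposition~\ref{lem:security-bob-epr}: there one must argue that once $\adv$ passes the cut-and-choose check, the sender's register restricted to $I_{1-c}$ retains enough min-entropy -- even conditioned on $\adv$'s entire view -- that the universal hash $f$ extracts a string statistically close to uniform, so that replacing $m_{1-c}$ with a fresh random string is undetectable. That step is the one carried out via the Bouman--Fehr quantum sampling framework and its associated min-entropy bound, and it (not the present corollary) is the technical heart. Finally, combining this corollary with the malicious-sender lemma of Section~\ref{sec:comalice} completes the proof of Theorem~\ref{thm:ot}.
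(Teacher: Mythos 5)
Your proposal is correct and matches the paper's proof, which likewise derives the corollary by chaining Lemma~\ref{lem:epr-to-bbcs} (the perfect equivalence of $\qotbbcs$ and $\qotepr$ from the corrupted receiver's perspective, via deferred measurement on the honest sender's side) with Proposition~\ref{lem:security-bob-epr}. The paper states this in one line; your elaboration of why the same adversary $\adv$ can be carried across the two executions, and your observation that all the technical content lives in the proposition rather than in this bookkeeping step, are both accurate.
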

\begin{proof}
Follows directly from \Cref{lem:epr-to-bbcs,lem:security-bob-epr}.
\end{proof}

\begin{theorem}
  Let $M_0$ and $M_1$ be the two message systems generated by $S$ in
  $\qotepr$. Then, there exists $c \in \bits$ such that $M_{1-c}$ is
  close to uniformly random and independent of the receiver's view. Namely, let
  $\rho_{M_{1-c}M_c B}$ be the final state of $R$,
  there exists a state $\sigma_{M_c B}$ such that
\[
\td(\rho_{M_{1-c}M_c B}, \frac{1}{2^{\ell}}\mathbb{I}\otimes \sigma_{M_{c}B}) \leq  negl(\lambda) \ . \]
\end{theorem}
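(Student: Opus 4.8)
The plan is to prove the statement by combining the quantum sampling framework of Bouman and Fehr~\cite{C:BouFeh10} with privacy amplification against quantum side information. First I would reinterpret the checking phase of $\qotepr$ as a quantum sampling game. The sender $S$ holds $n$ qubits (halves of the EPR pairs), the receiver has committed—via the ideal $\fsocom$, whose binding is perfect and which reveals $(\tilde\theta^B,\tilde x^B)$ in full to the simulator—to a basis string $\tilde\theta^B$ and an outcome string $\tilde x^B$, and $S$ tests a uniformly random subset $T$ of size $\alpha n$ for the event ``$\tilde\theta^A_i=\tilde\theta^B_i$ but $\tilde x^A_i\ne\tilde x^B_i$''. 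Conditioning on the receiver not being caught, the Bouman--Fehr analysis gives that the post-test joint state of $(S,R)$ restricted to the unchecked coordinates $\bar T$ is negligibly close (in trace distance) to a convex combination of ``ideal'' states: states in which, on the coordinates $i\in\bar T$ with $\hat\theta^A_i=\hat\theta^B_i$, measuring $S$'s register in basis $\hat\theta^A$ returns exactly $\tilde x^B_i$, whereas on the coordinates with $\hat\theta^A_i\ne\hat\theta^B_i$ the state is maximally entangled in the conjugate basis, so that $S$'s measurement outcome there is uniformly random and independent of $R$, up to a small smooth-min-entropy deficiency folded into the error term.

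Next, fix any partition $(I_0,I_1)$ of $\bar T$ produced by the malicious receiver—which may depend adaptively on $\hat\theta^A$ revealed in Step~3—and let $c$ be the index with $wt(\res{\hat\theta^A}{I_c}\oplus\res{\hat\theta^B}{I_c})\le wt(\res{\hat\theta^A}{I_{1-c}}\oplus\res{\hat\theta^B}{I_{1-c}})$, exactly as the simulator defines it. Since $\hat\theta^A$ is uniform and independent of $\hat\theta^B$, the number of basis-mismatch positions in $\bar T$ is $|\bar T|/2$ up to $O(\sqrt n)$ fluctuations with overwhelming probability; as $(I_0,I_1)$ partitions $\bar T$, the side $I_{1-c}$ contains at least half of them, hence $\Omega(n)$ mismatch positions, and since $n\gg\ell$ this is at least $\ell+2\log(1/\veps)+\lambda$ for $\veps=2^{-\lambda}$. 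On those positions the ideal state delivers to $S$ a uniformly random outcome independent of the receiver; therefore, writing $B$ for the receiver's entire quantum register after Step~3, the conditional smooth-min-entropy satisfies $H_{\min}^{\veps'}\!\big(\res{\hat X^A}{I_{1-c}}\bigm| B\big)\ge \ell+2\log(1/\veps)$ for a negligible smoothing parameter $\veps'$ coming from the sampling error.

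Finally I would invoke the Leftover Hash Lemma against quantum adversaries (privacy amplification): because $\mathbf F$ is a universal hash family onto $\{0,1\}^\ell$ and $f\in_R\mathbf F$ is independent of everything else, the pair $\big(f, f(\res{\hat X^A}{I_{1-c}})\big)$ is $\veps$-close in trace distance to $\big(f, U_\ell\big)$ jointly with $B$, and hence also jointly with $m_c$ and $s_{1-c}$. Consequently $m_{1-c}=s_{1-c}\oplus f(\res{\hat x^A}{I_{1-c}})$ is $\veps$-close to uniform and independent of $\rho_{M_cB}$, so taking $\sigma_{M_cB}$ to be that marginal state yields $\td\!\big(\rho_{M_{1-c}M_cB},\ \tfrac{1}{2^\ell}\mathbb I\otimes\sigma_{M_cB}\big)\le\negl(\lambda)$. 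The main obstacle is the first step: making the reduction to the Bouman--Fehr sampling game fully rigorous—in particular handling that the receiver commits before seeing the test set $T$ yet chooses $(I_0,I_1)$ adaptively after learning $\hat\theta^A$, propagating the smooth-min-entropy error terms correctly through the conditioning, and verifying that the basis-mismatch count inside $I_{1-c}$ is large enough except with negligible probability over the random test set.
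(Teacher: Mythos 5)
Your proposal follows essentially the same route as the paper's proof: reinterpret the checking phase as the Bouman--Fehr quantum sampling game, define $c$ by the Hamming-weight comparison, use a Hoeffding-type bound to show $I_{1-c}$ contains $\Omega(n)$ basis-mismatch positions, deduce a (smooth) min-entropy lower bound for $\res{\hat X^A}{I_{1-c}}$ conditioned on the receiver's system, and finish with privacy amplification. The only cosmetic difference is your description of the Bouman--Fehr ideal state: the paper works with superpositions of basis vectors of low relative Hamming weight with respect to $\hat{\theta}^B$ and obtains the min-entropy bound from \cite[Corollary 1]{C:BouFeh10}, rather than asserting coordinatewise exact agreement or maximal entanglement, but this does not change the argument.
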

\label{thm:qot-bob}
\begin{proof}
This theorem follows from~\cite[Theorem 4]{C:BouFeh10} by observing
  that bit $c$ determined by the simulator $S$ is the correct one. For completeness, we present here some details.

  Consider the joint state $\ket{\phi_{AE}}$ right before the checking
  phase of $\qotepr$, consisting of the $n$ EPR pairs plus potentially
  some additional quantum system on the receiver's side. This checking
  corresponds exactly to a sampling game on quantum states analyzed
  in~\cite{C:BouFeh10}. It is shown that for any constant $\delta > 0$, the
  real state, after $S$ has measured the selected qubits, is
  $\veps_{samp}$-close to an ideal state with
  $\veps_{samp} \le \sqrt 6 \exp(- \alpha n \delta^2/{100})$. The ideal state
  is a superposition over basis vectors $\ket{x}_{\hat{\theta}^B}$ of
  relative hamming weight at most $\delta$ with respect to the basis
  $\hat{\theta}^B$.  This remains the case after $R$ announces the
  sets $I_0$ and $I_1$, and we can also view the qubits $A_{I_c}$ as
  part of the adversarial system $E$, where $c \in \bits$ is such that
  $wt(\res{\hat \theta^A}{I_{c}} \oplus \res{\hat \theta^B}{I_{c}})
  \leq wt(\res{\hat \theta^A}{I_{1-c}} \oplus \res{\hat
    \theta^B}{I_{1-c}})$. Note that (by Hoeffding's inequality) except
  with probability $\veps_{hof} \le \exp(-2\eta^2(1-\alpha)n)$, the number
  of positions $i \in I_{1-c}$ with
  $\hat \theta^A_i \neq \hat \theta^B_i$ is at least
  $( \frac1 2 - \eta)(1-\alpha)n$.

  It follows from Fact~\ref{fact:SmallSuperpos} below that (for the
  ideal state)
\[
\HH_{min}(\hat{X}_{1-c} |A_{I_c} E) \geq 
(\frac12 - \delta)(1-\alpha)n - h(\delta)n \, ,
\]
except with negligible probability, where
$\hat{X}_{1-c} = \res{\hat{X}^A}{I_{1-c}}$ and the left hand side
should be understood as conditioned on all the common classical
information, $\hat \theta^A, \hat \theta^B$ etc. By basic properties
of the min-entropy, the same bound also applies to
$\HH_{min}(\hat{X}_{1-c} |\hat{X}_{c} E)$. It then follows from
privacy amplification~\cite{TCC:RenKon05} that if
$( \frac1 2 - \eta) (1-\alpha)n - h(\delta)n - \ell \ge \gamma n$, then the
extracted string $S_{1-c}$ is $\veps_{pa} \le 2^{-\frac 1 2 \gamma n}$ to uniform given $X_c$ (and
hence also given $S_c$), the quantum system $E$, and all common
classical information. Collecting all the ``errors'' encountered on
the way, the distance to uniform becomes 
\[ \veps: = \veps_{samp}+\veps_{hof}+\veps_{pa} \le \sqrt 6 \exp(-
\alpha n \delta^2/100) + \exp(-2\eta^2(1-\alpha)n) + 2^{-\frac 1 2
  \gamma n}=  \negl(\secpar) \, , \]
by setting $\alpha = , \eta = , \gamma= , \delta = $.  
\end{proof}

\begin{fact}[{\cite[Corollary 1]{C:BouFeh10}}]\label{fact:SmallSuperpos}
Let $\ket{\phi_{AE}}$ be a superposition on states of the form $\ket{\bd{x}}_{\theta'}\ket{\phi_E}$ with $|w(\bd{x})| \leq \delta$ and $\delta < 1/2$,
and let the random variable $\bd{X}$ be the outcome of measuring $A$ in basis $\theta \in \{+,\times\}^n$. Then
$$
\HH_{min}(\bd{X} |E) \geq wt(\theta \oplus \theta') - h(\delta)n.
$$
where $h(p): = -p\log p - (1-p)\log(1-p)$ is the Shannon binary entropy.
\label{fact:minentropy}
\end{fact}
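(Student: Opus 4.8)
This is the Bouman--Fehr min-entropy bound \cite[Corollary~1]{C:BouFeh10}, and the plan is to recover it by combining one trivial single-term calculation with the general principle that a $K$-term superposition costs at most $\log K$ bits of conditional min-entropy.

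First I would \emph{rotate to the computational basis}: let $U=\bigotimes_i V_i$ act on $A$ with $V_i=\Had$ when $\theta_i=\times$ and $V_i=\Id$ otherwise, so that measuring $A$ in basis $\theta$ is the same as applying $U_A$ and then measuring in the computational basis. A one-line check gives $U\ket{\bd{x}}_{\theta'}=\ket{\bd{x}}_{\mu}$, where $\mu$ is the indicator string of the disagreement set $D=\{i:\theta_i\neq\theta'_i\}$, so $\card{D}=wt(\theta\oplus\theta')$. Hence $U_A\ket{\phi_{AE}}$ is a superposition of product states $\ket{\bd{x}}_{\mu}\ket{\phi_E^{\bd{x}}}$ ranging over $\mathcal{B}=\{\bd{x}:|w(\bd{x})|\le\delta\}$, and it suffices to treat a computational-basis measurement. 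Write this state as $\sum_{\bd{x}\in\mathcal{B}}\alpha_{\bd{x}}\ket{\bd{x}}_{\mu}\ket{\hat{\psi}_E^{\bd{x}}}$ with $\sum_{\bd{x}}|\alpha_{\bd{x}}|^2=1$ and unit vectors $\ket{\hat{\psi}_E^{\bd{x}}}$; the number of terms is $K:=\card{\mathcal{B}}=\sum_{k\le\delta n}\binom{n}{k}\le 2^{h(\delta)n}$ by the standard binomial tail estimate (using $\delta<1/2$).

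Second, the \emph{single-term calculation}: for a fixed product state $\ket{\bd{x}}_{\mu}\ket{\hat{\psi}_E^{\bd{x}}}$, a computational-basis measurement of $A$ returns the coordinates outside $D$ equal to those of $\bd{x}$ deterministically and the coordinates in $D$ uniform and independent of $E$ (each $\ket{x_i}_{\times}$ read in the $+$ basis is a fair coin), while $E$ remains in the fixed state $\ket{\hat{\psi}_E^{\bd{x}}}$; hence the optimal probability of guessing the outcome given $E$ equals $2^{-\card{D}}$, i.e. this term has $\HH_{min}(\bd{X}|E)=\card{D}$. Third, the \emph{superposition step}: for each outcome $\bd{y}$ of the $A$-measurement, apply the operator Cauchy--Schwarz inequality $\big(\sum_j\ket{a_j}\big)\big(\sum_j\bra{a_j}\big)\le K\sum_j\ket{a_j}\bra{a_j}$ to the vectors $\ket{a_{\bd{x}}}$ obtained by projecting the $A$-register of each term onto $\ket{\bd{y}}$; summing over $\bd{y}$ yields the operator bound $\rho_{\bd{X}E}\le K\sum_{\bd{x}}|\alpha_{\bd{x}}|^2\,\rho^{(\bd{x})}_{\bd{X}E}$ on the post-measurement classical--quantum states, with $\rho^{(\bd{x})}_{\bd{X}E}$ the single-term state above. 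Since guessing probability is monotone under such classical--quantum operator inequalities and subadditive over nonnegative combinations, $p_{\mathrm{guess}}(\bd{X}|E)\le K\sum_{\bd{x}}|\alpha_{\bd{x}}|^2\,2^{-\card{D}}=K\,2^{-\card{D}}$, and therefore $\HH_{min}(\bd{X}|E)\ge\card{D}-\log K\ge wt(\theta\oplus\theta')-h(\delta)n$, as claimed.

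The only delicate point — and the part to handle with care — is the superposition step in the presence of \emph{quantum} side information $E$: the ``$K$-term superposition loses at most $\log K$'' principle must be phrased at the level of guessing probabilities / classical--quantum operator inequalities, rather than naively bounding $\max_{\bd{y}}|\langle\bd{y}|\phi\rangle|^2$, so that the adversary's optimal measurement on $E$ is properly accounted for. The basis rotation, the single-term computation, and the binomial estimate are all routine.
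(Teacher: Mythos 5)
Your proof is correct, but note that the paper itself does not prove this statement at all: it is imported verbatim as \cite[Corollary 1]{C:BouFeh10}, so there is no in-paper argument to compare against. Your derivation is essentially the standard one underlying Bouman--Fehr's own proof: reduce to a computational-basis measurement by a local Hadamard rotation (correctly identifying that the rotated basis vectors $\ket{\bd{x}}_{\mu}$ are conjugate exactly on the disagreement set $D$), compute the guessing probability $2^{-\card{D}}$ for a single product term, and then pay $\log K \le h(\delta)n$ for the superposition via the operator Cauchy--Schwarz inequality applied blockwise to the post-measurement cq-state. The two places where such arguments usually go wrong are handled properly: you phrase the superposition loss at the level of the cq operator inequality $\rho_{\bd{X}E}\le K\sum_{\bd{x}}\abs{\alpha_{\bd{x}}}^2\rho^{(\bd{x})}_{\bd{X}E}$ (so the adversary's measurement on $E$ is accounted for, rather than bounding amplitudes of the $A$-marginal alone), and the monotonicity/subadditivity of $p_{\mathrm{guess}}$ that you invoke does follow because all states involved are block-diagonal in the classical register. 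The only point worth making explicit is that you identify $\HH_{min}(\bd{X}|E)$ with $-\log p_{\mathrm{guess}}(\bd{X}|E)$, which is the K\"onig--Renner--Schaffner operational characterization; it agrees with Renner's operator definition used in \cite{C:BouFeh10} for cq-states, so this is a definitional remark rather than a gap.
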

\begin{fact} [Privacy Amplification {\cite[Theorem 1]{TCC:RenKon05}}] Let $\rho_{XE}$ be a hybrid state with classical $X$ with the form $\rho_{XE} = \sum_{x\in \mathcal{X}} P_X \bk{x}\otimes \rho_E^x$. Let $\mathbf{F}$ be a family of universal hash functions with range $\bits^\ell$, and $F$ be chosen randomly from $\mathbf{F}$. Then $K = F(X)$ satisfies
\[ D(\rho_{KFE}, \frac{1}{2^\ell}\mathbb{I}_{K}\otimes \rho_{FE}) \leq
  \frac{1}{2}\cdot 2^{-\frac{1}{2}(\mathbf{H}_{min}(X|E) - \ell)} \,
  . \]
\label{fact:pa}
\end{fact}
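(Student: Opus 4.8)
This statement is the Leftover Hash Lemma against quantum side information, so the plan is to follow the Hilbert--Schmidt / collision-entropy argument of Renner and K\"onig. First recall that $\HH_{min}(X|E)_\rho$ equals $-\log\lambda^\star$, where $\lambda^\star$ is the least $\lambda$ such that $\rho_{XE}\preceq\lambda\,(\identity_X\otimes\sigma_E)$ for some density operator $\sigma_E$; I would fix such an optimal $\sigma_E$, so that $\rho_{XE}\preceq 2^{-\HH_{min}(X|E)}\,\identity_X\otimes\sigma_E$, and (restricting to supports) take $\sigma_E$ invertible. Since $F$ is a classical register carrying a uniformly random choice of hash function, independent of $X$, we have $\rho_{FE}=\rho_F\otimes\rho_E$ and $\rho_{KFE}=\sum_f\tfrac{1}{|\mathbf F|}\bk{f}\otimes\rho_{KE}^f$ with $\rho_{KE}^f=\sum_k\bk{k}\otimes\sum_{x:\,f(x)=k}P_X(x)\rho_E^x$. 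Orthogonality of the $\bk f$ then reduces the claim to bounding $D(\rho_{KFE},\tfrac{1}{2^\ell}\identity_K\otimes\rho_{FE})=\tfrac12\,\E_F\bignormo{\rho_{KE}^F-\tfrac{1}{2^\ell}\identity_K\otimes\rho_E}$.

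The crucial step, which keeps the bound free of $\dim E$, is a $\sigma_E$-weighted Cauchy--Schwarz estimate. For a Hermitian $M$ on $K\otimes E$ put $\widetilde{M}:=(\identity_K\otimes\sigma_E^{-1/4})\,M\,(\identity_K\otimes\sigma_E^{-1/4})$; then $\normo{M}=\sup_{\normi{U}\le1}\abs{\tr{UM}}$ and $\tr{UM}=\tr{W\widetilde{M}}$ with $W:=(\identity_K\otimes\sigma_E^{1/4})\,U\,(\identity_K\otimes\sigma_E^{1/4})$, so Cauchy--Schwarz in the Hilbert--Schmidt inner product gives $\abs{\tr{UM}}\le\normt{W}\,\normt{\widetilde{M}}$, and a second Cauchy--Schwarz with $U^\dagger U\preceq\identity$ bounds $\normt{W}^2=\tr{(\identity_K\otimes\sigma_E^{1/2})U^\dagger(\identity_K\otimes\sigma_E^{1/2})U}\le\tr{\identity_K\otimes\sigma_E}=2^\ell$; hence $\normo{M}\le\sqrt{2^\ell}\,\normt{\widetilde{M}}$. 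Applying this to $M=\rho_{KE}^F-\tfrac1{2^\ell}\identity_K\otimes\rho_E$ (write $\widetilde{M}_f$ for the resulting operator when $F=f$) and then Jensen's inequality gives
\[
D\Bigl(\rho_{KFE},\tfrac1{2^\ell}\identity_K\otimes\rho_{FE}\Bigr)\le\tfrac12\sqrt{2^\ell}\,\sqrt{\E_F\bignormt{\widetilde{M}_F}^2}\,.
\]
To evaluate $\E_F\normt{\widetilde{M}_F}^2$, set $\omega_E^x:=\sigma_E^{-1/4}\rho_E^x\sigma_E^{-1/4}$ and $\omega_E:=\sum_xP_X(x)\omega_E^x$; expanding the $k$-block-diagonal square and cancelling the cross terms as in the classical leftover-hash computation gives $\normt{\widetilde{M}_f}^2=\sum_{x,x':\,f(x)=f(x')}P_X(x)P_X(x')\tr{\omega_E^x\omega_E^{x'}}-\tfrac1{2^\ell}\tr{\omega_E^2}$. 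Now $\tr{\omega_E^x\omega_E^{x'}}\ge0$ since both factors are positive semidefinite, while $2$-universality gives $\Pr_F[F(x)=F(x')]\le2^{-\ell}$ for $x\ne x'$; bounding the off-diagonal part by $2^{-\ell}\sum_{x,x'}P_X(x)P_X(x')\tr{\omega_E^x\omega_E^{x'}}=2^{-\ell}\tr{\omega_E^2}$ and cancelling leaves $\E_F\normt{\widetilde{M}_F}^2\le\sum_xP_X(x)^2\tr{(\omega_E^x)^2}=\tr{\bigl((\identity_X\otimes\sigma_E^{-1/4})\rho_{XE}(\identity_X\otimes\sigma_E^{-1/4})\bigr)^2}$, reading $\rho_{XE}$ as the given classical--quantum state.

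Finally I would bound this collision-type quantity by $2^{-\HH_{min}(X|E)}$. Setting $A:=(\identity_X\otimes\sigma_E^{-1/4})\rho_{XE}(\identity_X\otimes\sigma_E^{-1/4})$ and conjugating $\rho_{XE}\preceq2^{-\HH_{min}(X|E)}\identity_X\otimes\sigma_E$ by $\identity_X\otimes\sigma_E^{-1/4}$ gives $0\preceq A\preceq 2^{-\HH_{min}(X|E)}\,\identity_X\otimes\sigma_E^{1/2}=:B$, whence $\tr{A^2}\le\tr{AB}=2^{-\HH_{min}(X|E)}\tr{\rho_{XE}}=2^{-\HH_{min}(X|E)}$, using that $\sigma_E^{-1/4}\sigma_E^{1/2}\sigma_E^{-1/4}$ is the projector onto $\supp\sigma_E\supseteq\supp\rho_E$. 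Substituting into the display yields $D(\rho_{KFE},\tfrac1{2^\ell}\identity_K\otimes\rho_{FE})\le\tfrac12\,2^{-\frac12(\HH_{min}(X|E)-\ell)}$, exactly as claimed. I expect the main obstacle to be getting the $\sigma_E$-weighting in the Cauchy--Schwarz step exactly right, since this is precisely what replaces a fatal $\sqrt{\dim E}$ factor by the benign $\sqrt{2^\ell}$; the companion inequality $\tr{A^2}\le\tr{AB}$ for $0\preceq A\preceq B$ is what upgrades the conditional collision entropy to the conditional min-entropy, and everything else is the standard leftover-hash bookkeeping.
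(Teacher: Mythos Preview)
Your argument is correct and is essentially the Renner--K\"onig proof: the $\sigma_E$-weighted Cauchy--Schwarz step, the collision-entropy expansion using $2$-universality, and the upgrade $\tr{A^2}\le\tr{AB}$ from collision to min-entropy are all carried out properly. Note, however, that the paper does not give its own proof of this statement at all; it is recorded as a \emph{Fact} and simply cited from \cite{TCC:RenKon05}. So there is nothing to compare against in the paper itself---you have supplied a full proof where the authors were content to quote the literature.
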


\section{Proof of \Cref{thm:parallel}}
\label{ap:parallel}
\begin{proof}
  Let $\pi$ be the protocol realizing $\func$ between two
  parties. %
  In each round, the parties $A$ and $B$ either compute a message on
  the network register or make a call to the functionality $\funcg$.
  The parallel repetition of $\pi$, assuming it is repeated $\ell$
  times, is denoted by $\pi^{||}$ and works as
  expected. %
  In each round, $A^{\|}$ and $B^{\|}$ either run $A$ and $B$ to
  compute the messages in each execution of $\pi$, or make a call to
  the ideal functionality $\funcg^{||}$.  %

  The simulator
  $\simulator^{||} = (\simulator_A^{||},\simulator_B^{||})$ for
  $\pi^{||}$ works as follows. We will describe $\simulator_A^{||}$
  for an adversary $\adv$ corrupting $A^{\|}$ for simplicity;
  $\simulator_B^{||}$ works in a completely analogous way.
  In each round $\simulator_A^{||}$ receives a tuple of registers from
  $\adv$ and applies the simulator $\simulator$ for $\pi$ on each of
  them. Also the calls to $\funcg^{\|}$ will be simulated by
  $\simulator$ in each execution. When $\simulator$ sends a message to
  $\func$, $\simulator_A^{\|}$ collects all of them, and forward to
  $\func^{\|}$. Likewise, any message $\func^{\|}$ returns,
  $\simulator_A^{\|}$ will split them and forward to each
  $\simulator$. Because of straight-line simulation, the actions of
  $\simulator_A^{\|}$ are well-defined   and moreover $\simulator^{\|}$ is efficient if $\simulator$ is efficient (with an unavoidable multiplicative factor of $\ell$).

  We show that
  $\mac_{\pi^{\|},\adv} \qc
  \mac_{\func^{\|},\simulator_A^{\|}}$\footnote{If $\pi$ \sqsa{}
    emulates $\func$, the indistinguishability here becomes
    statistical too.}. This is done by a simple hybrid argument. Let
  $\mac_0 = \mac_{\pi^{\|},\adv}$, and for $i = 1,\ldots, \ell$, let
  $\mac_i$ be as $\mac_{i-1}$ except that the $i$th execution of $\pi$
  with $\adv$ is simulated by $\simulator$. We can see that
  $\mac_{i-1} \qc \mac_i$ holds for all $i \in [\ell]$. This is
  because one can think of $\adv$ and all executions other than the
  $i$th as another adversary $\adv'$ attacking $\pi$. Note that the
  first $i-1$ executions involve simulator $\simulator$, and it is
  crucial that $\simulator$ is straight-line to ensure $\adv'$ is
  well-defined.
  Then $\mac_{i-1} \qc \mac_i$ follows by the security
  of $\pi$. Note that $\mac_{\ell}$ is exactly
  $\mac_{\func^{\|},\simulator_A^{\|}}$, and hence we conclude that
  $\mac_{\pi^{\|},\adv} \qc \mac_{\func^{\|},\simulator_A^{\|}}$.
\end{proof}

\section{Proof of \Cref{lem:socom-sender}}
\label{ap:proof-sender-socom}
In the original protocol, after the first message from the $C$ to $R$, the joint state is

\[
\frac{1}{|\mathcal{R}|}\sum_{\rho}
\sum_{\hat{c}_1,...,\hat{c}_k} p_{\hat{c}_1,...,\hat{c}_k} \kb{\hat{c}_1,...,\hat{c}_k} \otimes \rho_{\hat{c}_1,...,\hat{c}_k},
\]
for some quantum states $\rho_{\hat{c}_1,...,\hat{c}_k}$. Notice that $\hat{c}_i$'s depend on $\rho$, but we leave such a dependence implicit.

Let $\mathcal{P}$ be the quantum channel corresponding to the honest behaviour of the receiver and the malicious behaviour of $\adv$, and $\mathcal{S}$ is the output of the simulator after interacting with $\adv$. Our goal is to show that
\begin{align}\label{eq:difference-simulation-commit-overall}
\frac{1}{|\mathcal{R}|}\sum_{\rho}
\sum_{\hat{c}_1,...,\hat{c}_k} p_{\hat{c}_1,...,\hat{c}_k}
\kb{\hat{c}_1,...,\hat{c}_k} \otimes  \left(\mathcal{P}(\rho_{\hat{c}_1,...,\hat{c}_k}) - \mathcal{S'}(\rho_{\hat{c}_1,...,\hat{c}_k})\right)  \leq \negl(\lambda),
\end{align}

Let $\mathcal{B}$ be the set of $r$ such that there exists some $\tilde{c} = \com_{\rho}(m,r) = \com_{\rho}(m',r')$, for $m \ne m'$. We have from Naor's commitment scheme that $|\mathcal{B}| \leq \negl(\lambda) |\mathcal{R}|$, so we now focus on proving
\begin{align}\label{eq:difference-simulation-commit-good-r}
\frac{1}{|\mathcal{R}|}\sum_{\rho \not\in \mathcal{B}}
\sum_{\hat{c}_1,...,\hat{c}_k} p_{\hat{c}_1,...,\hat{c}_k}
\kb{\hat{c}_1,...,\hat{c}_k} \otimes  \left(\mathcal{P}(\rho_{\hat{c}_1,...,\hat{c}_k}) - \mathcal{S'}(\rho_{\hat{c}_1,...,\hat{c}_k})\right)  \leq \negl(\lambda),
\end{align}
which will imply \Cref{eq:difference-simulation-commit-overall}.

Since we fix $\rho$ and the first register is held by the Receiver, we can assume that the latter is measured and we proceed with the analysis for each of the values individually: we show that for each $r \not\in \mathcal{B}$ and $\hat{c}_1,...,\hat{c}_k$, we have that
\begin{align}\label{eq:difference-simulation-commit}
\kb{\hat{c}_1,...,\hat{c}_k} \otimes  \left(\mathcal{P}(\rho_{\hat{c}_1,...,\hat{c}_k}) - \mathcal{S'}(\rho_{\hat{c}_1,...,\hat{c}_k})\right) \leq  \negl(\lambda),
\end{align}
and by convexity we have \Cref{eq:difference-simulation-commit-good-r}.

We split our argument in two cases.

\paragraph{Invalid commitment.} Let us assume that there exists some $\hat{c}_i$ that is not a valid commitment. In this case, by the soundness of ZK protocol, $R$ aborts with probability $1 - \eps$, for some $\eps = \negl(\lambda)$. In this case, the final state would be
\[\kb{\hat{c}_1,...,\hat{c}_k} \otimes \left((1-\eps)\kb{\bot} + \eps \gamma_{\hat{c}_1,...,\hat{c}_k}\right),\]
for some quantum state $\gamma_{\hat{c}_1,...,\hat{c}_k}$.

On the other hand, $\simulator$ always aborts since she is not able to extract all messages and the final state is also
\[\kb{\hat{c}_1,...,\hat{c}_k} \otimes \kb{\bot},\]
which implies that
\Cref{eq:difference-simulation-commit} holds in this case.

\paragraph{All commitments are valid.}
We now assume that all commitments are valid, i.e., for each $i \in [m]$, there exists exactly a single string $\tilde{m}_i$ such that $\hat{c}_i$ is a commitment of $\tilde{m}_i$. 
We consider two subcases here. First, for all $i \in I$, $\tilde{m}_i = \hat m_i$.
Notice that in this case, when interacting with $\adv$, $\simulator$ has the exact same behaviour as the honest $R$ in the original protocol. Therefore, for such $\rho$ and $\hat{c}_1,...,\hat{c}_k$ we have that
\[\mathcal{P}(\rho_{\hat{c}_1,...,\hat{c}_k}) = \mathcal{S'}(\rho_{\hat{c}_1,...,\hat{c}_k}),\]
and therefore
\Cref{eq:difference-simulation-commit} holds trivially.

In the second subcase, there exists some  $i \in I$ s.t. $\tilde{m}_i \ne \hat m_i$. By the soundness of the ZK protocol, the final state would be
\[\kb{\hat{c}_1,...,\hat{c}_k} \otimes \left((1-\eps)\kb{\bot} + \eps \gamma'_{\hat{c}_1,...,\hat{c}_k}\right),\]
for some quantum state $\gamma' _{\hat{c}_1,...,\hat{c}_k}$, whereas the final state of the simulation is
\[\kb{\hat{c}_1,...,\hat{c}_k} \otimes \kb{\bot},\]
and therefore
\Cref{eq:difference-simulation-commit} follows.

\section{Proof of \Cref{lem:socom-receiver}}
\label{ap:proof-receiver-socom}
  \medskip \noindent
  \textit{Hybrid $0$.}
We first consider a protocol $\qsocomhzero$ where $\simulator_0$ sits in between $\adv$ and honest $C$ in the protocol, just passing the information back and forth between the parties. It follows trivially that
$    M_{\Pi_{\socom},\adv} \approx_{qc} M_{\Pi_{\qsocomhzero},\simulator_0}$.

  \medskip \noindent
  \textit{Hybrid $1$.}
Now, the protocol $\simulator_1$ proceeds in $\qsocomhone$, which is the same as $\qsocomhzero$ except that in the decommitment phase, $\simulator_1$ runs the ZK simulator instead of running the ZK protocol.

By the computational zero-knowledge property of the protocol, we have that $M_{\Pi_{\qsocomhzero},\simulator_0} \approx_{qc} M_{\Pi_{\qsocomhone},\simulator_1}$.

  \medskip \noindent
  \textit{Hybrid $2$.}
Finally, we consider our final simulator $\simulator$, that interacts with $\adv$ and  $\fsocom$. Notice in this hybrid, $\adv$ receives commitments of $0$  instead of the real messages $m_i$ and then $\adv$ answers with a set $I$. By the computationally hiding property of Naor's commitment scheme, $\adv$ cannot distinguish between the commitment of $0$ and commitment of $m_i$ in the first message, so the distribution of $I$ is the same, up to negligible factors.

$\simulator$ then inputs $I$ to the ideal functionality to be able to retrieve the messages $(m_i)_{i \in I}$, and runs the ZK simulator to prove that the commitments $(c_i)_{i \in I}$ are commitments of $(m_i)_{i \in I}$ and that the other commitments are valid. The two ZK simulation, the first one with commitments to $(m_i)$ and the second one with commitments to $0$, cannot be distinguished, otherwise the ZK simulator could be used to break the computational hiding property of Naor's commitment scheme.

It follows that
$    M_{\Pi_{\qsocomhone},\simulator_1}
        \approx_{qc}
        M_{\fsocom,\simulator}$, which finishes the proof. 

\end{document}